\newtheorem{theorem}{Theorem}[section]
\newtheorem{proposition}[theorem]{Proposition}
\newtheorem{lemma}[theorem]{Lemma}
\newtheorem{remark}[theorem]{Remark}
\newcommand{\qed}{\qquad$\square$}
\renewcommand{\theequation}{\@arabic\c@section.\@arabic\c@equation}
\long\def\@makecaption#1#2{
 \vskip 10pt 
 \setbox\@tempboxa\hbox{#1. #2}
 \ifdim \wd\@tempboxa >\hsize #1. #2\par \else \hbox
to\hsize{\hfil\box\@tempboxa\hfil} 
 \fi}
\begin{document}
\begin{center}
\renewcommand{\baselinestretch}{1.3}\selectfont
\begin{Large}
\textbf{\boldmath Projective reduction of the discrete Painlev\'e system of type $(A_2+A_1)^{(1)}$}
\end{Large}\\[4mm]
\renewcommand{\baselinestretch}{1}\selectfont
\textrm{\large Kenji Kajiwara,
Nobutaka Nakazono,
and Teruhisa Tsuda}\\[2mm]
28 September. 2009 
~Revised: 14 April. 2010
\end{center}
\begin{abstract}
We consider the $q$-Painlev\'e III equation arising from
the birational representation of the affine Weyl group
of type $(A_2+A_1)^{(1)}$.
We study the reduction of the $q$-Painlev\'e III equation
to the $q$-Painlev\'e II equation from the viewpoint of
affine Weyl group symmetry.
In particular, the mechanism of apparent inconsistency
between the hypergeometric solutions to both equations
is clarified by using factorization of difference operators
and the $\tau$ functions.
\end{abstract}
\noindent\textbf{2000 Mathematics Subject Classification:} 
34M55, 39A13, 33D15, 33E17 \\
\noindent\textbf{Keywords and Phrases:} 
affine Weyl group, discrete Painlev\'e equation,
hypergeometric function \\
\section{Introduction}
The discrete Painlev\'e equations have been studied actively 
from various points of view. Together with the Painlev\'e 
equations, they are now regarded as one of the most important 
classes of equations in the theory of integrable systems 
(see, for example, \cite{GR:review}).  Originally, the discrete
Painlev\'e equations had been identified as single second-order
equations\cite{Brezin-Kazakov,Periwal-Shevitz,Fokas-Its-Kitaev,
Douglas-Shenker,RGH:dP} and then were generalized to simultaneous 
first-order equations. A typical example is the following equation 
known as a discrete Painlev\'e II equation\cite{Periwal-Shevitz,RGH:dP}:
\begin{equation}\label{sdP2:eqn}
 x_{n+1}+x_{n-1}=\frac{(an+b)x_n+c}{1-{x_n}^2},
\end{equation}
where $x_n$ is the dependent variable, $n$ is the independent variable, 
and $a$, $b$, $c$ $\in\mathbb{C}$ are parameters.
By applying the singularity
confinement criterion\cite{GRP:SC}, (\ref{sdP2:eqn}) is generalized to
\begin{equation}\label{adP2:eqn1}
 x_{n+1}+x_{n-1}=\frac{(an+b)x_n+c+(-1)^nd}{1-{x_n}^2},
\end{equation}
where $d$ is a parameter, with its integrability preserved. Introducing the 
dependent variables $X_n$ and $Y_n$ by 
\begin{equation}\label{dP2:specialization}
 X_n=x_{2n},\quad
 Y_n=x_{2n-1},
\end{equation}
then (\ref{adP2:eqn1}) can be rewritten as
\begin{equation}\label{adP2:eqn}
 Y_{n+1}+Y_{n}=\frac{(2an+b)X_n+c+d}{1-{X_n}^2},\quad 
 X_{n+1}+X_{n}=\frac{(a(2n+1)+b)Y_{n+1}+c-d}{1-{Y_{n+1}}^2}.
\end{equation}
Equation (\ref{adP2:eqn}) is known as a discrete Painlev\'e III equation since 
it admits a continuous limit to the Painlev\'e III equation\cite{GNPRS:dP3}. 
Conversely, (\ref{sdP2:eqn}) can be recovered from (\ref{adP2:eqn}) by putting 
$d=0$ and (\ref{dP2:specialization}). We call this procedure ``{\it symmetrization}'' 
of (\ref{adP2:eqn}), which comes from the terminology of the Quispel--Roberts--Thompson 
(QRT) mapping\cite{QRT1,QRT2}. After this terminology, (\ref{adP2:eqn}) is sometimes 
called the ``{\it asymmetric}'' discrete Painlev\'e II equation, and (\ref{sdP2:eqn}) 
is called the ``{\it symmetric}'' discrete Painlev\'e III equation\cite{KTGR:asymmetric}.

It looks that the symmetrization is a simple specialization of parameters at the 
level of the equation, but some strange phenomena have been reported as to their 
particular solutions expressed in terms of hypergeometric functions 
({\it hypergeometric solutions}). The hypergeometric solutions 
to (\ref{sdP2:eqn}) have been constructed as follows \cite{K:dP2,KOSGR:dP2}:
\begin{proposition}\label{prop:sdP2}
For each $N\in\mathbb{N}$, let $\tau_N^{n}$ be an $N\times N$ determinant defined by
\begin{equation}\label{sdP2:tau}
 \tau_N^n=
  \begin{vmatrix}
   H_{n}&H_{n+2}&\cdots&H_{n+2N-2}\\
   H_{n+1}&H_{n+3}&\cdots&H_{n+2N-1}\\
   \vdots&\vdots&\ddots&\vdots\\
   H_{n+N-1}&H_{n+N+1}&\cdots&H_{n+3N-3}
  \end{vmatrix},
\end{equation}
where $H_n$ is a function satisfying the three-term relation$:$
\begin{equation}\label{dP2:hyper_entries}
 H_{n+1}-zH_n+nH_{n-1}=0.
\end{equation}
Then, 
\begin{equation}
 x_n=\frac{2}{z}~\frac{\tau_{N+1}^{n+1}\tau_N^n}{\tau_{N+1}^n\tau_N^{n+1}}-1,
\end{equation} 
satisfies {\rm (\ref{sdP2:eqn})} with the parameters
\begin{equation}
 a=\frac{8}{z^2},\quad
 b=\frac{4(1+2N)}{z^2},\quad
 c=-\frac{4(1+2N)}{z^2}.
\end{equation}
\end{proposition}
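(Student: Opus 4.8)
The plan is to recast the discrete Painlev\'e II equation (\ref{sdP2:eqn}) in Hirota bilinear form and then to verify that form for the determinant (\ref{sdP2:tau}). Writing $1+x_n=(2/z)\,\tau_{N+1}^{n+1}\tau_N^{n}/(\tau_{N+1}^{n}\tau_N^{n+1})$ and $1-x_n=2-(1+x_n)$, and exploiting the parameter identities $c=-b$ and $an+b+c=an$ (so that $(an+b)x_n+c=(an+b)(x_n-1)+an$), I expect (\ref{sdP2:eqn}) to become, after clearing denominators and fixing an overall normalization for the $\tau$'s, a small closed system of bilinear difference equations for $\{\tau_N^{n}\}$: a ``lattice-$n$'' relation linking the products $\tau_{N+1}^{n+1}\tau_N^{n-1}$, $\tau_{N+1}^{n-1}\tau_N^{n+1}$ and $z\,\tau_{N+1}^{n}\tau_N^{n}$ (this is where the coefficient $z$ and the term linear in $n$ originate), together with a contiguity relation linking the levels $N$ and $N-1$, of the schematic shape $z\,\tau_{N+1}^{n}\tau_N^{n+1}-\tau_{N+1}^{n+1}\tau_N^{n}=\pm\,\tau_{N+1}^{n+2}\tau_N^{n-1}$ (precise form to be determined). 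The first task is thus to pin down exactly which bilinear equations are needed and to check by direct algebra that they force (\ref{sdP2:eqn}) with the stated $a,b,c$.

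The second and principal task is to prove those bilinear identities for $\tau_N^{n}$. Observe that $\tau_N^{n}$ is the minor of the semi-infinite Hankel matrix $\big(H_{n+i+j-2}\big)_{i,j\ge1}$ with rows $1,\dots,N$ and columns $1,3,\dots,2N-1$; equivalently it is a determinant of solutions of the Hermite-type recurrence (\ref{dP2:hyper_entries}) at consecutive indices. The three-term relation $H_{m+1}=zH_m-mH_{m-1}$ is the engine: applied to the highest-index column of $\tau_N^{n}$ it lets one write $\tau_N^{n+k}$ for several $k$, together with the level-$(N\pm1)$ determinants, as minors of one common array; the desired quadratic relations among these minors are then instances of the Pl\"ucker relations, which one realizes concretely by applying the Jacobi (Desnanot--Jacobi / Dodgson) identity to a column-reduced copy of $\tau_{N+1}^{n-1}$.

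I expect the genuine difficulty to lie in the bookkeeping of scalar factors during column reduction: the recurrence coefficient is the running index $m$, not a constant, so every reduction deposits an index-dependent scalar, and one must track how these accumulate through an $N\times N$ (and $(N+1)\times(N+1)$) determinant and through the various shifts in $n$. It is the telescoping of precisely these contributions that produces the factor $1+2N$ in $b$ and $c$ and the coefficient $8/z^{2}$ in $a$, so an off-by-one here yields the wrong equation. A related point is that the overall rescaling $\tau_N^{n}\mapsto\gamma\,\tau_N^{n}$ is not free --- (\ref{sdP2:eqn}) is a fixed equation, not a bilinear one --- so the gauge must be chosen and verified so that the bilinear equations carry exactly the right constants.

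As an independent check on sign and factor errors, I would also use that a Hankel-type determinant of a sequence obeying a three-term recurrence automatically satisfies a Toda-type equation in its size $N$; combining this Toda relation with the $n$-contiguity relations gives a second derivation of the same bilinear system.
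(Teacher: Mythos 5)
The paper does not prove Proposition \ref{prop:sdP2}: it is quoted as a known result with references to \cite{K:dP2,KOSGR:dP2}, so there is no in-paper argument to compare against. That said, the strategy you outline --- substitute the $\tau$-quotient for $x_n$, reduce (\ref{sdP2:eqn}) to a closed system of bilinear difference equations in $(n,N)$, and then prove those bilinear identities as Pl\"ucker/Jacobi relations after using the three-term relation (\ref{dP2:hyper_entries}) to column-reduce shifted copies of the Casorati determinant --- is exactly the method of the cited sources, and it mirrors the logical structure the paper itself uses for the $q$-analogues (Lemma \ref{lem:qp3_bl} $\Rightarrow$ Proposition \ref{prop:qp3_hyper}, Lemma \ref{lem:qp2_bl} $\Rightarrow$ Proposition \ref{prop:qp2_hyper}). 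So the approach is sound and is the intended one.

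Two caveats. First, as written this is a plan rather than a proof: both places where the substance lives are deferred (``precise form to be determined'' for the bilinear system, and the Pl\"ucker verification is only sketched), so nothing is yet established. Second, one structural point you relegate to an ``independent check'' is actually a required member of the closed system: the Toda-type equation in $N$ (the analogue of (\ref{dToda2:bl}), namely $\tau^{n}_{N+1}\tau^{n+1}_{N-1}-\tau^{n}_{N}\tau^{n+1}_{N}+\tau^{n+2}_{N}\tau^{n-1}_{N}=0$ up to gauge) is precisely what fixes the determinant structure, as the paper stresses in Remarks \ref{remark:qp3_hyper} and \ref{remark:qp2_hyper}; without it the $n$-contiguity relations alone do not close. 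Relatedly, be careful that the determinant (\ref{sdP2:tau}) has column step $2$ but row step $1$; a naive Desnanot--Jacobi identity on the Hankel array $(H_{n+i+j-2})$ produces minors with unit column step, so the bordered determinant and the deleted rows/columns must be chosen so that the resulting minors are again of the asymmetric form (\ref{sdP2:tau}). This is exactly the ``asymmetric shift'' feature the paper singles out in item (ii) of the Introduction, and it is the most likely source of the off-by-one errors you anticipate in the coefficients $8/z^2$ and $1+2N$.
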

On the other hand, since (\ref{adP2:eqn}) appears as the B\"acklund transformation 
of the Painlev\'e V equation\cite{Ohta:RIMS_dP,ROSG:dP2_qP3}, its hypergeometric 
solutions are essentially the same as those to the Painlev\'e V 
equation\cite{Masuda:p5,Okamoto:p5}. The explicit form of the
hypergeometric solutions to (\ref{adP2:eqn}) are given as follows:
\begin{proposition}\label{prop:adP2}
For each $N\in\mathbb{N}$, let $\tau_N^{n,m}$ be an $N\times N$ determinant
defined by
\begin{equation}\label{adP2:tau}
 \tau_N^{n,m}=
  \begin{vmatrix}
   K_{n}^{m}&K_{n+1}^{m}&\cdots&K_{n+N-1}^{m}\\
   K_{n+1}^{m}&K_{n+2}^{m}&\cdots&K_{n+N}^{m}\\
   \vdots&\vdots&\ddots&\vdots\\
   K_{n+N-1}^{m}&K_{n+N}^{m}&\cdots&K_{n+2N-2}^{m}
  \end{vmatrix},
\end{equation}
where $K_n^{m}$ is a function satisfying 
\begin{equation}\label{adP2:hyper_entries}
 K_{n+1}^m-K_{n}^{m}-tK_{n+1}^{m+1}=0,\quad
 nK_{n+1}^m-(n+t)K_{n}^m-(n-m)tK_{n}^{m+1}=0.
\end{equation}
Then,
\begin{equation}
 X_n=2(n+2N-1)\frac{\tau_{N+1}^{n,m}\tau_{N}^{n,m}}
       {\tau_{N+1}^{n-1,m-1}\tau_{N}^{n+1,m+1}}-1,\quad
 Y_n=\frac{2}{t}\frac{\tau_{N+1}^{n-1,m-1}\tau_N^{n,m+1}}
       {\tau_{N+1}^{n-1,m}\tau_N^{n,m}}-1,
\end{equation}
satisfy {\rm (\ref{adP2:eqn})} with the parameters
\begin{equation}
 a=-\frac{4}{t},\quad
 b=\frac{-4(-m+2N-1)}{t},\quad
 c=\frac{2(1+2N)}{t},\quad
 d=\frac{2(2m+2N-3)}{t}.
\end{equation}
\end{proposition}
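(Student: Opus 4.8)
The plan is to verify directly that the proposed determinant formulas for $X_n$ and $Y_n$ satisfy the coupled system (1.4), exploiting the bilinear (Hirota-type) structure behind it. First I would rewrite the two scalar equations in (1.4) in terms of the $\tau$ functions. Substituting the expressions for $X_n,Y_n$ and clearing denominators, each equation in (1.4) should become a polynomial identity among shifts $\tau_N^{n+i,m+j}$, $\tau_{N\pm1}^{n+i,m+j}$ of the determinants; experience with these systems suggests the outcome is a collection of Hirota bilinear equations of the form $\tau_{N+1}^{n,m}\tau_N^{n',m'} - \tau_{N+1}^{n'',m''}\tau_N^{n''',m'''} + (\text{coefficient})\,\tau_{N+1}^{\cdots}\tau_N^{\cdots}=0$, with coefficients linear in $n$, $m$, $t$. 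So step one is a bookkeeping reduction: express the claim as a finite list of candidate bilinear relations, keeping careful track of the parameter dictionary $a=-4/t$, $b=-4(-m+2N-1)/t$, $c=2(1+2N)/t$, $d=2(2m+2N-3)/t$.

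Second, I would establish those bilinear relations using the contiguity relations (1.11) for the entries $K_n^m$. The determinant $\tau_N^{n,m}$ in (1.10) is a Hankel (Turanian) determinant in $n$ with parameter $m$, so the three-term relations (1.11) let one express column/row shifts of the matrix as linear combinations of neighbouring columns/rows. The standard route is: (i) use the first relation in (1.11), $K_{n+1}^m - K_n^m - tK_{n+1}^{m+1}=0$, to relate $\tau_N^{n,m}$ and $\tau_N^{n,m+1}$ by elementary column operations; (ii) use the second relation, $nK_{n+1}^m-(n+t)K_n^m-(n-m)tK_n^{m+1}=0$, similarly; and (iii) combine these with the Jacobi identity (Desnanot–Jacobi / Plücker relation) for the Hankel determinants to produce the quadratic (bilinear) identities linking $\tau_{N+1}$ and $\tau_N$. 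Each target bilinear relation from step one should fall out of one application of a Plücker relation to a suitably bordered $(N+1)\times(N+1)$ or $(N+2)\times(N+2)$ determinant, after the linear relations (1.11) are used to identify the bordered minors with shifted $\tau$'s.

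The main obstacle I expect is step (iii): choosing the right bordered determinant and the right pair of linear relations so that the Plücker identity collapses exactly onto each needed bilinear equation with the correct linear-in-$(n,m,t)$ coefficients. Getting the shifts $(n\to n-1,\ m\to m-1)$ and $(n\to n+1,\ m\to m+1)$ appearing in the formula for $X_n$ to emerge consistently requires a careful simultaneous use of both relations in (1.11), and the index arithmetic (especially the appearance of $n+2N-1$ in the coefficient of $X_n$, which is the determinant size $N$ entering through the number of column operations) is where sign and shift errors are easiest to make. A secondary, more technical point is the base case / well-definedness: one should check that the $N=1$ case reduces to a direct computation from (1.11), and that the entries $K_n^m$ — which are the relevant $q$- or classical hypergeometric-type functions solving (1.11) — indeed exist, so that $\tau_N^{n,m}$ is not identically zero. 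Once the finite list of bilinear identities is proven, substituting back reconstitutes (1.4) with the stated parameters, completing the proof.
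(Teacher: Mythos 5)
The paper contains no proof of this proposition: it is quoted as a known result, with the hypergeometric solutions attributed to the Painlev\'e V literature (Masuda, Okamoto) via the identification of (\ref{adP2:eqn}) with a B\"acklund transformation of P$_{\rm V}$. Your outline --- recast (\ref{adP2:eqn}) as a finite list of bilinear relations among the Hankel determinants $\tau_N^{n,m}$ and prove those from the contiguity relations (\ref{adP2:hyper_entries}) by column operations plus the Jacobi/Pl\"ucker identity --- is precisely the method the paper itself uses for the parallel $q$-difference statements (Lemma \ref{lem:qp3_bl} yielding Proposition \ref{prop:qp3_hyper}, and Lemma \ref{lem:qp2_bl} yielding Proposition \ref{prop:qp2_hyper}), so it is the expected and correct route, with the caveat that as written it is a strategy rather than a completed verification of the specific bilinear identities and coefficients.
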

It is obvious that substituting $d=0$ into the hypergeometric solutions to
(\ref{adP2:eqn}) in Proposition \ref{prop:adP2} do not yield those to 
(\ref{sdP2:eqn}) in Proposition \ref{prop:sdP2}. In particular, we remark 
the following differences between the two solutions:
\begin{description}
 \item[{\rm (i)}] 
 the hypergeometric functions are different. Equation
 (\ref{dP2:hyper_entries}) can be solved by the parabolic 
 cylinder function (Weber function), while (\ref{adP2:hyper_entries}) 
 can be solved by the confluent hypergeometric function. In fact,
 the former function is expressed as a specialization of the latter, 
 but this specialization is not consistent with the symmetrization;
 \item[{\rm (ii)}] 
 structures of the determinant are different. The determinant 
 (\ref{sdP2:tau}) has asymmetry in the shift of index: the shift 
 in the vertical direction is one while that in
 the horizontal direction is two. 
 On the other hand, the determinant
 (\ref{adP2:tau}) is an ordinary Hankel determinant.
\end{description}
We note that similar phenomena have been reported also for some other discrete 
Painlev\'e equations\cite{KOS:dp3,NKT:qp2,HKW:A1A1}. Many integrable systems 
admit particular solutions expressed in terms of determinants, but such an 
asymmetric structure of the determinant solutions has been seen only in the 
hypergeometric solutions to the discrete Painlev\'e equations.
Note here that these phenomena cannot be seen for the algebraic (or rational) 
solutions. For example, it is known that substituting $d=0$ into the determinant 
expression of the rational solutions to (\ref{adP2:eqn}) yields those to 
(\ref{sdP2:eqn}); see \cite{KYO:dP2_rational,MOK:p5_RIMS,MOK:p5_Nagoya}.

The $\tau$ function is one of the most important objects in the theory of 
integrable systems and is regarded as carrying the underlying fundamental 
mathematical structures. Concerning the discrete Painlev\'e equations, 
investigation of the $\tau$ functions started \cite{KOS:dp3,KOSGR:dP2} through
the search for the explicit formulae of the hypergeometric and algebraic solutions. 
In fact, the above mysterious asymmetric structure has been one motivation of 
further study.

It is now known that theory of birational representations of affine Weyl groups 
provides us with an algebraic tool to study the Painlev\'e
systems\cite{Noumi:book,Okamoto:p24,Okamoto:p6,Okamoto:p5,Okamoto:p3}. 
Moreover, a geometric framework of the two-dimensional Painlev\'e systems has been 
presented based on certain rational surfaces\cite{KMNOY:Cremona,Sakai:Painleve}. 
Combining these results enables us to study the Painlev\'e systems effectively. 
For instance, it played a crucial role in the identification of
hypergeometric functions that appear as the particular solutions to the Painlev\'e 
systems in Sakai's classification\cite{KMNOY:elliptic,KMNOY:hyper1,KMNOY:hyper2}.

The purpose of this paper is to clarify the mechanism of the phenomena of hypergeometric 
solutions from the viewpoint of the affine Weyl group symmetry.  We shall take the 
$q$-Painlev\'e equation of type $(A_2+A_1)^{(1)}$ as an example, which is the simplest 
non-trivial discrete Painlev\'e system\cite{Sakai:Painleve}. 
The key is to formulate the symmetrization 
in terms of the birational representation of the affine
Weyl group, where the discrete Painlev\'e equation arises from the action of the translational
subgroup. In fact, the discrete time evolution of the symmetric case comes from
a ``half-step'' of a translation of 
the affine Weyl group through a restriction to a certain line in the
parameter space.
Conversely, we can derive various discrete 
Painlev\'e equations from elements of infinite order
that are not necessarily translations by taking a 
projection on a certain subspace of the parameters.
We call such a procedure to obtain a ``smaller'' 
discrete time evolution of Painlev\'e type a
{\it projective reduction}.

This paper is organized as follows: 
in Section 2, we introduce a $q$-Painlev\'e III equation and
derive a $q$-Painlev\'e II equation by applying the symmetrization. 
Then we give a brief review on
their hypergeometric solutions. In Section 3, 
we first introduce the family of B\"acklund
transformations of the $q$-Painlev\'e III equation, 
which is a birational representation of the
affine Weyl group of type $(A_2+A_1)^{(1)}$. 
We next lift the representation on the level of
$\tau$ functions and derive various bilinear equations. 
We then clarify the mechanism of the
inconsistency among the hypergeometric solutions by using this framework. 
Some concluding remarks
are given in Section 4.

{\it Note}.
We use the following conventions of $q$-analysis
throughout this paper.\\
$q$-Shifted factorials:
\begin{equation}
 (a;q)_k=\prod_{i=1}^{k}(1-aq^{i-1}).
\end{equation}
Basic hypergeometric series\cite{Gasper-Rahman:BHS}:
\begin{equation}
 {}_1\varphi_1\left(\begin{matrix}a\\b\end{matrix};q,z\right)
  =\sum_{k=0}^\infty \frac{(a;q)_k}{(b;q)_k(q;q)_k}
  (-1)^kq^{k(k-1)/2}z^k.
\end{equation}
Jacobi theta function:
\begin{equation}
 \Theta(a;q)=(a;q)_\infty(qa^{-1};q)_\infty.
\end{equation}
Elliptic gamma function:
\begin{align}
 \Gamma(a;p,q)=\cfrac{(q^2a^{-1};p,q)_{\infty}}{(a;p,q)_{\infty}},
\end{align}
where
\begin{equation}
 (a;p,q)_k=\prod_{i,j=0}^{k-1} (1-p^iq^ja).
\end{equation}
It holds that
\begin{align}
 &\Theta(qa;q)=-a^{-1}\Theta(a;q),\\
 &\Gamma(qa;q,q)=\Theta(a;q)\Gamma(a;q,q).
\end{align}
\section{${\bm q}$-P$_{\rm\bf III}$ and ${\bm q}$-P$_{\rm\bf II}$}
We consider the following system of 
$q$-difference equations\cite{KK:qp3,KNY:qp4,Sakai:Painleve}:
\begin{equation}\label{qp3:eqn}
 g_{n+1}=\frac{q^{2N+1}c^2}{f_ng_n}~
  \frac{1+a_0q^{n}f_n}{a_0q^n+f_n},\quad
 f_{n+1}=\frac{q^{2N+1}c^2}{f_ng_{n+1}}~
  \frac{1+a_2a_0q^{n-m}g_{n+1}}{a_2a_0q^{n-m}+g_{n+1}},
\end{equation}
for the unknown functions $f_n=f_n(m,N)$ and $g_n=g_n(m,N)$ and 
the independent variable $n\in\mathbb{Z}$.
Here $m,N\in\mathbb{Z}$ and $a_0,a_2,c,q\in\mathbb{C}^{\times}$ are parameters.
Equation (\ref{qp3:eqn}) has the (extended) affine Weyl group symmetry of type
$(A_2+A_1)^{(1)}$ and is known as a $q$-Painlev\'e III equation ($q$-P$_{\rm
III}$) since the continuous limit yields the Painlev\'e III equation.
We also consider the following $q$-difference equation\cite{RG:coales,NKT:qp2}:
\begin{equation}\label{qp2:eqn}
 X_{k+1}=\frac{q^{2N+1}c^2}{X_{k}X_{k-1}}
 ~\frac{1+a_0q^{k/2}X_k}{a_0q^{k/2}+X_k},
\end{equation}
for the unknown function $X_k=X_k(N)$ and the independent variable $k\in\mathbb{Z}$.
Equation (\ref{qp2:eqn}) is a $q$-Painlev\'e II equation ($q$-P$_{\rm II}$) 
and actually it admits
a continuous limit to the Painlev\'e II equation.

Note that substituting
\begin{equation}\label{qp3:symmetrization1}
 m=0,\quad
 a_2=q^{1/2},
\end{equation}
and putting
\begin{equation}\label{qp3:symmetrization2}
 f_k(0,N)=X_{2k}(N),\quad
 g_k(0,N)=X_{2k-1}(N),
\end{equation}
in {\rm (\ref{qp3:eqn})} yield {\rm (\ref{qp2:eqn})}.

We shall briefly review the hypergeometric solutions to
$q$-P$_{\rm III}$ and $q$-P$_{\rm II}$ following \cite{KK:qp3,NKT:qp2}
and then compare their structures.
\subsection{Hypergeometric solutions to ${\bm q}$-P$_{\rm\bf III}$}
First, we review the hypergeometric solutions to $q$-P$_{\rm III}$.
For each $N\in\mathbb{Z}_{\geq 0}$, let $\psi^{n,m}_N$ be 
an $N\times N$ determinant defined by
\begin{equation}\label{qp3:det}
 \psi^{n,m}_N=
  \begin{vmatrix}
   F_{n,m}& F_{n+1,m}&\cdots&F_{n+N-1,m}\\
   F_{n-1,m}& F_{n,m}&\cdots&F_{n+N-2,m}\\
   \vdots&\vdots &\ddots&\vdots\\
   F_{n-N+1,m}&F_{n-N+2,m}&\cdots&F_{n,m}
  \end{vmatrix},\quad
 \psi^{n,m}_0=1,
\end{equation}
where $F_{n,m}$ satisfies 
\begin{equation}\label{qp3:contiguity}
 \begin{array}{l}\medskip
  F_{n+1,m} - F_{n,m} = -{a_0}^2q^{2n} F_{n,m-1},\\
  F_{n,m+1} - F_{n,m} = -{a_2}^{-2}q^{2m+2} F_{n-1,m}.
 \end{array}
\end{equation}
\begin{lemma}[\cite{KK:qp3}]\label{lem:qp3_bl}
$\psi^{n,m}_N$ satisfies the following bilinear difference equations$:$
\begin{align}
 &{a_0}^2q^{2n-2}\psi^{n-1,m-1}_{N+1}\psi^{n,m}_{N}
  -q^{2N}\psi^{n,m-1}_{N}\psi^{n-1,m}_{N+1}
  +\psi^{n-1,m-1}_{N}\psi^{n,m}_{N+1}=0,\label{qp3:bl1}\\
 &\psi^{n,m}_{N+1}\psi^{n,m-1}_{N}
  -q^{-2N}\psi^{n-1,m-1}_{N}\psi^{n+1,m}_{N+1}
  -{a_0}^2q^{2n}\psi^{n,m}_{N}\psi^{n,m-1}_{N+1}=0,\label{qp3:bl2}\\
 &\psi^{n,m}_{N+1}\psi^{n-1,m-1}_{N}
  -\psi^{n,m-1}_{N+1}\psi^{n-1,m}_{N}
  +{a_2}^{-2}q^{2m}\psi^{n,m}_{N}\psi^{n-1,m-1}_{N+1}=0,\label{qp3:bl3}\\
 &\psi^{n,m-1}_{N+1}\psi^{n,m}_{N}
  -{a_2}^{-2}q^{2m}\psi^{n-1,m-1}_{N+1}\psi^{n+1,m}_{N} 
  -\psi^{n,m-1}_{N}\psi^{n,m}_{N+1}=0.\label{qp3:bl4}
\end{align}
\end{lemma}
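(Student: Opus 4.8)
The plan is to prove all four equations by the standard determinant machinery: realize every $\psi^{n',m'}_{N'}$ occurring in a given identity as a maximal minor (or, after one Laplace expansion, a cofactor) of a single rectangular matrix, and then reduce the identity to a Plücker relation — equivalently, to the Desnanot--Jacobi identity for a bordered determinant. The structural hint is the common shape of (qp3:bl1)--(qp3:bl4): each is a sum of three terms, every term a product of one size-$N$ determinant and one size-$(N{+}1)$ determinant of the $\psi$-family. This is exactly what the three-term Plücker relation $[I\,a\,b][I\,c\,d]-[I\,a\,c][I\,b\,d]+[I\,a\,d][I\,b\,c]=0$ for the maximal minors of an $(N{+}1)\times(N{+}3)$ matrix produces when one of the four distinguished columns is taken to be the unit vector $(0,\dots,0,1)$: the three minors that contain it collapse, by expansion along that column, to $N\times N$ determinants, while the other three remain genuine $(N{+}1)\times(N{+}1)$ determinants. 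So for each of the four equations I would assemble the appropriate ambient matrix and read the identity off the corresponding Plücker relation.

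The contiguity relations (qp3:contiguity) enter in two ways. First, they turn the shifts in $n$ and $m$ into elementary column operations that bring $\psi$ at neighbouring values of $(n,m,N)$ into one matrix: writing the columns of $\psi^{n,m}_N$ as the vectors ${}^{t}(F_{k,m},F_{k-1,m},\dots)$, the relation $F_{n+1,m}-F_{n,m}=-a_0^{\,2}q^{2n}F_{n,m-1}$ says that the difference of two consecutive columns at level $m$ equals a column at level $m-1$ up to the scalar $-a_0^{\,2}q^{2n}$ and the diagonal matrix $\operatorname{diag}(1,q^{-2},\dots,q^{-2N})$, whereas $F_{n,m+1}-F_{n,m}=-a_2^{-2}q^{2m+2}F_{n-1,m}$ relates columns at levels $m+1$ and $m$ by a pure shift in $n$ with no diagonal twist. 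Second, carrying out these moves manufactures precisely the monomials $a_0^{\pm2}$, $a_2^{\pm2}$ and the powers of $q$ appearing as coefficients; in particular the factors $q^{\pm2N}$ in (qp3:bl1)--(qp3:bl2) are the ratio $q^{-N(N-1)}/q^{-N(N+1)}$ of the determinants of the diagonal twist at sizes $N$ and $N{+}1$. Since the regauging $F_{n,m}\mapsto q^{2nm}F_{n,m}$ only multiplies $\psi^{n,m}_N$ by the monomial $q^{2Nnm}$ and moves the $q^{2n}$ weight of the first relation into a $q^{2m}$ weight in the second, (qp3:bl3)--(qp3:bl4) are obtained from (qp3:bl1)--(qp3:bl2) by the same argument applied to the second relation, so it is enough to treat one pair in detail.

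A convenient way to organize everything is induction on $N$. The base case $N=0$ is immediate: with $\psi_0=1$ and $\psi^{n,m}_1=F_{n,m}$, each of (qp3:bl1)--(qp3:bl4) is literally one of the relations in (qp3:contiguity) (e.g.\ (qp3:bl1) becomes $F_{n,m}-F_{n-1,m}+a_0^{\,2}q^{2n-2}F_{n-1,m-1}=0$). For the inductive step one expands $\psi_{N+1}$ along its last row or column, rewrites the cofactors by means of (qp3:contiguity), combines them with $\psi_N$-determinants and invokes the inductive hypothesis; alternatively, for fixed $N$ one appeals directly to the Plücker/Jacobi identity once the matrix is in place. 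The hard part will be exactly this assembly-and-bookkeeping step: for each of the four equations one must decide which columns of the ambient matrix are genuine $F$-columns and at which of the two levels $m$, $m-1$, which are produced by differencing via (qp3:contiguity), and where the unit-vector column sits, so that precisely the six determinants of the identity — and not spurious ``non-consecutive'' Toeplitz minors — show up as the six maximal minors of one Plücker relation, and then track the diagonal gauge factor $\operatorname{diag}(1,q^{-2},\dots,q^{-2N})$ and all the signs through to the end so that the coefficients come out exactly as stated. Once the matrix is chosen correctly, the rest is the Plücker relation together with a Laplace expansion along the unit column, which is mechanical.
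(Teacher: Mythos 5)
You should first be aware that the paper does not actually prove Lemma \ref{lem:qp3_bl}: it is imported verbatim from the cited reference [KK:qp3], so there is no in-paper determinant proof to match your argument against. Your route --- encoding the contiguity relations (\ref{qp3:contiguity}) as column operations that place all six determinants of each identity inside one $(N+1)\times(N+3)$ matrix with a unit column, and then reading the identity off the three-term Pl\"ucker/Desnanot--Jacobi relation --- is the standard determinantal proof and is essentially the method of the cited source. Your outline is sound so far as it goes: the $N=0$ cases do reduce exactly to the two relations in (\ref{qp3:contiguity}); you correctly identify that the first contiguity relation (whose coefficient $-a_0^2q^{2n}$ depends on $n$ and hence produces a row-diagonal twist $\mathrm{diag}(1,q^{-2},\dots)$ whose determinant ratio accounts for the $q^{\pm 2N}$ in (\ref{qp3:bl1})--(\ref{qp3:bl2})) governs the first pair, while the second relation (coefficient depending only on $m$, no twist, hence no $q^{\pm2N}$) governs (\ref{qp3:bl3})--(\ref{qp3:bl4}). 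The closest thing to a proof inside this paper is genuinely different: Section 3.3 and Appendix B re-derive the equivalent relations (\ref{qp3_bl:psi1})--(\ref{qp3_bl:psi4}) as universal consequences of the birational action of $\widetilde{W}((A_2+A_1)^{(1)})$ on the $\tau$ functions, transported to $\psi^{n,m}_N$ by the gauge (\ref{tau:gauge}); that argument buys symmetry-theoretic uniformity but relies on the identification of the hypergeometric $\tau$ function with the gauged determinant (deferred to a reference in preparation), whereas yours works directly on the determinants with no such input. The one caveat is that your text is a plan rather than a proof: the assembly of the ambient matrix for each of the four identities and the tracking of signs and powers of $q$ --- which you yourself flag as the hard part --- is exactly where such arguments live or die, and it is left undone; but nothing in the plan is wrong, and the base cases you can check all verify.
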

\begin{proposition}[\cite{KK:qp3}]\label{prop:qp3_hyper}
The hypergeometric solutions to $q$-{\rm P}$_{\rm III}$, $(\ref{qp3:eqn})$, with $c=1$ are given by
\begin{equation}
 f_n=-a_0q^n\frac{\psi^{n,m-1}_{N+1}\psi^{n,m}_{N}}
  {\psi^{n,m}_{N+1}\psi_{N}^{n,m-1}},\quad
 g_n={a_0}^{-1}a_2q^{-n-m+1}\frac{\psi^{n,m}_{N+1}\psi^{n-1,m-1}_{N}}
  {\psi^{n-1,m-1}_{N+1}\psi^{n,m}_{N}}.
\end{equation}
\end{proposition}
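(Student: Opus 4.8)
The plan is to verify the two equations in (\ref{qp3:eqn}) (with $c=1$) directly by substituting the proposed forms for $f_n$ and $g_n$, clearing denominators, and recognizing the resulting polynomial identities among the $\psi^{n,m}_N$ as consequences of the bilinear equations (\ref{qp3:bl1})--(\ref{qp3:bl4}) of Lemma \ref{lem:qp3_bl}. Throughout one works at generic parameter values so that the $\psi$'s appearing in denominators do not vanish; since each $\psi^{n,m}_N$ is a polynomial in the entries $F_{n,m}$ solving (\ref{qp3:contiguity}), this is harmless.

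For the first equation, I would first rewrite $1+a_0q^nf_n$ over the common denominator $\psi^{n,m}_{N+1}\psi^{n,m-1}_N$; its numerator is $\psi^{n,m}_{N+1}\psi^{n,m-1}_N-{a_0}^2q^{2n}\psi^{n,m-1}_{N+1}\psi^{n,m}_N$, which by (\ref{qp3:bl2}) collapses to the single term $q^{-2N}\psi^{n-1,m-1}_N\psi^{n+1,m}_{N+1}$. Next I would write $a_0q^n+f_n$ as $a_0q^n\bigl(\psi^{n,m}_{N+1}\psi^{n,m-1}_N-\psi^{n,m-1}_{N+1}\psi^{n,m}_N\bigr)\big/\bigl(\psi^{n,m}_{N+1}\psi^{n,m-1}_N\bigr)$ and note that $f_ng_n$ is a bare monomial in the $\psi$'s. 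Plugging these into $\frac{q^{2N+1}}{f_ng_n}\,\frac{1+a_0q^nf_n}{a_0q^n+f_n}$ and cancelling the common $\psi$-factors together with the accumulated powers of $q,a_0,a_2$, the required equality with $g_{n+1}$ collapses to the single relation (\ref{qp3:bl4}) at $(n,m,N)$. Hence the first equation holds.

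For the second equation, the same mechanism applies: $1+a_2a_0q^{n-m}g_{n+1}$, put over $\psi^{n,m-1}_{N+1}\psi^{n+1,m}_N$, collapses to a single term by (\ref{qp3:bl3}) with $n\mapsto n+1$, and $a_2a_0q^{n-m}+g_{n+1}$ collapses by (\ref{qp3:bl1}) with $n\mapsto n+1$. Substituting into $\frac{q^{2N+1}}{f_ng_{n+1}}\,\frac{1+a_2a_0q^{n-m}g_{n+1}}{a_2a_0q^{n-m}+g_{n+1}}$ and simplifying, the right-hand side turns into exactly $-a_0q^{n+1}\psi^{n+1,m-1}_{N+1}\psi^{n+1,m}_N\big/\bigl(\psi^{n+1,m}_{N+1}\psi^{n+1,m-1}_N\bigr)$, which is precisely the proposed value of $f_{n+1}$, so no further identity is needed. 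Altogether all four relations of Lemma \ref{lem:qp3_bl} get used, two per equation.

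The only genuine difficulty is bookkeeping rather than structure: one must track the index shifts $(n,m,N)\mapsto(\cdots)$ carefully and pick the correctly shifted instance of each of (\ref{qp3:bl1})--(\ref{qp3:bl4}), as well as keep the prefactors $q^{j}{a_0}^{k}{a_2}^{\ell}$ straight; once the right instances are chosen, everything is a mechanical simplification. Finally, to justify the term \emph{hypergeometric solutions}, I would observe that eliminating one shift from the system (\ref{qp3:contiguity}) leaves a single three-term $q$-difference equation whose solution is a ${}_1\varphi_1$ (times a theta-type prefactor), so that the entries $F_{n,m}$ — and hence the $\psi^{n,m}_N$ — are genuinely built from a basic hypergeometric function.
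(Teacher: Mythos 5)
Your proposal is correct and is essentially the paper's own argument: the paper simply states that Proposition \ref{prop:qp3_hyper} follows from Lemma \ref{lem:qp3_bl}, and your computation carries out exactly that verification, with the right assignment of (\ref{qp3:bl2}) and (\ref{qp3:bl4}) to the two factors of the first equation and of (\ref{qp3:bl3}) and (\ref{qp3:bl1}) (each shifted $n\mapsto n+1$) to the second. The index bookkeeping and prefactor cancellations you describe do work out as claimed.
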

Proposition \ref{prop:qp3_hyper} follows from Lemma \ref{lem:qp3_bl}.
\begin{remark}\label{remark:qp3_hyper}\rm
\begin{enumerate}
\item 
The general solution to (\ref{qp3:contiguity}) is given by
\begin{align}
 F_{n,m}=&\frac{A_{n,m}}{({a_2}^{-2}q^{2m+2};q^2)_\infty}
  {}_1\varphi_1\left(\begin{matrix}0\\ {a_2}^{2}q^{-2m}\end{matrix}
  ;q^{2},{a_2}^2{a_0}^2q^{2n-2m}\right)\nonumber\\
 &+B_{n,m}\cfrac{\Theta({a_0}^2{a_2}^2q^{2n-2m-2};q^2)}
  {({a_2}^2q^{-2m-2};q^2)_\infty\Theta({a_0}^2q^{2n};q^2)}
  {}_1\varphi_1\left(\begin{matrix}0\\ {a_2}^{-2}q^{2m+4}\end{matrix}
  ;q^{2},{a_0}^2q^{2n+2}\right),\label{qp3:entry}
\end{align}
where $A_{n,m}$ and $B_{n,m}$ are periodic functions of period one
for $n$ and $m$, i.e., 
\begin{equation}
 A_{n,m}=A_{n+1,m}=A_{n,m+1},\quad B_{n,m}=B_{n+1,m}=B_{n,m+1}.
\end{equation}
Note that $F_{n,m}$ satisfies the three-term relation with respect to $n$:
\begin{equation}\label{qp3:3-term}
 F_{n+1,m}+\left({a_0}^2q^{2n}-{a_2}^{-2}q^{2m+2}-1\right)F_{n,m}
 +{a_2}^{-2}q^{2m+2}F_{n-1,m}=0.
\end{equation}
\item $\psi^{n,m}_{N}$ satisfies the discrete Toda equation:
\begin{equation}\label{dToda:bl} 
 \psi^{n,m}_{N+1}\psi^{n,m}_{N-1}-\left(\psi^{n,m}_{N}\right)^2
 +\psi^{n+1,m}_{N}\psi^{n-1,m}_{N}=0.
\end{equation}
In general, (\ref{dToda:bl}) admits a solution 
expressed in terms of the Toeplitz type determinant
\begin{equation}
 \psi^{n,m}_{N}=\det\left(c_{n-i+j,m}\right)_{i,j=1,\ldots,N}\quad (N>0),
\end{equation}
for an arbitrary function $c_{n,m}$ under the boundary conditions
\begin{equation}\label{boundary:eqn}
 \psi^{n,m}_0=1,\quad
 \psi^{n,m}_N=0\quad (N<0).
\end{equation}
Since the hypergeometric solutions to 
$q$-P$_{\rm III}$ satisfy the conditions (\ref{boundary:eqn}),
the bilinear equation (\ref{dToda:bl}) is 
regarded as to fix the determinant structure of the solutions.
\end{enumerate}
\end{remark}
\subsection{Hypergeometric solutions to ${\bm q}$-P$_{\rm\bf II}$}
Next, we review the hypergeometric solutions to $q$-P$_{\rm II}$.
For each $N\in\mathbb{Z}_{\geq 0}$, let $\phi^k_N$ be an $N\times N$ determinant defined by
\begin{equation}\label{qp2:det}
\phi^k_N=
 \begin{vmatrix}
  G_{k}&G_{k+2}&\cdots&G_{k+2N-2}\\
  G_{k-1}&G_{k+1}&\cdots&G_{k+2N-3}\\
  \vdots&\vdots&\ddots&\vdots\\
  G_{k-N+1}&G_{k-N+3}&\cdots&G_{k+N-1}
 \end{vmatrix}
,\quad
 \phi^k_0=1, 
\end{equation}
where $G_k$ satisfies
\begin{equation}\label{qp2:3-term}
 G_{k+1} - G_k + \frac{1}{{a_0}^2q^{k}}G_{k-1}=0.
\end{equation}
\begin{lemma}[\cite{NKT:qp2}]\label{lem:qp2_bl}
 $\phi^k_N$ satisfies the following bilinear difference equations$:$
\begin{align}
 &{a_0}^{-2}q^{-k+1}\phi^{k-2}_{N+1}\phi^{k+1}_{N}+\phi^{k}_{N+1}\phi^{k-1}_{N}
 -q^{-N}\phi^{k-1}_{N+1}\phi^{k}_{N}=0,\label{qp2:bl1}\\
 &q^{N}\phi^{k+1}_{N+1}\phi^{k-2}_{N}+{a_0}^{-2}q^{-k-N}\phi^{k-1}_{N+1}\phi^{k}_{N}
 -\phi^{k}_{N+1}\phi^{k-1}_{N}=0.\label{qp2:bl2}
\end{align}
\end{lemma}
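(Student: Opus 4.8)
The plan is to derive both identities from Plücker relations (equivalently, from Jacobi's identity for determinants), with the three-term relation (\ref{qp2:3-term}) supplying the linear dependences among the columns of $\phi^k_N$. Writing each column of $\phi^k_{N+1}$ as the vector $v^j={}^{t}(G_j,G_{j-1},\dots,G_{j-N})$ of length $N+1$, one has
\[
 \phi^k_{N+1}=\det\big[\,v^k\mid v^{k+2}\mid\cdots\mid v^{k+2N}\,\big],
\]
and $\phi^k_N$ is the $N\times N$ minor obtained by deleting the last row and keeping the columns $v^k,v^{k+2},\dots,v^{k+2N-2}$. Applying (\ref{qp2:3-term}) entrywise gives the basic column relation
\[
 v^{j+1}=v^{j}-{a_0}^{-2}q^{-j-1}\,\Lambda\, v^{j-1},\qquad \Lambda=\mathrm{diag}(q,q^2,\dots,q^{N+1}),
\]
so that any three consecutive columns are linearly dependent up to the action of the \emph{fixed} diagonal weight $\Lambda$, and $\Lambda v^{j-1}$ is a scalar multiple of $v^{j}-v^{j+1}$.

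For (\ref{qp2:bl1}) I would assemble a single rectangular matrix containing the columns $v^{k-2},v^{k-1},v^{k},v^{k+1},\dots$ together with a unit column used for bordering (the border converts the $N\times N$ determinants into $(N+1)\times(N+1)$ ones, reconciling the two sizes that occur in the identity). Using the column relation above to re-express the odd- and even-parity columns in terms of one another, all six determinants in (\ref{qp2:bl1}) are brought into a common frame, and a three-term Plücker/Jacobi identity for the maximal minors of this matrix — the relation $[AB][CD]-[AC][BD]+[AD][BC]=0$ obtained by varying two column slots over four choices with the remaining $N-1$ slots held fixed — then yields exactly (\ref{qp2:bl1}), the prefactors ${a_0}^{-2}q^{-k+1}$ and $q^{-N}$ emerging respectively from $\Lambda$ and from the bordering/size change. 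Relation (\ref{qp2:bl2}) follows by the same device applied to the mirror bordering; alternatively, one may observe that $\phi^k_N$ is a determinant of Toeplitz--Hankel type and hence satisfies a discrete Toda equation analogous to (\ref{dToda:bl}) in Remark \ref{remark:qp3_hyper}, after which (\ref{qp2:bl2}) is recovered by eliminating one term between (\ref{qp2:bl1}) and that Toda relation.

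The step I expect to be the main obstacle is precisely the presence of the fixed diagonal matrix $\Lambda$ in the column relation: it cannot be gauged away by any rescaling $G_k\mapsto r_kG_k$ (one checks that the resulting requirements on $\log_q r_k$ — that $r_k/r_{k+1}$ be constant and that $q^{-k}r_{k-1}/r_{k+1}$ be constant — are mutually incompatible), so $\Lambda v^{j-1}$ is not itself of the form $v^{j'}$ for any index $j'$, and the naive Plücker relation among contiguous Casoratians does not apply directly. The real work therefore lies in choosing the bordered matrix and, crucially, the order of the column operations so that every factor of $\Lambda$ introduced during the reduction is subsequently cancelled by another column operation, leaving only the claimed scalar factors; tracking the accompanying powers of $q$ and $a_0$ is the bulk of the computation. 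Checking the case $N=1$ directly — expanding the $2\times 2$ determinants and reducing via (\ref{qp2:3-term}) to the two-dimensional solution space spanned by a fixed pair of solutions of (\ref{qp2:3-term}) — both fixes the normalization and provides a reliable template for the general bookkeeping.
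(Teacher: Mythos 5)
Your direct determinantal route is genuinely different from what this paper does: the paper never proves Lemma~\ref{lem:qp2_bl} from the determinant definition at all, but quotes it from \cite{NKT:qp2} and instead re-obtains the same relations structurally, by deriving the bilinear equations (\ref{qp2_bl:Prop_A2_3})--(\ref{qp2_bl:Prop_A2_1}) for the $\tau$ functions $\tau^k_N={R_1}^k{T_4}^N(\tau_1)$ from the birational representation of $\widetilde{W}((A_2+A_1)^{(1)})$ (Proposition~\ref{prop:qp2_bl}, proved in Appendix~\ref{sec:bl_qp2}) and then passing to $\phi^k_N$ through the gauge factor (\ref{qp2_tau:gauge}). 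Your Pl\"ucker/bordered-determinant argument is the classical one for Casorati-type solutions, and it buys something the paper's route does not give for free: it proves the identities for an \emph{arbitrary} solution $G_k$ of (\ref{qp2:3-term}) directly from the determinant (\ref{qp2:det}), whereas the $\tau$-function derivation establishes the bilinear relations abstractly and must separately invoke the Toda-type equation (\ref{dToda2:bl}) with the boundary conditions to pin down the determinant structure (Remark~\ref{remark:qp2_hyper}). You have also correctly isolated the genuine technical obstacle, namely that the column relation $v^{j+1}=v^{j}-{a_0}^{-2}q^{-j-1}\Lambda v^{j-1}$ carries a fixed diagonal $\Lambda$ that cannot be gauged away, so the naive contiguous-Casoratian Pl\"ucker relation does not apply verbatim; this is exactly why the $2$--$1$ shift asymmetry of (\ref{qp2:det}) makes these lemmas nontrivial.

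Two caveats. First, your fallback derivation of (\ref{qp2:bl2}) by ``eliminating one term between (\ref{qp2:bl1}) and the Toda relation'' does not go through as stated: the terms shared by (\ref{qp2:bl1}) and (\ref{qp2:bl2}) are $\phi^{k}_{N+1}\phi^{k-1}_{N}$ and $\phi^{k-1}_{N+1}\phi^{k}_{N}$, but the remaining products $\phi^{k-2}_{N+1}\phi^{k+1}_{N}$ and $\phi^{k+1}_{N+1}\phi^{k-2}_{N}$ are mirror images that (\ref{dToda2:bl}) (which lives on levels $N-1,N,N+1$) does not connect; you would need a further independent bilinear relation, so the ``mirror bordering'' version of the main argument is the one to rely on. Second, as you acknowledge, the proposal defers the entire bookkeeping of how the $\Lambda$ factors cancel and how the prefactors ${a_0}^{-2}q^{-k+1}$ and $q^{\pm N}$ arise; since that cancellation is precisely where such proofs succeed or fail, the argument is a credible plan rather than a complete proof until that computation (or at least the $N=1$ template together with the inductive pattern) is carried out.
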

\begin{proposition}[\cite{NKT:qp2}]\label{prop:qp2_hyper}
The hypergeometric solutions to $q$-{\rm P}$_{\rm II}$, 
$(\ref{qp2:eqn})$, with $c=1$ are given by
\begin{equation}\label{eqn:X_phi}
 X_k=-a_0q^{(k+2N)/2}
 \frac{\phi^k_{N+1}\phi^{k-1}_{N}}{\phi^{k-1}_{N+1}\phi^k_{N}}.
\end{equation}
\end{proposition}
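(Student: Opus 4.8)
The plan is to verify the formula by direct substitution, in exact parallel with the way Proposition~\ref{prop:qp3_hyper} is deduced from Lemma~\ref{lem:qp3_bl}: one inserts the proposed expression for $X_k$ into the right-hand side of $q$-P$_{\rm II}$ (with $c=1$) and shows that it collapses to $X_{k+1}$. I emphasize that no appeal to the symmetrization of $q$-P$_{\rm III}$ is made --- indeed, as the introduction stresses, the hypergeometric solution cannot be obtained that way, which is precisely why a self-contained proof through the Hankel-type determinant $\phi^k_N$ and its bilinear equations~(\ref{qp2:bl1})--(\ref{qp2:bl2}) is what is required.

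First I would factorize the two binomial factors occurring in $q$-P$_{\rm II}$. Substituting the ansatz, both $1+a_0q^{k/2}X_k$ and $a_0q^{k/2}+X_k$ acquire the common denominator $\phi^{k-1}_{N+1}\phi^k_N$, with numerators $\phi^{k-1}_{N+1}\phi^k_N-a_0^2q^{k+N}\phi^k_{N+1}\phi^{k-1}_N$ and $\phi^{k-1}_{N+1}\phi^k_N-q^N\phi^k_{N+1}\phi^{k-1}_N$ respectively. These are precisely the combinations resolved by Lemma~\ref{lem:qp2_bl}: equation~(\ref{qp2:bl2}) turns the first numerator into $-a_0^2q^{k+2N}\phi^{k+1}_{N+1}\phi^{k-2}_N$, and (\ref{qp2:bl1}) turns the second into $a_0^{-2}q^{-k+N+1}\phi^{k-2}_{N+1}\phi^{k+1}_N$. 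Consequently the quotient $(1+a_0q^{k/2}X_k)\big/(a_0q^{k/2}+X_k)$ reduces to a monomial in $a_0,q$ times $\phi^{k+1}_{N+1}\phi^{k-2}_N\big/\bigl(\phi^{k-2}_{N+1}\phi^{k+1}_N\bigr)$.

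Next I would expand $X_kX_{k-1}$ from the ansatz; the determinants carrying upper index $k-1$ cancel, leaving a monomial times $\phi^k_{N+1}\phi^{k-2}_N\big/\bigl(\phi^k_N\phi^{k-2}_{N+1}\bigr)$. Multiplying $q^{2N+1}\big/(X_kX_{k-1})$ by the factorized quotient, the determinants with upper index $k-2$ cancel against those produced by the bilinear relations, the surviving determinants regroup into $\phi^{k+1}_{N+1}\phi^k_N\big/\bigl(\phi^k_{N+1}\phi^{k+1}_N\bigr)$, and the accumulated power of $q$ comes out to $q^{(k+1+2N)/2}$ with overall sign $-a_0$; this is exactly $X_{k+1}$. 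For $N\ge 0$ every determinant appearing in the computation carries a nonnegative lower index, and since $G_k$ is a generic solution of the three-term relation~(\ref{qp2:3-term}) the $\phi^k_N$ are generically nonzero, so all the divisions involved are legitimate.

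The argument has no genuinely difficult step; the only real care is bookkeeping --- pairing each binomial factor of $q$-P$_{\rm II}$ with the correct member of Lemma~\ref{lem:qp2_bl}, and keeping track of the half-integer powers of $q$. The bilinear equations themselves involve only integer powers of $q$, so the $q^{1/2}$'s enter the computation solely through the explicit prefactor $q^{(k+2N)/2}$ in $X_k$, and one simply has to confirm that they balance, which they do.
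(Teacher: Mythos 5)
Your computation is correct and is precisely the verification the paper has in mind when it states, without further detail, that Proposition \ref{prop:qp2_hyper} ``follows from Lemma \ref{lem:qp2_bl}'': equation (\ref{qp2:bl2}) resolves $1+a_0q^{k/2}X_k$, equation (\ref{qp2:bl1}) resolves $a_0q^{k/2}+X_k$, and the remaining cancellations and the power count $q^{(k+1+2N)/2}$ all check out. The only cosmetic slip is that the numerator you write for $a_0q^{k/2}+X_k$ omits the overall prefactor $a_0q^{k/2}$, but you absorb it correctly into the final monomial, so nothing is affected.
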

Proposition \ref{prop:qp2_hyper} follows from Lemma \ref{lem:qp2_bl}.
\begin{remark}\label{remark:qp2_hyper}\rm
\begin{enumerate}
\item The general solution to (\ref{qp2:3-term}) is given by
\begin{equation}\label{qp2:Gk}
\begin{split}
 G_k=&
 A_k\Theta(ia_0q^{(2k+1)/4};q^{1/2})
  {}_1\varphi_1\left(\begin{matrix}0\\-q^{1/2}\end{matrix}
  ;q^{1/2},-ia_0q^{(3+2k)/4}\right)\\
 &+B_k\Theta(-ia_0q^{(2k+1)/4};q^{1/2})
  {}_1\varphi_1\left(\begin{matrix}0\\-q^{1/2}\end{matrix}
  ;q^{1/2},ia_0q^{(3+2k)/4}\right),
\end{split}
\end{equation}
where $A_k$ and $B_k$ are periodic functions of period one, i.e., 
\begin{equation}
 A_k=A_{k+1},\quad B_k=B_{k+1}.
\end{equation}
 \item $\phi^k_N$ also satisfies the bilinear equation
\begin{equation}\label{dToda2:bl}
 \phi^{k}_{N+1}\phi^{k+1}_{N-1}-\phi^{k}_{N}\phi^{k+1}_{N}
 +\phi^{k+2}_{N}\phi^{k-1}_{N} = 0,
\end{equation}
which is a variant of the discrete Toda equation. Under the conditions
\begin{equation}
 \phi^{k}_0=1,\quad
 \phi^{k}_N=0\quad (N<0),
\end{equation}
(\ref{dToda2:bl}) admits a solution expressed by
\begin{equation}
 \phi^{k}_N=\det\left(c_{k+2i-j-1}\right)_{i,j=1,\ldots,N}\quad (N>0),
\end{equation}
for an arbitrary function $c_k$. Hence, (\ref{dToda2:bl}) can be regarded as the
bilinear equation that fixes the determinant structure of the hypergeometric solutions to
$q$-P$_{\rm II}$.
\end{enumerate}
\end{remark}
\subsection{Comparison of the hypergeometric solutions}
By comparing the hypergeometric solution to $q$-P$_{\rm III}$ 
with that to $q$-P$_{\rm II}$
(see Propositions \ref{prop:qp3_hyper} and \ref{prop:qp2_hyper}) one may
immediately notice that a na\"ive application of the 
specialization (\ref{qp3:symmetrization1}) 
to the former does not yield the latter.
As analogous to the phenomena seen in Section 1, we
find the following differences between the two solutions:
\begin{description}
 \item[{\rm (i)}] the hypergeometric functions are different. In fact, substituting 
$a_2=q^{1/2}$ into (\ref{qp3:3-term}) and (\ref{qp3:entry}) do not yield 
(\ref{qp2:3-term}) and (\ref{qp2:Gk}), respectively;
 \item[{\rm (ii)}] the determinant structures are different. 
\end{description}

Besides the determinant formula for the hypergeometric solution to 
$q$-P$_{\rm II}$ in Proposition \ref{prop:qp2_hyper},
one can also obtain another formula from that to $q$-P$_{\rm III}$
in Proposition \ref{prop:qp3_hyper} through a 
specialization (\ref{qp3:symmetrization1}).
We set
\begin{equation}
 f_n=X_{2n},\quad
 g_n=X_{2n-1},
\end{equation}
and define $\hat{G}_k$ as 
\begin{align}
 \hat{G}_k
 &=\cfrac{1+(-1)^k}{2}\Theta({a_0}^2q^k;q^2)F_{\frac{k}{2},-1}
 +\cfrac{1-(-1)^k}{2}\Theta({a_0}^2q^{k+1};q^2)F_{\frac{k+1}{2},0}\nonumber\\
 &=\begin{cases}
  \Theta({a_0}^2q^{2n};q^2)F_{n,-1}&(k=2n)\\
  \Theta({a_0}^2q^{2n};q^2)F_{n,0}&(k=2n-1)
 \end{cases}
\end{align}
with $F_{n,m}$ given in Remark \ref{remark:qp3_hyper}.
The system (\ref{qp3:contiguity}) reduces to the equation
\begin{equation}\label{qp2:3-term2}
 \hat{G}_{k+1}-\hat{G}_k+\frac{1}{{a_0}^2q^{k}}\hat{G}_{k-1}=0,
\end{equation}
which coincides with (\ref{qp2:3-term}).
Then we have solutions to $q$-P$_{\rm II}$:
\begin{equation}\label{eqn:X_psi}
 X_{2n}=-a_0q^n\frac{\psi^{n,-1}_{N+1}\psi^{n,0}_{N}}
  {\psi^{n,0}_{N+1}\psi_{N}^{n,-1}},\quad
 X_{2n-1}={a_0}^{-1}q^{(-2n+3)/2}
  \frac{\psi^{n,0}_{N+1}\psi^{n-1,-1}_{N}}{\psi^{n-1,-1}_{N+1}\psi^{n,0}_{N}},
\end{equation}
where
\begin{align}
 &\psi^{n,-1}_N=\left(\prod^N_{i=1}\cfrac{1}{\Theta({a_0}^2q^{2n+2i-2};q^2)}\right)
 \cfrac{q^{N(N^2-1)/3}}{(-{a_0}^2q^{2n})^{N(N-1)/2}}
 \begin{vmatrix}
  \hat{G}_{2n}&\hat{G}_{2n+2}&\cdots&\hat{G}_{2n+2N-2}\\
  \hat{G}_{2n-2}&q^{-2}\hat{G}_{2n}&\cdots&q^{-2(N-1)}\hat{G}_{2n+2N-4}\\
  \vdots&\vdots &\ddots&\vdots\\
  \hat{G}_{2n-2N+2}&q^{-2(N-1)}\hat{G}_{2n-2N+4}&\cdots&q^{-2(N-1)^2}\hat{G}_{2n}
 \end{vmatrix},\label{eqn:qp3_qp2_det_1}\\
 &\psi^{n,0}_N=\left(\prod^N_{i=1}\cfrac{1}{\Theta({a_0}^2q^{2n+2i-2};q^2)}\right)
 \cfrac{q^{N(N^2-1)/3}}{(-{a_0}^2q^{2n})^{N(N-1)/2}}
 \begin{vmatrix}
  \hat{G}_{2n-1}&\hat{G}_{2n+1}&\cdots&\hat{G}_{2n+2N-3}\\
  \hat{G}_{2n-3}&q^{-2}\hat{G}_{2n-1}&\cdots&q^{-2(N-1)}\hat{G}_{2n+2N-5}\\
  \vdots&\vdots &\ddots&\vdots\\
  \hat{G}_{2n-2N+1}&q^{-2(N-1)}\hat{G}_{2n-2N+3}&\cdots&q^{-2(N-1)^2}\hat{G}_{2n-1}
 \end{vmatrix}.\label{eqn:qp3_qp2_det_2}
\end{align}
We can conform the shift of indices of 
(\ref{eqn:qp3_qp2_det_1}) and (\ref{eqn:qp3_qp2_det_2})
to that of (\ref{qp2:det}).
Actually, in (\ref{eqn:qp3_qp2_det_1}) and (\ref{eqn:qp3_qp2_det_2}), 
multiplying the $i$-th row by
${a_0}^{-2}q^{-2n+2i-3}$ and ${a_0}^{-2}q^{-2n+2i-2}$, respectively.
Then adding the $(i-1)$-th row to the $i$-th row, 
the indices of $\hat{G}$ in the $i$-th row increase by one because of
(\ref{qp2:3-term2}). 
Repeating this operation, we obtain
\begin{align}
 &\psi^{n,-1}_N
 =(-1)^{-N(N-1)/2}\left(\prod^N_{i=1}\cfrac{1}{\Theta({a_0}^2q^{2n+2i-2};q^2)}\right)
 q^{-N(N-1)(N-5)/6}
 \begin{vmatrix}
  \hat{G}_{2n}&\hat{G}_{2n+2}&\cdots&\hat{G}_{2n+2N-2}\\
  \hat{G}_{2n-1}&\hat{G}_{2n+1}&\cdots&\hat{G}_{2n+2N-3}\\
  \vdots&\vdots&\ddots&\vdots\\
  \hat{G}_{2n-N+1}&\hat{G}_{2n-N+3}&\cdots&\hat{G}_{2n+N-1}
 \end{vmatrix},\\
 &\psi^{n,0}_N
 =(-1)^{-N(N-1)/2}\left(\prod^N_{i=1}\cfrac{1}{\Theta({a_0}^2q^{2n+2i-2};q^2)}\right)
 q^{-N(N-1)(N-2)/6}
 \begin{vmatrix}
  \hat{G}_{2n-1}&\hat{G}_{2n+1}&\cdots&\hat{G}_{2n+2N-3}\\
  \hat{G}_{2n-2}&\hat{G}_{2n}&\cdots&\hat{G}_{2n+2N-4}\\
  \vdots&\vdots&\ddots&\vdots\\
  \hat{G}_{2n-N}&\hat{G}_{2n-N+2}&\cdots&\hat{G}_{2n+N-2}
 \end{vmatrix}.
\end{align}
Thus, we are led to the following unified expression of (\ref{eqn:X_psi}):
\begin{align}
 &X_k=-a_0q^{(k+2N)/2}\frac{\hat{\phi}^{k}_{N+1}\hat{\phi}^{k-1}_{N}}
  {\hat{\phi}^{k-1}_{N+1}\hat{\phi}_{N}^{k}},
 \label{eqn:X_psi_hat}\\
 &\hat{\phi}^{k}_{N}
 =\begin{vmatrix}
  \hat{G}_{k}&\hat{G}_{k+2}&\cdots&\hat{G}_{k+2N-2}\\
  \hat{G}_{k-1}&\hat{G}_{k+1}&\cdots&\hat{G}_{k+2N-3}\\
  \vdots&\vdots&\ddots&\vdots\\
  \hat{G}_{k-N+1}&\hat{G}_{k-N+3}&\cdots&\hat{G}_{k+N-1}
 \end{vmatrix},\\
 &\hat{G}_k
 =\hat{A}_k\cfrac{\Theta({a_0}^2q^k;q^2)}{(q^{-1};q^2)_\infty}
  {}_1\varphi_1\left(\begin{matrix}0\\ q^3\end{matrix}
  ;q^{2},{a_0}^2q^{k+3}\right)
 +\hat{A}_{k+1}\cfrac{\Theta({a_0}^2q^{k+1};q^2)}{(q;q^2)_\infty}
  {}_1\varphi_1\left(\begin{matrix}0\\ q\end{matrix}
  ;q^{2},{a_0}^2q^{k+2}\right).
 \label{eqn:hat_Gk}
\end{align}
Here $\hat{A}_k$ is defined by
\begin{equation}
 \hat{A}_k=A_{\frac{k}{2},0}\cfrac{1+(-1)^k}{2}+B_{\frac{k+1}{2},0}\cfrac{1-(-1)^k}{2}
 =\begin{cases}
  A_{n,0}&(k=2n)\\
  B_{n,0}&(k=2n-1)
 \end{cases}
\end{equation}
and is a periodic function of period two, i.e., 
\begin{equation}
 \hat{A}_k=\hat{A}_{k+2}.
\end{equation}

In fact, both (\ref{qp2:Gk}) and (\ref{eqn:hat_Gk})
give the general solution to the same
equation (\ref{qp2:3-term}) (or equivalently (\ref{qp2:3-term2})).
This fact thus implies the existence of certain
identities among the basic hypergeometric series
${}_1\varphi_1$ with two different bases $q^2$ and $q^{1/2}$;
see Appendix \ref{appendix:A}.

In the rest of this paper,
we shall clarify the difference of hypergeometric solutions to
$q$-P$_{\rm III}$ and $q$-P$_{\rm II}$ 
(see Propositions \ref{prop:qp3_hyper} and \ref{prop:qp2_hyper})
by using the underlying symmetry of an affine Weyl group.
\begin{remark}\rm
The correspondence between the rational solutions 
to $q$-P$_{\rm III}$ (see \cite{K:qp3}) and that to
$q$-P$_{\rm II}$ (see \cite{NKT:qp2}) are straightforward.
It is easily verified that substituting $a_2=q^{1/2}$ into
the former yields the latter.
\end{remark}
\section{Projective reduction from ${\bm q}$-P$_{\rm\bf III}$ to ${\bm q}$-P$_{\rm\bf II}$}
\subsection{Birational representation of $\widetilde{W}((A_2+A_1)^{(1)})$}
We formulate the family of B\"acklund transformations of $q$-P$_{\rm III}$ as a
birational representation of the extended affine Weyl group of type
$(A_2+A_1)^{(1)}$\cite{KK:qp3,KNY:qp4}. 
We refer to \cite{Noumi:book} for basic ideas of this formulation.

We define the transformations $s_i$ ($i=0,1,2$) and $\pi$ on the variables
$f_j$ ($j=0,1,2$) and parameters $a_k$ ($k=0,1,2$) by
\begin{align}
 s_i(a_j) &= a_j{a_i}^{-a_{ij}},
 &&s_i(f_j) = f_j\left(\frac{a_i+f_i}{1+a_if_i}\right)^{u_{ij}},\\
 \pi(a_i) &= a_{i+1},
 &&\pi(f_i) = f_{i+1},
\end{align}
for $i,j\in\mathbb{Z}/3\mathbb{Z}$.
Here the symmetric $3\times 3$ matrix
\begin{equation}
 A=(a_{ij})_{i,j=0}^2
 =\left(\begin{array}{ccc}2&-1&-1\\-1&2&-1\\-1&-1&2\end{array}\right),
\end{equation}
is the Cartan matrix of type $A_2^{(1)}$,
and the skew-symmetric one
\begin{equation}
 U=(u_{ij})_{i,j=0}^2
 =\left(\begin{array}{ccc}0&1&-1\\-1&0&1\\1&-1&0\end{array}\right),
\end{equation}
represents an orientation of the corresponding Dynkin diagram.
We also define the transformations $w_j$ ($j=0,1$) and $r$ by
\begin{align}
 w_0(f_i)
 &=\frac{a_ia_{i+1}(a_{i-1}a_i+a_{i-1}f_i+f_{i-1}f_i)}
  {f_{i-1}(a_ia_{i+1}+a_if_{i+1}+f_if_{i+1})},
 &w_0(a_i)=a_i,\\
 w_1(f_i)
 &=\frac{1+a_if_i+a_ia_{i+1}f_if_{i+1}}
  {a_ia_{i+1}f_{i+1}(1+a_{i-1}f_{i-1}+a_{i-1}a_if_{i-1}f_i)},
 &w_1(a_i) = a_i,\\
 r(f_i)&=\frac{1}{f_i},
 &r(a_i)=a_i,
\end{align}
for $i\in\mathbb{Z}/3\mathbb{Z}$.
\begin{proposition}[\cite{KNY:qp4}]
The group of birational transformations
$\langle s_0,s_1,s_2,\pi, w_0,w_1,r\rangle$
forms the extended affine Weyl group of type $(A_2+A_1)^{(1)}$,
denoted by $\widetilde{W}((A_2+A_1)^{(1)})$. 
Namely, the transformations satisfy the fundamental relations
\begin{equation}
 {s_i}^2=(s_is_{i+1})^3=\pi^3=1,\ 
 \pi s_i = s_{i+1}\pi\
 (i\in\mathbb{Z}/3\mathbb{Z}),\quad
 {w_0}^2={w_1}^2=r^2=1,\ 
 rw_0=w_1r,
\end{equation}
and the actions of $\widetilde{W}(A_2^{(1)})=\langle s_0,s_1,s_2,\pi\rangle$ and 
$\widetilde{W}(A_1^{(1)})=\langle w_0,w_1,r\rangle$ commute with each other.
\end{proposition}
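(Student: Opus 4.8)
The plan is to verify the claimed group relations directly on the generators, treating the $\widetilde{W}(A_2^{(1)})$ part and the $\widetilde{W}(A_1^{(1)})$ part separately and then checking commutativity. For the $\widetilde{W}(A_2^{(1)}) = \langle s_0,s_1,s_2,\pi\rangle$ part, the relations ${s_i}^2 = (s_is_{i+1})^3 = \pi^3 = 1$ and $\pi s_i = s_{i+1}\pi$ are standard for the birational Noumi--Yamada representation attached to the Cartan matrix $A$ and the orientation matrix $U$; I would recall (citing \cite{Noumi:book,KNY:qp4}) that the action $s_i(a_j) = a_j a_i^{-a_{ij}}$ is the reflection representation on the multiplicative root lattice, for which ${s_i}^2 = (s_is_{i+1})^3 = 1$ is automatic from the Cartan matrix being of type $A_2^{(1)}$, and that the key check is the compatibility of the $f$-action $s_i(f_j) = f_j\bigl((a_i+f_i)/(1+a_if_i)\bigr)^{u_{ij}}$ with these relations. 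This reduces to a finite check: $s_i^2 = 1$ on each $f_j$ follows because applying $s_i$ twice sends $f_i \mapsto f_i$ (as $s_i$ fixes $a_i$ only up to $a_i \mapsto a_i^{-1}$, one tracks the Möbius factor carefully) and the braid relation $(s_is_{i+1})^3 = 1$ is a direct, if tedious, rational-function identity in $f_0,f_1,f_2,a_0,a_1,a_2$.

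For the $\widetilde{W}(A_1^{(1)}) = \langle w_0,w_1,r\rangle$ part, I would check ${w_0}^2 = {w_1}^2 = r^2 = 1$ and $rw_0 = w_1r$. Since $w_0,w_1,r$ all fix every $a_i$, this is purely a matter of composing the explicit rational maps on $f_0,f_1,f_2$. The relation $r^2 = 1$ is immediate from $r(f_i) = 1/f_i$. For ${w_0}^2 = 1$ one substitutes the formula for $w_0(f_i)$ into itself and simplifies; the cyclic symmetry in $i \in \mathbb{Z}/3\mathbb{Z}$ means it suffices to do this for one index. Similarly ${w_1}^2 = 1$ and $rw_0 = w_1r$ are single rational-function verifications. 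In practice one can shortcut some of this by noting that $w_1 = rw_0r$ is forced by the desired relation $rw_0 = w_1r$, so one only independently verifies ${w_0}^2 = 1$ and then \emph{derives} ${w_1}^2 = (rw_0r)^2 = r w_0^2 r = 1$ for free; the remaining content is confirming that $rw_0r$ applied to $f_i$ indeed equals the stated formula for $w_1(f_i)$.

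The last task is to show that the two subgroups commute: every generator of $\widetilde{W}(A_2^{(1)})$ commutes with every generator of $\widetilde{W}(A_1^{(1)})$. Because $w_j$ and $r$ do not touch the parameters $a_i$, and $s_i,\pi$ act on $a_i$ independently of $f$, there is no obstruction on the parameter side; the content is entirely on the $f$-variables. For $\pi$ this is essentially visible from the cyclic symmetry: $\pi$ shifts all indices $i \mapsto i+1$, and the formulas for $w_0,w_1,r$ are manifestly equivariant under this cyclic shift, so $\pi w_j = w_j \pi$ and $\pi r = r\pi$ follow by inspection. The genuinely computational commutations are $s_i w_0 = w_0 s_i$, $s_i w_1 = w_1 s_i$, and $s_i r = r s_i$; by the cyclic covariance under $\pi$ it is enough to do these for a single value of $i$, say $i = 0$, which again amounts to checking equality of two explicit rational functions of $f_0,f_1,f_2,a_0,a_1,a_2$.

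The main obstacle is the sheer size of the braid relation $(s_is_{i+1})^3 = 1$ and the mixed commutations $s_i w_0 = w_0 s_i$: after composing these Möbius-type substitutions the intermediate expressions are large rational functions whose simplification, while purely mechanical, is error-prone and does not lend itself to a short hand computation — in a written proof one would either delegate it to a computer algebra check or organize it so that the cyclic $\mathbb{Z}/3\mathbb{Z}$-symmetry and the factorization of each $s_i$-action into a single Möbius transformation in $f_i$ cut the work down to the minimum number of genuinely distinct identities. I expect the author to invoke \cite{KNY:qp4} for the bulk of this and only highlight the structural points (cyclic symmetry, parameter-independence of the $A_1^{(1)}$ part) that make the commutativity transparent.
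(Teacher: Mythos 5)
Your plan is sound, but note that the paper itself offers no proof of this proposition: it is quoted verbatim from \cite{KNY:qp4}, and the direct verification you outline (checking ${s_i}^2=1$ via cancellation of the two reciprocal M\"obius factors, the braid relations, $w_1=rw_0r$, and the commutativity of the two subgroups, with the cyclic $\mathbb{Z}/3\mathbb{Z}$-equivariance cutting the checks down to one representative per orbit) is exactly the standard argument carried out in that reference. One small reassurance: since the proposition explicitly glosses ``forms the extended affine Weyl group'' as ``the transformations satisfy the fundamental relations'' together with the commutativity of the two factors, verifying those identities is all that is being claimed, so no separate faithfulness argument is required.
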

In general, for a function $F=F(a_i,f_j)$, we let an element
$w\in\widetilde{W}((A_2+A_1)^{(1)})$
act as $w.F(a_i,f_j)=F(a_i.w,f_j.w)$, that is, $w$
acts on the arguments from the right.  Note that $a_0a_1a_2=q$ 
and $f_0f_1f_2=qc^2$ are invariant under the actions of 
$\widetilde{W}((A_2+A_1)^{(1)})$ 
and $\widetilde{W}(A_2^{(1)})$,
respectively. We define the translations $T_i$ ($i=1,2,3,4$) by
\begin{equation}
 T_1=\pi s_2s_1,\quad
 T_2=s_1\pi s_2,\quad
 T_3=s_2s_1\pi,\quad
 T_4=rw_0,
\end{equation}
whose actions on parameters $a_i$ $(i=0,1,2)$ and $c$ are given by
\begin{equation}
 \begin{array}{l}
  T_1:~(a_0,a_1,a_2,c)\mapsto(qa_0,q^{-1}a_1,a_2,c),\\
  T_2:~(a_0,a_1,a_2,c)\mapsto(a_0,qa_1,q^{-1}a_2,c),\\
  T_3:~(a_0,a_1,a_2,c)\mapsto(q^{-1}a_0,a_1,qa_2,c),\\
  T_4:~(a_0,a_1,a_2,c)\mapsto(a_0,a_1,a_2,qc).
\end{array}
\end{equation}
Note that $T_i$ ($i=1,2,3,4$) commute with each other and $T_1T_2T_3=1$.
The action of $T_1$ on $f$-variables can be expressed as
\begin{equation}\label{qp3:eqn2}
 T_1(f_1)=\frac{qc^2}{f_1f_0}~\frac{1+a_0f_0}{a_0+f_0},\quad
 T_1(f_0)=\frac{qc^2}{f_0T_1(f_1)}~\frac{1+a_2a_0T_1(f_1)}{a_2a_0+T_1(f_1)}.
\end{equation}
Or, applying ${T_1}^n{T_2}^m{T_4}^N$ $(n,m,N\in\mathbb{Z})$ 
on (\ref{qp3:eqn2}) and putting
\begin{equation}
 f_{i,N}^{n,m}={T_1}^n{T_2}^m{T_4}^N(f_i)\quad (i=0,1,2), 
\end{equation}
we obtain
\begin{equation}\label{qp3:eqn3}
 f_{1,N}^{n+1,m}=\frac{q^{2N+1}c^2}{f_{1,N}^{n,m}f_{0,N}^{n,m}}
  ~\frac{1+a_0q^nf_{0,N}^{n,m}}{a_0q^n+f_{0,N}^{n,m}},\quad
 f_{0,N}^{n+1,m}=\frac{q^{2N+1}c^2}{f_{0,N}^{n,m}f_{1,N}^{n+1}}
  ~\frac{1+a_2a_0q^{n-m}f_{1,N}^{n+1,m}}{a_2a_0q^{n-m}+f_{1,N}^{n+1,m}},
\end{equation}
which is equivalent to $q$-P$_{\rm III}$, (\ref{qp3:eqn}).
Then $T_1$ and $T_i$ ($i=2,4$) are regarded as 
the time evolution and B\"acklund 
transformations of $q$-P$_{\rm III}$, respectively.

In order to formulate the symmetrization to $q$-P$_{\rm II}$, 
it is crucial to introduce the transformation $R_1$ defined by
\begin{equation}
 R_1=\pi^2 s_1,
\end{equation}
which satisfies
\begin{equation}
 {R_1}^2=T_1.
\end{equation}
The actions of $R_1$ are given by 
\begin{align}
 &R_1:~(a_0,a_1,a_2,c)\mapsto(a_2a_0,{a_0}^{-1},a_1a_0,c),\\
 &R_1(f_0) = \frac{qc^2}{f_0f_1}~\frac{1+a_0f_0}{a_0+f_0},\quad
 R_1(f_1) = f_0,
\end{align}
which describe the zig-zag motion around the line 
$a_2=q^{1/2}$ on the parameter space.
However, if we put $a_2=q^{1/2}$, 
then $R_1$ becomes the translation on the line
$a_2=q^{1/2}$ with the step $q^{1/2}$ 
(see Figure \ref{fig:R1}). In fact, the actions
of $R_1$ are now given by
\begin{align}
 &R_1:~(a_0,a_1,c)\mapsto(q^{1/2}a_0,q^{-1/2}a_1,c),\\
 &R_1(f_0) = \frac{qc^2}{f_0f_1}~\frac{1+a_0f_0}{a_0+f_0},\quad
 R_1(f_1) = f_0.\label{qp2:eqn2}
\end{align}
Applying ${R_1}^k{T_4}^N$ on (\ref{qp2:eqn2}) and putting 
\begin{equation}
 f_{i,N}^{k}={R_1}^k{T_4}^N(f_i)\quad (i=0,1,2), 
\end{equation}
we have
\begin{equation}\label{qp2:eqn3}
 f_{0,N}^{k+1} = \frac{q^{2N+1}c^2}{f_{0,N}^{k} f_{0,N}^{k-1}}~
 \frac{1+a_0q^{k/2} f_{0,N}^{k}}{a_0q^{k/2} + f_{0,N}^{k}},
\end{equation}
which is equivalent to $q$-P$_{\rm II}$, (\ref{qp2:eqn}).
Then $R_1$ and $T_4$ are regarded as the time evolution and the B\"acklund
transformation of $q$-P$_{\rm II}$, respectively.
\begin{figure}[h]
\begin{center}
\includegraphics[width=0.4\textwidth]{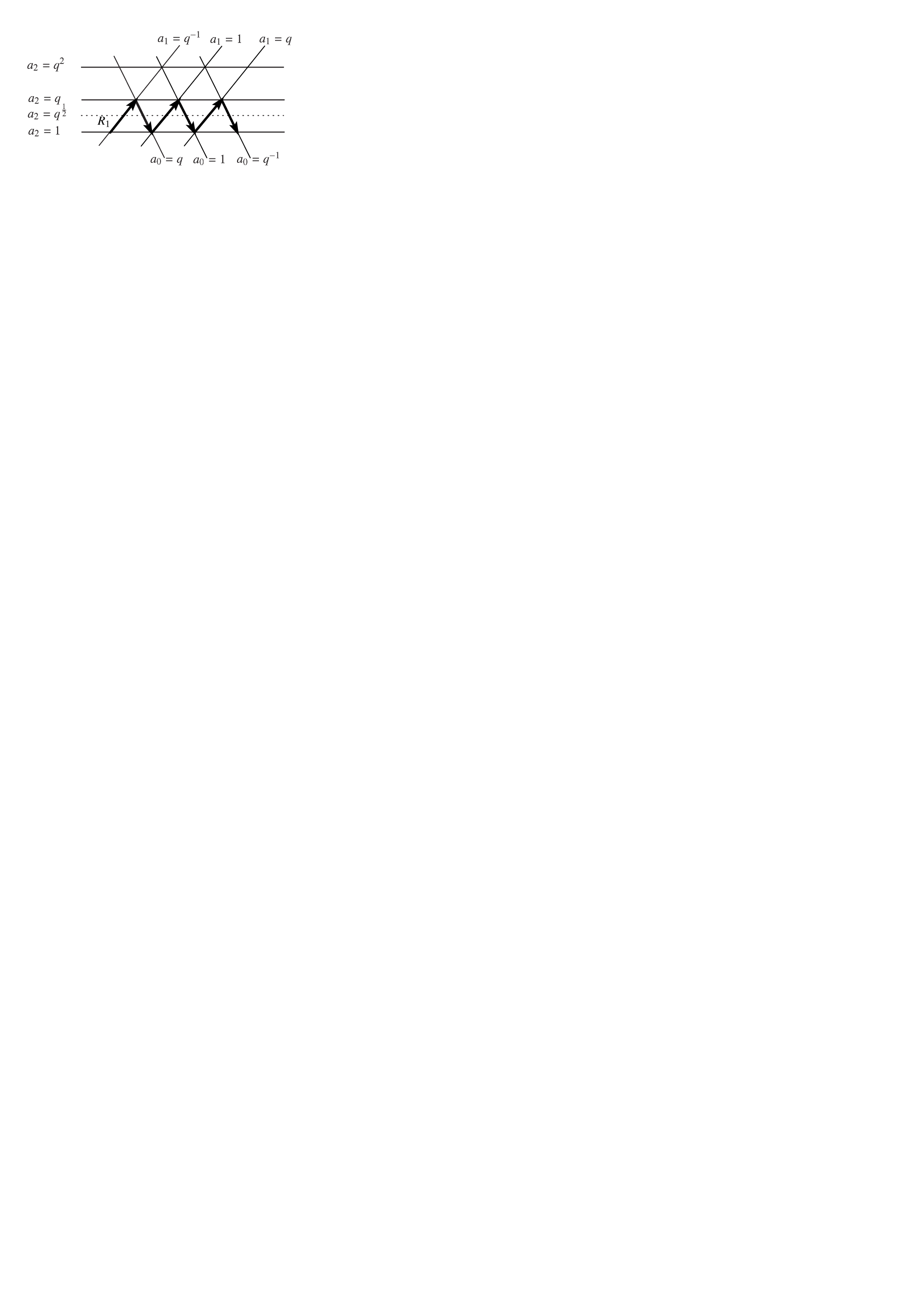}\hspace{3em}
\includegraphics[width=0.4\textwidth]{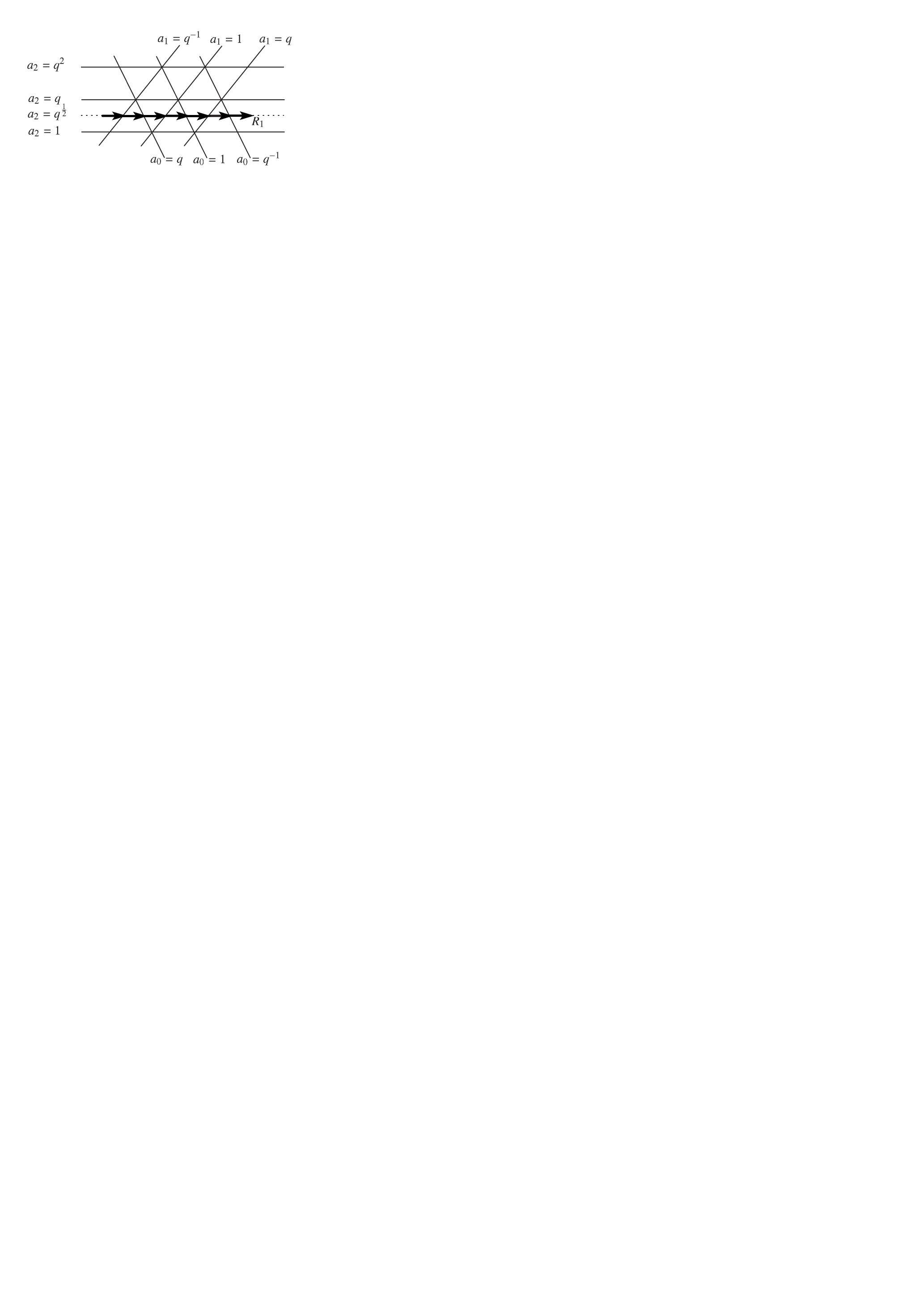}
\caption{Action of $R_1$ on the parameter space 
$\bm{a}=(a_0,a_1,a_2)\in(\mathbb{C}^{\times})^3$ with $a_0a_1a_2=q$.
Left: generic case. Right: $a_2=q^{1/2}$.}
\label{fig:R1}
\end{center}
\end{figure}

In general, it is possible to obtain various discrete dynamical 
systems of Painlev\'e type from
elements of infinite order that are not necessarily 
translations in the affine Weyl group by taking a projection on an appropriate
sublattice of corresponding root lattice. 
We call such a procedure a {\it projective reduction}.

By using the above formulation, we can now explain why the difference of
hypergeometric solutions to $q$-P$_{\rm III}$ and that to $q$-P$_{\rm II}$ occurs.
\subsection{Hypergeometric functions}
First, we explain about the difference of hypergeometric functions.
For convenience, we define the function $H_{n,m}$ by 
\begin{equation}
 H_{n,m}=\Theta({a_0}^2q^{2n+1};q^2)F_{n+\frac{1}{2},m},
\end{equation}
where $F_{n,m}$ is given in Remark \ref{remark:qp3_hyper}.
Then, we obtain from (\ref{qp3:3-term}) with 
$a_2=q^{1/2}$ the three-term relation for $H_{n,0}$:
\begin{equation}\label{qp3:3-term_2}
\begin{split}
 &H_{n+2,0}+\left({a_0}^{-2}q^{-2n-3}
 +{a_0}^{-2}q^{-2n-2}-1\right)H_{n+1,0}
 +{a_0}^{-4}q^{-4n-3}H_{n,0}\\
 &=\left[{T_1}^{n+2}+\left({a_0}^{-2}q^{-2n-3}
 +{a_0}^{-2}q^{-2n-2}-1\right){T_1}^{n+1}
 +{a_0}^{-4}q^{-4n-3}{T_1}^{n}\right]H_{0,0}=0.
\end{split}
\end{equation}
Since ${R_1}^2=T_1$, (\ref{qp3:3-term_2}) is
a fourth order difference equation  for 
$H_{n,0}={T_1}^n(H_{0,0})$ with respect to $R_1$.
Moreover, it admits the following factorization into two
linear difference operators:
\begin{equation}\label{qp2:factorization}
\begin{split}
 &{T_1}^{n+2}
 +\left({a_0}^{-2}q^{-2n-3}+{a_0}^{-2}q^{-2n-2}-1\right){T_1}^{n+1}
 +q^{-3-4n}{a_0}^{-4}{T_1}^{n}\\
 &=\left({R_1}^{n+3}+{R_1}^{n+2}+{a_0}^{-2}q^{-2n-2}{R_1}^{n+1}\right)
 \left({R_1}^{n+1}-{R_1}^{n}+{a_0}^{-2}q^{-n}{R_1}^{n-1}\right).
\end{split}
\end{equation}
On the other hand, the three-term relation for $G_n={R_1}^n(G_0)$ 
(see (\ref{qp2:3-term})) can be expressed as
\begin{equation}\label{qp2:3-term_2}
 \left({R_1}^{n+1}-{R_1}^n+{a_0}^{-2}q^{-n}{R_1}^{n-1}\right)G_0=0.
\end{equation}
Note that the second factor in the right-hand side of 
(\ref{qp2:factorization}) is exactly the operator in
(\ref{qp2:3-term_2}), thus, $G_n$ also satisfies (\ref{qp3:3-term_2}).
\subsection{Determinant structure}
Next, in order to discuss the difference of determinant structures, 
we need to introduce the $\tau$
functions and lift the representation to the Weyl group on the level of $\tau$
functions\cite{KNY:qp4,Tsuda:tau_qp34}. 
We introduce the new variables $\tau_i$ and
$\overline{\tau}_i$ ($i\in\mathbb{Z}/3\mathbb{Z}$) with
\begin{equation}
 f_i=q^{1/3}c^{2/3}
 \cfrac{\overline{\tau}_{i+1}\tau_{i-1}}{\tau_{i+1}\overline{\tau}_{i-1}}.
\end{equation}
\begin{proposition}[\cite{Tsuda:tau_qp34}]\label{prop:action_tau}
We define the action of $s_i$ $(i=0,1,2)$, 
$\pi$, $w_j$ $(j=0,1)$, and $r$ on $\tau_k$ and
$\overline{\tau}_k$ $(k=0,1,2)$ by the following formulae$:$
\begin{equation}
 \left\{\begin{array}{l}\medskip
 {\displaystyle 
 s_i(\tau_i)=
  \frac{u_i\tau_{i+1}\overline{\tau}_{i-1}+\overline{\tau}_{i+1}\tau_{i-1}}
  {{u_i}^{1/2}\overline{\tau}_i},\quad
 s_i(\tau_j)=\tau_j\quad (i\neq j),}\\
 {\displaystyle 
 s_i(\overline{\tau}_i)=
  \frac{v_i\overline{\tau}_{i+1}\tau_{i-1}+\tau_{i+1}\overline{\tau}_{i-1}}
  {{v_i}^{1/2}\tau_i},\quad
 s_i(\overline{\tau}_j)=\overline{\tau}_j\quad (i\neq j),}
 \end{array}\right.
\end{equation}
\begin{equation}
 \pi(\tau_i)=\tau_{i+1},\quad 
 \pi(\overline{\tau}_i)=\overline{\tau}_{i+1},
\end{equation}
\begin{equation}
\left\{
\begin{array}{l}\medskip
 {\displaystyle
 w_0(\overline{\tau}_i)= 
  \frac{{a_{i+1}}^{1/3}(\overline{\tau}_i\tau_{i+1}\tau_{i+2}
  +u_{i-1}\tau_i\overline{\tau}_{i+1}\tau_{i+2}
  +{u_{i+1}}^{-1}\tau_i\tau_{i+1}\overline{\tau}_{i+2})}
  {{a_{i+2}}^{1/3}\overline{\tau}_{i+1}\overline{\tau}_{i+2}},}\\
 {\displaystyle w_0(\tau_i) = \tau_i,}
\end{array}\right.
\end{equation}
\begin{equation}
 \left\{\begin{array}{l}\medskip
 {\displaystyle 
 w_1(\tau_i) =
  \frac{{a_{i+1}}^{1/3}(\tau_i\overline{\tau}_{i+1}\overline{\tau}_{i+2}
   + v_{i-1}\overline{\tau}_i\tau_{i+1}\overline{\tau}_{i+2}
   + {v_{i+1}}^{-1}\overline{\tau}_i\overline{\tau}_{i+1}\tau_{i+2})}
  {{a_{i+2}}^{1/3}\tau_{i+1}\tau_{i+2}},}\\
 {\displaystyle w_1(\overline{\tau}_i) = \overline{\tau}_i,}
\end{array}\right.
\end{equation}
\begin{equation}
 r(\tau_i) = \overline{\tau}_i,\quad
 r(\overline{\tau}_i) = \tau_i,
\end{equation}
with
\begin{equation}
 u_i = q^{-1/3}c^{-2/3}a_i,\quad
 v_i = q^{1/3}c^{2/3}a_i,
\end{equation}
where $i,j\in\mathbb{Z}/3\mathbb{Z}$.
Then, $\langle s_0,s_1,s_2,\pi,w_0,w_1,r\rangle$ realizes the affine Weyl group
$\widetilde{W}((A_2+A_1)^{(1)})$.
\end{proposition}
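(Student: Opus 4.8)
The statement to be verified is Proposition~\ref{prop:action_tau}: that the given formulae for the action of $s_i$, $\pi$, $w_j$, $r$ on $\tau_k$, $\overline{\tau}_k$ realize $\widetilde{W}((A_2+A_1)^{(1)})$. The plan is to proceed in two stages. First I would check that the prescribed $\tau$-level action is \emph{consistent} with the already-established birational action on the $f$-variables and $a$-variables, i.e. that substituting $f_i=q^{1/3}c^{2/3}\overline{\tau}_{i+1}\tau_{i-1}/(\tau_{i+1}\overline{\tau}_{i-1})$ intertwines the two representations. For each generator this is a direct computation: apply the $\tau$-formula, form the corresponding ratio of $\tau$'s and $\overline{\tau}$'s, and simplify using $a_0a_1a_2=q$, $f_0f_1f_2=qc^2$, and the definitions $u_i=q^{-1/3}c^{-2/3}a_i$, $v_i=q^{1/3}c^{2/3}a_i$; the result should match the stated action on $f_i$. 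For $s_i$ one must also track the scalar prefactors $u_i^{1/2}$, $v_i^{1/2}$ so that the factor $(a_i+f_i)/(1+a_if_i)$ in $s_i(f_j)$ comes out correctly, and check $s_i(a_j)=a_j a_i^{-a_{ij}}$ is reproduced by how $u_i,v_i$ transform.

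Second, I would verify the defining relations of $\widetilde{W}((A_2+A_1)^{(1)})$ directly at the $\tau$-level: $s_i^2=1$, $(s_is_{i+1})^3=1$, $\pi^3=1$, $\pi s_i=s_{i+1}\pi$, $w_0^2=w_1^2=r^2=1$, $rw_0=w_1r$, and commutativity of the $A_2^{(1)}$ part with the $A_1^{(1)}$ part. The relations involving $\pi$ and $r$ are essentially combinatorial (cyclic/transposition relabelling of indices) and immediate. The involutivity $s_i^2=1$ requires showing that applying the $s_i(\tau_i)$ formula twice returns $\tau_i$; here the scalar normalization by $u_i^{1/2}$ is exactly what is needed, together with the fact that $s_i$ sends $u_i\mapsto u_i^{-1}$ (from $s_i(a_i)=a_i^{-1}$), so the two-step composition telescopes. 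The braid relation $(s_is_{i+1})^3=1$ is the most laborious: it is a rational-function identity in six variables $\tau_k,\overline{\tau}_k$ (with parameters), and I would reduce it to the already-known braid relation on the $f$-variables plus a check that the accumulated scalar prefactors cancel, rather than expanding everything by brute force.

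The main obstacle I anticipate is precisely the bookkeeping of the square-root scalar factors $u_i^{1/2}$, $v_i^{1/2}$, $a_i^{1/3}$ appearing in the $s_i$- and $w_j$-formulae: one must confirm that, although individual $\tau_k$ carry fractional-power ambiguities, every relation of the group closes on the nose (no residual scalar), and that the consistency with $f_i$ holds since $f_i$ involves only ratios in which these factors cancel. A clean way to organize this is to observe that the $\tau$-representation is determined up to an overall rescaling of each $\tau_k,\overline{\tau}_k$ by the $f$-representation together with the Toda-type bilinear equations the $\tau$'s must satisfy; then the verification reduces to fixing one normalization consistently. Since Proposition~\ref{prop:action_tau} is quoted from \cite{Tsuda:tau_qp34}, I would in practice either cite that reference for the full relation-check or, if a self-contained argument is wanted, carry out the $s_0$ case in detail (the others following by the $\pi$-conjugation $\pi s_i=s_{i+1}\pi$) and the $w_0$ case in detail (with $w_1=rw_0r$), leaving the remaining generators to symmetry.
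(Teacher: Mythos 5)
The paper offers no proof of this proposition: it is imported verbatim from \cite{Tsuda:tau_qp34}, so there is no in-text argument to compare against, and your closing remark that one would in practice cite that reference for the full relation check is exactly what the authors do. That said, your two-stage plan --- first checking that the substitution $f_i=q^{1/3}c^{2/3}\overline{\tau}_{i+1}\tau_{i-1}/(\tau_{i+1}\overline{\tau}_{i-1})$ intertwines the $\tau$-level action with the known birational action on $(f_j,a_k)$, then verifying the fundamental relations directly on the $\tau$-variables --- is the standard route and is essentially what the cited reference carries out. You are also right to insist that the $f$-level consistency alone cannot settle the group relations, since $f_i$ only sees ratios in which the scalar prefactors cancel; the relations must close on the nose at the $\tau$-level, which is the real content to be checked.

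One concrete slip in your sketch: you assert that $s_i$ sends $u_i\mapsto u_i^{-1}$ as a consequence of $s_i(a_i)=a_i^{-1}$. In fact $s_i(u_i)=q^{-1/3}c^{-2/3}a_i^{-1}=v_i^{-1}$ and $s_i(v_i)=q^{1/3}c^{2/3}a_i^{-1}=u_i^{-1}$, so the involution crosses $u_i$ and $v_i$ rather than inverting each separately. It is precisely this crossing, combined with the asymmetric normalizations $u_i^{1/2}$ in $s_i(\tau_i)$ and $v_i^{1/2}$ in $s_i(\overline{\tau}_i)$, that makes the composition telescope: one finds
$s_i^2(\tau_i)=\bigl(v_i^{-1}\tau_{i+1}\overline{\tau}_{i-1}+\overline{\tau}_{i+1}\tau_{i-1}\bigr)v_i^{1/2}\tau_i\big/\bigl(v_i^{-1/2}(v_i\overline{\tau}_{i+1}\tau_{i-1}+\tau_{i+1}\overline{\tau}_{i-1})\bigr)=\tau_i$,
and symmetrically for $\overline{\tau}_i$. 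With that correction the involutivity closes as you describe, and your remaining reductions (obtaining $s_1,s_2$ from $s_0$ by $\pi$-conjugation, $w_1=rw_0r$, and reducing the braid relation to the known $f$-level one plus a scalar bookkeeping check) are a sound and economical way to organize the verification.
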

\begin{figure}[h]
\begin{center}
\includegraphics[width=0.6\textwidth]{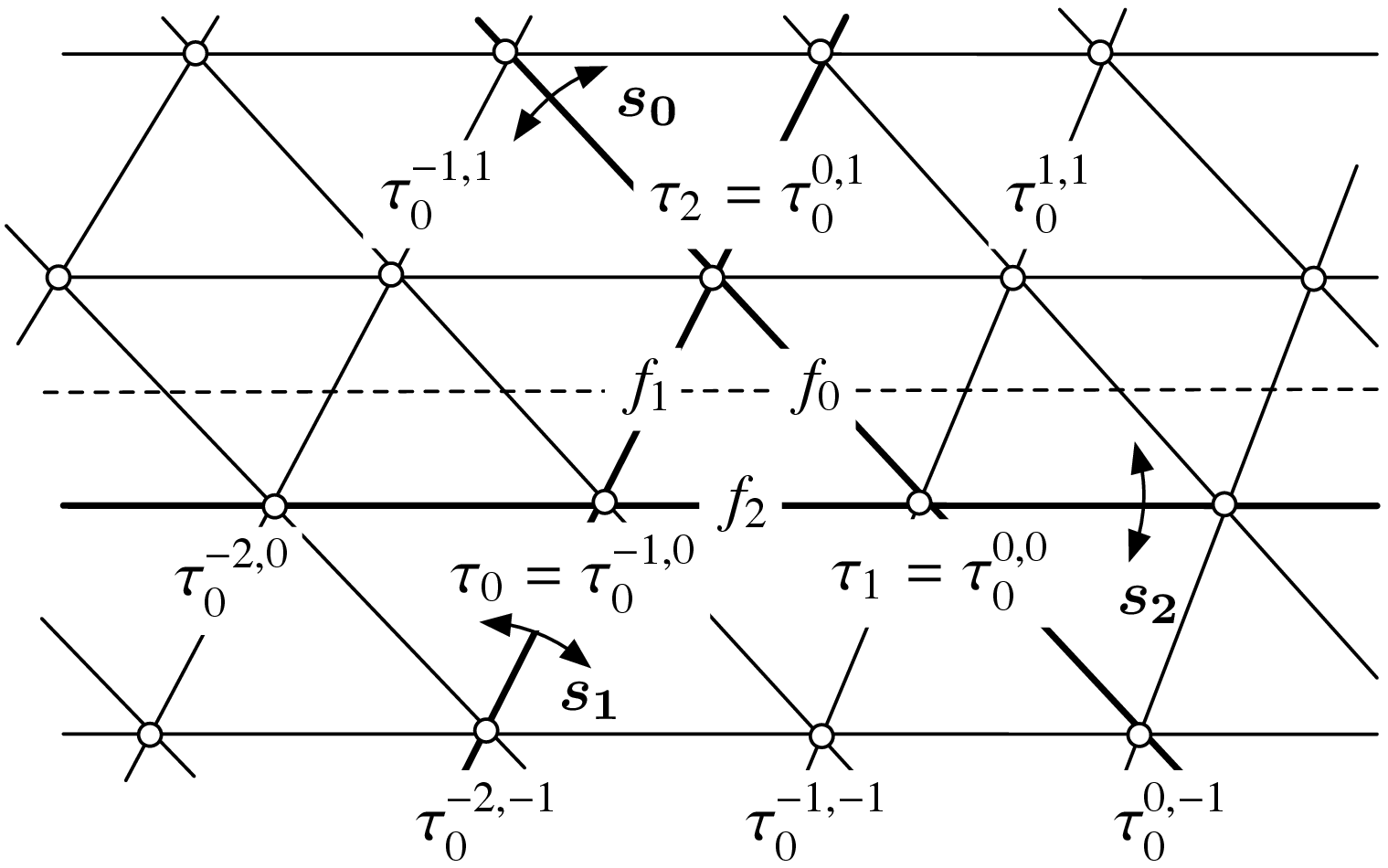} \quad
\raise40pt\hbox{\includegraphics[width=0.2\textwidth]{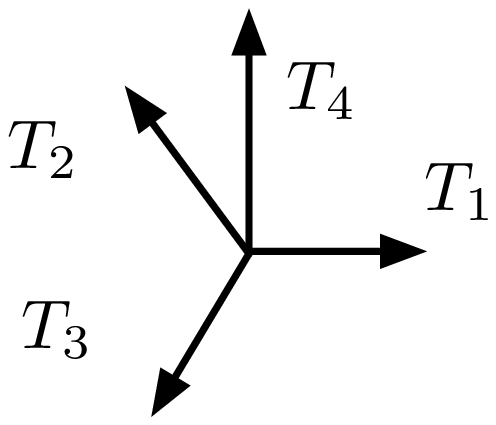}}
\end{center}
\caption{Configuration of the $\tau$ functions on the lattice with $N=0$.}
\end{figure}
Then, we define the $\tau$ functions $\tau^{n,m}_{N}$ ($n, m, N \in\bm{Z}$) by
\begin{equation}\label{notation:tau}
 \tau^{n,m}_{N}={T_1}^n{T_2}^m{T_4}^N(\tau_1).
\end{equation}
We note that $\tau_0=\tau^{-1,0}_{0}$,
$\tau_1=\tau^{0,0}_{0}$, $\tau_2=\tau^{0,1}_{0}$, $\overline{\tau}_0=\tau^{-1,0}_{1}$,
$\overline{\tau}_1=\tau^{0,0}_{1}$, and $\overline{\tau}_2=\tau^{0,1}_{1}$.
\begin{proposition}
The action of $\widetilde{W}((A_2+A_1)^{(1)})$ on $\tau^{n,m}_{N}$ is
\begin{align}
 &s_0(\tau^{n,m}_{N})=\tau^{-n,m-n}_{N},\quad
 s_1(\tau^{n,m}_{N})=\tau^{m-1,n+1}_{N},\quad
 s_2(\tau^{n,m}_{N})=\tau^{n-m,-m}_{N},\quad
 \pi(\tau^{n,m}_{N})=\tau^{-m,n-m+1}_{N},\\
 &w_0(\tau^{n,m}_{N})=\tau^{n,m}_{-N},\quad
 w_1(\tau^{n,m}_{N})=\tau^{n,m}_{2-N},\quad
 r(\tau^{n,m}_{N})=\tau^{n,m}_{1-N}.
\end{align}
\end{proposition}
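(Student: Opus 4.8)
The plan is to compute directly how each generator of $\widetilde{W}((A_2+A_1)^{(1)})$ conjugates the translations $T_1$, $T_2$, $T_4$, and then read off the action on $\tau^{n,m}_N = T_1^n T_2^m T_4^N(\tau_1)$ from Propositions stated above. The key observation is that for a generator $w$ and any word $T = T_1^n T_2^m T_4^N$, we have $w(T(\tau_1)) = (wTw^{-1})\bigl(w(\tau_1)\bigr)$, so everything reduces to (a) knowing $w(\tau_1)$ and (b) knowing the conjugate $wTw^{-1}$ expressed again as a monomial in $T_1,T_2,T_4$. Step (a) is immediate for $\pi$, $w_0$, $w_1$, $r$ from Proposition~\ref{prop:action_tau} together with the identifications $\tau_1 = \tau^{0,0}_0$, $\overline{\tau}_1 = \tau^{0,0}_1$; for $s_0,s_1,s_2$ one checks that $s_i$ fixes $\tau_1$ unless $i=1$, and that $s_1(\tau_1)$ still needs to be reached — but here one uses instead that $s_1 = \pi^{-1} T_1 \pi^{-1}\cdot(\text{correction})$, or more simply reorganizes so that $s_i$ is always applied to a $\tau^{n,m}_N$ for which the relevant relation gives a clean shift. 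The cleanest route for the $s_i$ is to use the defining relations $T_1 = \pi s_2 s_1$, $T_2 = s_1\pi s_2$, $T_3 = s_2 s_1 \pi$ and $\pi s_i = s_{i+1}\pi$, which lets one rewrite $s_i T_1^n T_2^m T_4^N$ by pushing the $s_i$ through.

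Concretely, I would first record the conjugation action of $s_i$ and $\pi$ on the lattice $\mathbb{Z}^2$ of $(n,m)$-shifts. Since $T_1T_2T_3 = 1$ and the $T_j$ all commute, the pair $(T_1,T_2)$ generates the translation lattice $Q$ of $A_2^{(1)}$, and $\langle s_0,s_1,s_2,\pi\rangle$ acts on $Q$ through the finite Weyl group $W(A_2)\rtimes\langle\pi\rangle \cong S_3\times\mathbb{Z}/3$; that linear action on $\mathbb{Z}^2$ is exactly what produces the affine-linear formulae $\tau^{n,m}_N\mapsto \tau^{-n,m-n}_N$ etc. I would pin down the $2\times 2$ matrices: from $s_0(a_0,a_1,a_2)=(a_0^{-1},a_2a_0,a_1a_0)$ and the known action of $T_1,T_2$ on the $a_i$, one gets that $s_0$ conjugates $(T_1,T_2)\mapsto(T_1^{-1},T_1^{-1}T_2^{-1}) $ or the appropriate variant, from which $s_0(T_1^nT_2^m\tau_1)=T_1^{-n}T_2^{m-n}(s_0\tau_1)$; since $s_0$ fixes $\tau_1$ and commutes with $T_4$, this gives $s_0(\tau^{n,m}_N)=\tau^{-n,m-n}_N$. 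The same bookkeeping with $s_1$ (here $s_1\tau_1\ne\tau_1$, but $s_1\tau_1 = \tau^{-1,1}_0$ can be obtained from $T_1=\pi s_2 s_1$ and $\pi(\tau_1)=\tau_2=\tau^{0,1}_0$) yields the shift $(n,m)\mapsto(m-1,n+1)$, and likewise for $s_2$ and $\pi$.

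For the second line, the generators $w_0, w_1, r$ lie in the commuting $\widetilde{W}(A_1^{(1)})$ factor, so they commute with $T_1,T_2$ and hence fix the $(n,m)$-labels entirely; only the $N$-index moves. From $T_4 = r w_0$ and $r^2 = w_0^2 = 1$, $r w_0 = w_1 r$ one computes the conjugation of $T_4$: $w_0 T_4 w_0 = w_0 r = r^{-1}(rw_0r^{-1})r\cdot(\dots)$, more simply $w_0 T_4^N w_0 = (w_0 r w_0)^N w_0^{?}$ — carrying this out gives $w_0(\tau^{n,m}_N)=\tau^{n,m}_{-N}$ since $w_0$ sends $\overline{\tau}_i\mapsto$(something) but fixes $\tau_i$, and conjugation by $w_0$ inverts the $T_4$-direction; similarly $r$ maps $\tau_1\leftrightarrow\overline{\tau}_1$, i.e. $\tau^{0,0}_0\leftrightarrow\tau^{0,0}_1$, and conjugates $T_4\mapsto T_4^{-1}$, giving the reflection $N\mapsto 1-N$; and $w_1 = rw_0 r = T_4 r$ composes these to $N\mapsto 2-N$. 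I expect the main obstacle to be sign/direction bookkeeping: getting the orientation of each reflection on the lattice right (which of $T_1^{\pm}$ appears, and whether $s_1$ acts as $(n,m)\mapsto(m-1,n+1)$ rather than some other affine variant) requires carefully tracking the action on parameters through the non-commuting relations $\pi s_i = s_{i+1}\pi$ and the cocycle that turns the linear $S_3$-action into an affine action on $(n,m)$ — the constant shifts ($-1$ in $s_1$, $+1$ in the $m$-slot, the $+1$ in $\pi$) come precisely from $\tau_1$ not being at the origin of the natural lattice, i.e. from $\tau_1 = T_1^0 T_2^0 T_4^0(\tau_1)$ versus $\pi(\tau_1)=\tau_2=\tau^{0,1}_0$. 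Once those constants are fixed by evaluating on the base cases $n=m=N=0$ and on $\pi,s_i$ applied once, the general formula follows by the homomorphism property.
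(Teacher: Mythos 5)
The paper states this proposition without proof, and your conjugation scheme $w\bigl(T(\tau_1)\bigr)=(wTw^{-1})\bigl(w(\tau_1)\bigr)$ — with the conjugates of $T_1,T_2,T_4$ read off from $T_1=\pi s_2s_1$, $T_2=s_1\pi s_2$, $T_3=T_1^{-1}T_2^{-1}$, $\pi s_i=s_{i+1}\pi$, $T_4=rw_0$, and the base values $w(\tau_1)$ taken from Proposition~3.3 — is exactly the intended justification and it checks out: e.g. $s_0T_1s_0=T_3$, $s_0T_2s_0=T_2$ give $\tau^{-n,m-n}_N$; $s_1T_1s_1=T_2$, $s_1T_2s_1=T_1$ together with $s_1(\tau_1)=T_1^{-1}T_2(\tau_1)=\tau^{-1,1}_0$ (since $T_1^{-1}T_2=s_1s_2s_0s_2$ and $s_0,s_2$ fix $\tau_1$) give $\tau^{m-1,n+1}_N$; and $rT_4r=w_0r=T_4^{-1}$ with $r(\tau_1)=\overline{\tau}_1=\tau^{0,0}_1$ gives $N\mapsto 1-N$, whence $w_0$, $w_1=T_4r$ give $-N$ and $2-N$. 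The only blemish is the aside identifying the image of $\langle s_0,s_1,s_2,\pi\rangle$ on the translation lattice as $S_3\times\mathbb{Z}/3$ (the rotation $\pi$ already acts as a $3$-cycle inside $W(A_2)\cong S_3$, so no direct factor appears), but this plays no role in the actual bookkeeping, which is correct.
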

For convenience, we put
\begin{equation}\label{notation:parameters}
 \alpha_i={a_i}^{1/6},\quad
 \gamma=c^{1/6},\quad Q=q^{1/6}.
\end{equation}
Though it is possible to derive more various 
bilinear difference equations from Proposition
\ref{prop:action_tau}, we present here only the equations 
that are directly relevant to $q$-P$_{\rm III}$, (\ref{qp3:eqn2}).
\begin{proposition}\label{prop:qp3_bl}
 The following bilinear equations hold$:$
\begin{align}
 &\tau^{n,m}_{N+1}\tau^{n+1,m+1}_{N}
  -Q^{-3n+3m+2N-2}\gamma^{2}{\alpha_1}^{3}\tau^{n+1,m}_{N}\tau^{n,m+1}_{N+1}
  +Q^{-6n+6m+4N-4}\gamma^{4}{\alpha_1}^{6}\tau^{n,m}_{N}\tau^{n+1,m+1}_{N+1}=0,
 \label{qp3_bl:typeVII_1}\\
 &\tau^{n+1,m+1}_{N+1}\tau^{n+1,m}_{N}
  -Q^{3n+2N+4}\gamma^{2}{\alpha_0}^{3}\tau^{n,m}_{N}\tau^{n+2,m+1}_{N+1}
  +Q^{6n+4N+8}\gamma^{4}{\alpha_0}^{6}\tau^{n+1,m+1}_{N}\tau^{n+1,m}_{N+1}=0,
 \label{qp3_bl:typeVII_3}\\
 &\tau^{n+1,m+1}_{N+1}\tau^{n,m}_{N}
  -Q^{-3n+3m-2N-4}\gamma^{-2}{\alpha_1}^{3}\tau^{n+1,m}_{N+1}\tau^{n,m+1}_{N}
  +Q^{-6n+6m-4N-8}\gamma^{-4}{\alpha_1}^{6}\tau^{n+1,m+1}_{N}\tau^{n,m}_{N+1}=0,
 \label{qp3_bl:typeVIII_1}\\
 &\tau^{n+1,m}_{N+1}\tau^{n+1,m+1}_{N}
  -Q^{3n-2N+2}\gamma^{-2}{\alpha_0}^{3}\tau^{n,m}_{N+1}\tau^{n+2,m+1}_{N}
  +Q^{6n-4N+4}\gamma^{-4}{\alpha_0}^{6}\tau^{n+1,m}_{N}\tau^{n+1,m+1}_{N+1}=0,
 \label{qp3_bl:typeVIII_3}\\
 &\tau^{n,m}_{N+1}\tau^{n,m}_{N-1}
  +Q^{-8n+4m-4}{\alpha_0}^{-4}{\alpha_1}^{4}\left(\tau^{n,m}_{N}\right)^2
  -Q^{-2n+m-1}{\alpha_0}^{-1}\alpha_1\tau^{n+1,m}_{N}\tau^{n-1,m}_{N}=0.
 \label{qp3_bl:typeI_3}
\end{align}
\end{proposition}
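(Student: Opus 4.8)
The plan is to derive all five bilinear relations in Proposition \ref{prop:qp3_bl} from Proposition \ref{prop:action_tau} by translating the action of the simple reflections into three-term relations among neighbouring $\tau$ functions, and then pushing those relations around the $T_1^nT_2^mT_4^N$-lattice using the last Proposition's explicit formulae for $s_i(\tau^{n,m}_N)$, $\pi(\tau^{n,m}_N)$, $w_j(\tau^{n,m}_N)$, $r(\tau^{n,m}_N)$. Concretely, the defining formula $s_i(\tau_i)\,{u_i}^{1/2}\overline{\tau}_i = u_i\tau_{i+1}\overline{\tau}_{i-1}+\overline{\tau}_{i+1}\tau_{i-1}$ is already a three-term bilinear identity once one reads $s_i(\tau_i)$ as a new $\tau$ function on the lattice; similarly for $s_i(\overline{\tau}_i)$, and the $w_0,w_1$ formulae give four-term identities that should contract to three-term ones after using the $s_i$ relations. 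So the first step is to rewrite each of the generator formulae of Proposition \ref{prop:action_tau} in the $\tau^{n,m}_N$-notation at the ``base point'' (using $\tau_0=\tau^{-1,0}_0$, $\tau_1=\tau^{0,0}_0$, $\tau_2=\tau^{0,1}_0$, $\overline{\tau}_i=\tau^{\bullet}_1$), keeping careful track of the scalar factors built from $u_i=q^{-1/3}c^{-2/3}a_i$, $v_i=q^{1/3}c^{2/3}a_i$, and then re-expressing everything via $\alpha_i={a_i}^{1/6}$, $\gamma=c^{1/6}$, $Q=q^{1/6}$.

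Second, I would identify which composite transformation produces each target equation. Equations (\ref{qp3_bl:typeVII_1}) and (\ref{qp3_bl:typeVIII_1}) should come from an $s_1$-type relation (they involve the index shift $(n,m)\mapsto(n+1,m+1)$ and the parameter $\alpha_1$), while (\ref{qp3_bl:typeVII_3}) and (\ref{qp3_bl:typeVIII_3}) should come from an $s_0$-type relation (shift $(n,m)\mapsto(n+2,m+1)$ with $\alpha_0$); the $N+1$ versus $N-1$ split and the sign of the $Q$-exponents reflect whether one applies the relation in the $\tau$ or $\overline{\tau}$ layer, i.e. the presence or absence of an $r$ or $w_0$ in the word. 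Equation (\ref{qp3_bl:typeI_3}), which links levels $N+1,N,N-1$ and is quadratic in $\tau^{n,m}_N$, is the discrete-Toda-type relation in the $N$-direction and should be obtained by combining a $w_0$-relation with its $r$-conjugate (recall $rw_0=w_1r$ and $T_4=rw_0$), so that the $\overline{\tau}$'s are eliminated. Once the correct group element is pinned down, one conjugates the base-point identity by ${T_1}^n{T_2}^m{T_4}^N$; since the $T_i$ are group elements, the identity is preserved, and the only work is to compute how the scalar prefactors transform — these pick up powers of $q$ (hence of $Q$) and of $c$ (hence of $\gamma$) and of the $a_i$ (hence of $\alpha_i$) according to the explicit parameter actions of $T_1,T_2,T_4$ listed before Proposition \ref{prop:qp3_bl}.

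The third step is purely bookkeeping: read off the exponents of $Q$, $\gamma$, $\alpha_0$, $\alpha_1$ in each conjugated identity and check they match the statement; here one uses $a_0a_1a_2=q$ to trade $\alpha_2$ for $\alpha_0^{-1}\alpha_1^{-1}Q^6$ so that only $\alpha_0,\alpha_1$ survive, as in the displayed equations. I would organize this as one lemma per ``seed'' identity followed by a short ``apply ${T_1}^n{T_2}^m{T_4}^N$'' paragraph.

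The main obstacle I anticipate is the contraction of the four-term $w_0$/$w_1$ relations down to the three-term form — i.e. showing that the two ``extra'' terms in $w_0(\overline{\tau}_i)$ or $w_1(\tau_i)$ combine, after substituting the $s_i$-identities, into a single term with exactly the claimed coefficient. This requires using several of the $s_i$-formulae simultaneously and watching the half-integer powers ${u_i}^{1/2},{v_i}^{1/2}$ and the cube roots ${a_{i+1}}^{1/3}/{a_{i+2}}^{1/3}$ cancel correctly against the normalization $f_i=q^{1/3}c^{2/3}\overline{\tau}_{i+1}\tau_{i-1}/(\tau_{i+1}\overline{\tau}_{i-1})$; a sign or a factor of $q^{1/3}$ out of place would be easy to make and hard to spot. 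A secondary, more mechanical, difficulty is simply not dropping a $q$-power while iterating the $T_i$-action on the prefactors — this is best handled by recording the prefactor as an explicit monomial in $a_0,a_1,a_2,c,q$ at each stage rather than prematurely converting to $\alpha_i,\gamma,Q$.
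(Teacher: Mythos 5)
Your plan is essentially the paper's own proof (given in Appendix B): the paper derives seed three-term identities from the $s_i$-type actions (e.g.\ $q^{1/6}c^{1/3}a_1^{1/2}\tau_1T_2(\overline{\tau}_0)=q^{1/3}c^{2/3}a_1\tau_0\overline{\tau}_2+\overline{\tau}_0\tau_2$), contracts the four-term $w_0$-relation for $T_4(\overline{\tau}_0)$ against those to get the Toda-type seed for (\ref{qp3_bl:typeI_3}), and then conjugates by ${T_1}^{l}{T_2}^{m}{T_4}^{n}$ and powers of $\pi$ exactly as you describe. The only cosmetic difference is that the paper obtains the $\alpha_0$-flavoured equations (\ref{qp3_bl:typeVII_3}), (\ref{qp3_bl:typeVIII_3}) by applying $\pi^2$ to the $\alpha_1$-seed rather than starting from a separate $s_0$-type identity, which is equivalent since $\pi s_i=s_{i+1}\pi$.
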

The proof of Proposition \ref{prop:qp3_bl} will be given in Appendix \ref{sec:bl_qp3}. 

As seen below $q$-P$_{\rm III}$, (\ref{qp3:eqn2}) 
or (\ref{qp3:eqn3}), can be obtained from the
bilinear equations. Noticing that
\begin{equation}
 f_{0,N}^{n,m}=Q^{4N+2}\gamma^4
  \frac{\tau^{n,m}_{N+1}\tau^{n,m+1}_{N}}{\tau_N^{n,m}\tau^{n,m+1}_{N+1}},\quad
 f_{1,N}^{n,m}=Q^{4N+2}\gamma^4
  \frac{\tau^{n,m+1}_{N+1}\tau^{n-1,m}_{N}}{\tau_N^{n,m+1}\tau^{n-1,m}_{N+1}},\quad
 f_{2,N}^{n,m}=Q^{4N+2}\gamma^4
  \frac{\tau^{n-1,m}_{N+1}\tau^{n,m}_{N}}{\tau_N^{n-1,m}\tau^{n,m}_{N+1}},
\end{equation}
we can rewrite (\ref{qp3_bl:typeVII_1}) and (\ref{qp3_bl:typeVIII_1}) as
\begin{align}
 &1 + Q^{-6n+6m-6}{\alpha_1}^6 f_{1,N}^{n+1,m}=Q^{-3n+3m+2N-2}\gamma^2{\alpha_1}^3
  \frac{\tau^{n+1,m}_{N}\tau^{n,m+1}_{N+1}}{\tau^{n,m}_{N+1}\tau^{n+1,m+1}_{N}},
  \label{qp3:derivation1}\\
 &1 + Q^{6n-6m+6}{\alpha_1}^{-6}f_{1,N}^{n+1,m}=Q^{3n-3m+2N+4}\gamma^2{\alpha_1}^{-3}
  \frac{\tau^{n+1,m}_{N+1}\tau^{n,m+1}_{N}}{\tau^{n,m}_{N+1}\tau^{n+1,m+1}_{N}},
  \label{qp3:derivation2}
\end{align}
respectively. Dividing (\ref{qp3:derivation2}) by (\ref{qp3:derivation1}), we have
\begin{equation}
 \frac{1 + Q^{6n-6m+6}{\alpha_1}^{-6}f_{1,N}^{n+1,m}}
  {1 + Q^{-6n+6m-6}{\alpha_1}^6 f_{1,N}^{n+1,m}}
  =Q^{6n-6m+6}{\alpha_1}^{-6}\frac{\tau^{n+1,m}_{N+1}\tau^{n,m+1}_{N}}
  {\tau^{n+1,m}_{N}\tau^{n,m+1}_{N+1}}
  =Q^{6n-6m+6}{\alpha_1}^{-6}\frac{f^{n,m}_{0,N}}{f^{n+1,m}_{2,N}},
\end{equation}
which is equivalent to the second equation of (\ref{qp3:eqn3}). Similarly,
(\ref{qp3_bl:typeVII_3}) and (\ref{qp3_bl:typeVIII_3}) yield the first equation of
(\ref{qp3:eqn3}).

For the hypergeometric solutions, we relate the $\tau$ functions to the
determinants $\psi^{n,m}_{N}$, (\ref{qp3:det}), 
by multiplication of appropriate ``gauge'' factor. 
We set
\begin{align}\label{tau:gauge}
 \tau^{n,m}_{N} = &(-1)^{N(N+1)/2}
 Q^{-2(2n-m)N^2+6Nn}
 {\alpha_0}^{-4N^2+6N}
 {\alpha_2}^{-2N^2}
 \left(
 \frac{\Theta(-Q^{-6n}{\alpha_0}^{-6};Q^6)
  \Theta(-Q^{6m}{\alpha_2}^{-6};Q^6)}
  {\Theta(Q^{-6(n-m)}{\alpha_0}^{-6}{\alpha_2}^{-6};Q^6)}
 \right)^N\nonumber\\
 &\times
 \Gamma(Q^{2n-m+1}{\alpha_0}^{2}\alpha_2;Q,Q)
 \Gamma(Q^{-n+2m-1}{\alpha_1}^{2}\alpha_0;Q,Q)
 \Gamma(Q^{-n-m}{\alpha_2}^{2}\alpha_1;Q,Q)
 ~\psi^{n,m-1}_{N},
\end{align}
and put $\gamma=1$.
Then the bilinear equations (\ref{qp3_bl:typeVII_1})--(\ref{qp3_bl:typeI_3}) 
can be rewritten as
\begin{align}
 &\psi^{n,m}_{N+1}\psi^{n+1,m+1}_{N}
  -Q^{-12n+12N}{\alpha_0}^{-12}\psi^{n+1,m}_{N}\psi^{n,m+1}_{N+1}
  +Q^{-12n}{\alpha_0}^{-12}\psi^{n,m}_{N}\psi^{n+1,m+1}_{N+1}=0,
 \label{qp3_bl:psi1}\\
 &\psi^{n+1,m+1}_{N+1}\psi^{n+1,m}_{N}
  -Q^{-12N}\psi^{n,m}_{N}\psi^{n+2,m+1}_{N+1}
  -Q^{12n+12}{\alpha_0}^{12}\psi^{n+1,m+1}_{N}\psi^{n+1,m}_{N+1}=0,
 \label{qp3_bl:psi2}\\
 &\psi^{n+1,m+1}_{N+1}\psi^{n,m}_{N}
  -\psi^{n+1,m}_{N+1}\psi^{n,m+1}_{N}
  +Q^{12m+12}{\alpha_2}^{-12}\psi^{n+1,m+1}_{N}\psi^{n,m}_{N+1}=0,
 \label{qp3_bl:psi3}\\
 &\psi^{n+1,m}_{N+1}\psi^{n+1,m+1}_{N}
  -Q^{12m+12}{\alpha_2}^{-12}\psi^{n,m}_{N+1}\psi^{n+2,m+1}_{N}
  -\psi^{n+1,m}_{N}\psi^{n+1,m+1}_{N+1}=0,
 \label{qp3_bl:psi4}\\
 &\psi^{n,m}_{N+1}\psi^{n,m}_{N-1}
  -\left(\psi^{n,m}_{N}\right)^2
  +\psi^{n+1,m}_{N}\psi^{n-1,m}_{N}=0,
 \label{qp3_bl:psi5}
\end{align}
respectively. Equations (\ref{qp3_bl:psi1})--(\ref{qp3_bl:psi4}) are equivalent to
(\ref{qp3:bl1})--(\ref{qp3:bl4}). Note that (\ref{qp3_bl:psi5}) is exactly the discrete
Toda equation, (\ref{dToda:bl}), which fixes the determinant structure of the hypergeometric
solutions as mentioned in Remark \ref{remark:qp3_hyper}.
\begin{remark}\label{rem:hyper_tau}\rm
The gauge factor $\tau^{n,m}_{N}/\psi^{n,m-1}_{N}$ in (\ref{tau:gauge}) 
is obtained by solving the overdetermined
system of the bilinear difference equations with $\gamma=1$ under the boundary conditions
$\tau^{n,m}_{N}=0$ $(N\in\mathbb{Z}_{<0})$\cite{Nakazono:qp3_gauge}.
\end{remark}

Let us consider the bilinear equations for $q$-P$_{\rm II}$.
Since we need $R_1$, $\tau_i$, and $\overline{\tau}_i$ $(i\in\mathbb{Z}/3\mathbb{Z})$,
the lattice is restricted to the ``unit-strip'' (see Figure \ref{fig:R1_strip}).
Therefore, we have only to consider $\tau^{n,0}_{N}$ and $\tau^{n,1}_{N}$ ($n, N\in\mathbb{Z}$).
We set 
\begin{equation}\label{notation:tau_qp2}
 \tau^k_N = {R_1}^k{T_4}^N(\tau_1).
\end{equation}
Note that 
\begin{equation}\label{notation:tau_qp2_2}
 \tau_0 = \tau^{-2}_0,\quad
 \tau_1 = \tau^{0}_0,\quad
 \tau_2 = \tau^{-1}_0,\quad
 \overline{\tau}_0 = \tau^{-2}_1,\quad
 \overline{\tau}_1 = \tau^{0}_1,\quad
 \overline{\tau}_2 = \tau^{-1}_1.
\end{equation}
In general, it follows that
\begin{equation}\label{notation:tau_qp2_qp3}
 \tau^{n,0}_N = \tau^{2n}_{N},\quad  \tau^{n,1}_N = \tau^{2n-1}_{N},
\end{equation}
and
\begin{equation}
 f_{0,N}^{k}=Q^{4N+2}\gamma^4 \frac{\tau^{k}_{N+1}\tau^{k-1}_N}{\tau^k_N\tau^{k-1}_{N+1}}.
\end{equation}
\begin{figure}[h]
\begin{center}
\includegraphics[width=0.5\textwidth]{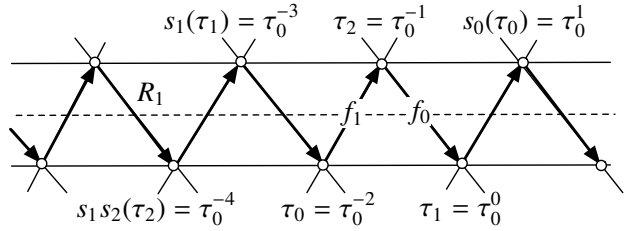} 
\end{center}
\caption{The actions of $R_1$ on $\tau_i$ $(i=0,1,2)$.}
\label{fig:R1_strip}
\end{figure}
\begin{proposition}\label{prop:qp2_bl}
The following bilinear equations hold$:$
\begin{align}
 &Q^{(-3k+4N+2)/2}\gamma^{2}{\alpha_0}^{-3}\tau^{k+1}_{N}\tau^{k-2}_{N+1}
  -Q^{-3k+4N+2}\gamma^{4}{\alpha_0}^{-6}\tau^{k-1}_{N}\tau^{k}_{N+1}
  -\tau^{k-1}_{N+1}\tau^{k}_{N}=0,
 \label{qp2_bl:Prop_A2_3}\\
 &Q^{(-3k-4N-2)/2}\gamma^{-2}{\alpha_0}^{-3}\tau^{k+1}_{N+1}\tau^{k-2}_{N}
  -Q^{-3k-4N-2}\gamma^{-4}{\alpha_0}^{-6}\tau^{k}_{N}\tau^{k-1}_{N+1}
  -\tau^{k}_{N+1}\tau^{k-1}_{N}=0,
 \label{qp2_bl:Prop_A2_4}\\
 &\tau^{k}_{N+1}\tau^{k+1}_{N-1}
  -Q^{(k-4N+1)/2}\gamma^{-2}\alpha_0\tau^{k+2}_{N}\tau^{k-1}_{N}
  -Q^{-k+4N-1}\gamma^{4}{\alpha_0}^{-2}\tau^{k}_{N}\tau^{k+1}_{N}=0.
 \label{qp2_bl:Prop_A2_1}
\end{align}
\end{proposition}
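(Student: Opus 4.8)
The plan is to derive the three bilinear equations in Proposition~\ref{prop:qp2_bl} by specializing the bilinear equations for $q$-P$_{\rm III}$ in Proposition~\ref{prop:qp3_bl} to the line $a_2=q^{1/2}$ and then translating the $T_1,T_2$-lattice picture into the $R_1$-lattice picture via the identifications \eqref{notation:tau_qp2_qp3}. Concretely, recall that $\tau^{n,0}_N=\tau^{2n}_N$ and $\tau^{n,1}_N=\tau^{2n-1}_N$, so every object $\tau^{n,m}_N$ with $m\in\{0,1\}$ appearing in \eqref{qp3_bl:typeVII_1}--\eqref{qp3_bl:typeI_3} can be rewritten as a $\tau^k_N$. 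The setting $a_2=q^{1/2}$ means $\alpha_2={a_2}^{1/6}=Q^{3}$ in the notation \eqref{notation:parameters}, and we will also need to replace $\alpha_1$ using $a_0a_1a_2=q$, i.e.\ ${\alpha_1}^6={a_1}=q a_0^{-1}a_2^{-1}=Q^{6}\cdot Q^{-36}\cdot{\alpha_0}^{-6}$, hence ${\alpha_1}^3=Q^{-15}{\alpha_0}^{-3}$ (up to a sign/branch choice that must be tracked consistently through the gauge). Substituting these into the $q$-P$_{\rm III}$ bilinear equations should collapse the $\alpha_1,\alpha_2$ dependence and leave equations in $\alpha_0,Q,\gamma$ only, which is exactly the form of \eqref{qp2_bl:Prop_A2_3}--\eqref{qp2_bl:Prop_A2_1}.

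First I would take \eqref{qp3_bl:typeVIII_3} with $m=0$, which relates $\tau^{n+1,0}_{N+1}$, $\tau^{n+1,1}_N$, $\tau^{n,0}_{N+1}$, $\tau^{n+2,1}_N$, $\tau^{n+1,0}_N$, $\tau^{n+1,1}_{N+1}$; under \eqref{notation:tau_qp2_qp3} these become $\tau^{2n+2}_{N+1}$, $\tau^{2n+1}_N$, $\tau^{2n}_{N+1}$, $\tau^{2n+3}_N$, $\tau^{2n+2}_N$, $\tau^{2n+1}_{N+1}$. After substituting $\alpha_2=Q^3$ and ${\alpha_0}^3{\alpha_1}^3=Q^{-15}$ (from the relation above, so that ${\alpha_0}^3 = Q^{15}{\alpha_1}^{-3}$ cancels appropriately) and setting $k=2n+2$, I expect \eqref{qp3_bl:typeVIII_3} to reduce directly to \eqref{qp2_bl:Prop_A2_3}; the parallel equation \eqref{qp3_bl:typeVII_3} with $m=0$ should similarly give \eqref{qp2_bl:Prop_A2_4}. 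For the odd values of $k$ one instead uses $m=1$ in the same two equations — but since $R_1$ shifts $k$ by one and $R_1^2=T_1$ shifts $n$ by one, the equations for odd $k$ are obtained from those for even $k$ by applying $R_1$, so it suffices to verify the even case and invoke equivariance. Finally, \eqref{qp2_bl:Prop_A2_1} is the Toda-type equation and should come from \eqref{qp3_bl:typeI_3} with $m=0$ (or $m=1$) after the same parameter specialization and index translation; note \eqref{qp3_bl:typeI_3} already involves only $\alpha_0,\alpha_1$, so after eliminating $\alpha_1$ it becomes an equation in $\alpha_0$ alone, matching \eqref{qp2_bl:Prop_A2_1}.

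The main obstacle I anticipate is bookkeeping of the powers of $Q$ and the sign/branch of ${\alpha_0}$, ${\alpha_1}$, ${\alpha_2}$, $\gamma$ — the exponents in \eqref{qp3_bl:typeVII_1}--\eqref{qp3_bl:typeI_3} are linear in $n,m,N$ with fairly large coefficients (multiples of $3$ and $6$ in $Q$-powers, $12$ in the $\psi$-level equations), and one must check that after the substitutions $m\mapsto 0$ or $1$, $n\mapsto$ (a function of $k$), $\alpha_2\mapsto Q^3$, ${\alpha_1}^3\mapsto Q^{-15}{\alpha_0}^{-3}$ everything lands on precisely the exponents $(-3k+4N+2)/2$, $-3k+4N+2$, etc.\ shown in Proposition~\ref{prop:qp2_bl}. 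Because $R_1=\pi^2 s_1$ involves $s_1$, which acts on parameters as $s_1(a_j)=a_j a_1^{-a_{1j}}$, one should double-check that the restriction $a_2=q^{1/2}$ is genuinely preserved by $R_1$ (it is, as noted in the text: $R_1$ becomes a translation along that line), so that iterating $R_1$ stays on the locus where the specialized identities hold — otherwise the reduction would be only formal. A secondary subtlety is that $\gamma=c^{1/6}$ is kept general here (the gauge \eqref{tau:gauge} set $\gamma=1$, but Proposition~\ref{prop:qp2_bl} is stated with $\gamma$), so one cannot borrow the $\psi$-level equations \eqref{qp3_bl:psi1}--\eqref{qp3_bl:psi5} directly and must work from the $\tau$-level equations \eqref{qp3_bl:typeVII_1}--\eqref{qp3_bl:typeI_3}; tracking the $\gamma$-powers through the specialization is routine but error-prone. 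Once the exponent arithmetic is done, the proof is essentially a substitution, so I would present it as: state the specialization $a_2=q^{1/2}$ and the consequent relations among the $\alpha_i$, then for each of the three target equations point to the source equation among \eqref{qp3_bl:typeVII_3}, \eqref{qp3_bl:typeVIII_3}, \eqref{qp3_bl:typeI_3}, give the index substitution, and note that the $\alpha_1,\alpha_2$ factors combine into the stated $Q$-powers.
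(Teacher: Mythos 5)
Your reduction of \eqref{qp2_bl:Prop_A2_3} and \eqref{qp2_bl:Prop_A2_4} from \eqref{qp3_bl:typeVIII_3} and \eqref{qp3_bl:typeVII_3} with $m=0$ is essentially sound: those two source equations contain only $\alpha_0$, the identifications $\tau^{n,0}_N=\tau^{2n}_N$, $\tau^{n,1}_N=\tau^{2n-1}_N$ convert them (with $k=2n+2$) into exactly the stated equations for even $k$, and the extension to odd $k$ by applying $R_1$ is legitimate, since $R_1$ preserves the line $a_2=q^{1/2}$ and acts by $\alpha_0\mapsto Q^{1/2}\alpha_0$, $\tau^k_N\mapsto\tau^{k+1}_N$, which reproduces precisely the $k\to k+1$ shift of the coefficients. (Your alternative of taking $m=1$ for odd $k$ does not work: with $m=1$ those equations contain $\tau^{n+2,2}_N$, which lies outside the unit strip and has no $\tau^k_N$ counterpart, so the $R_1$-equivariance argument is the one you must rely on. Also, $\alpha_2=(q^{1/2})^{1/6}=Q^{1/2}$ and ${\alpha_1}^3=Q^{3/2}{\alpha_0}^{-3}$, not $Q^{3}$ and $Q^{-15}{\alpha_0}^{-3}$ as you wrote; this slip happens to be harmless for the first two equations because they contain no $\alpha_1,\alpha_2$, but it would corrupt any computation that does.)

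The genuine gap is the third equation. \eqref{qp2_bl:Prop_A2_1} cannot be obtained from \eqref{qp3_bl:typeI_3}. Restricting \eqref{qp3_bl:typeI_3} to $m=0$ (resp.\ $m=1$) yields an equation whose three quadratic terms are $\tau^{k}_{N+1}\tau^{k}_{N-1}$, $(\tau^{k}_{N})^2$ and $\tau^{k+2}_{N}\tau^{k-2}_{N}$ with $k$ even (resp.\ odd) --- the discrete Toda equation with step two in $k$, all indices of a single parity --- whereas \eqref{qp2_bl:Prop_A2_1} couples adjacent parities through $\tau^{k}_{N+1}\tau^{k+1}_{N-1}$, $\tau^{k+2}_{N}\tau^{k-1}_{N}$, $\tau^{k}_{N}\tau^{k+1}_{N}$. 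No substitution of indices or parameters turns the former into the latter. In fact none of the five equations of Proposition \ref{prop:qp3_bl} can serve as the source: \eqref{qp3_bl:typeVII_1}--\eqref{qp3_bl:typeVIII_3} involve only the two levels $N,N+1$, while \eqref{qp3_bl:typeI_3} spans three levels but stays in a single row $m=\mathrm{const}$. The correct source must shift $N$ by two \emph{and} $(n,m)$ diagonally by $(1,1)$, i.e.\ the type V equations \eqref{TypeV_1}--\eqref{TypeV_6} of Appendix B (for instance \eqref{TypeV_1} at $m=0$ has exactly the right term structure), which is why the paper derives \eqref{qp2_bl:Prop_A2_1} from the seed identity \eqref{eq:TypeV} rather than from anything in Proposition \ref{prop:qp3_bl}. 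This is not a bookkeeping issue but the crux of the whole paper: \eqref{qp2_bl:Prop_A2_1} is a genuinely different Toda-type equation from \eqref{qp3_bl:typeI_3}, and that difference is exactly what accounts for the different determinant structures of the two families of hypergeometric solutions; if your route worked, those structures would coincide.
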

The proof of Proposition \ref{prop:qp2_bl} 
will be given in Appendix \ref{sec:bl_qp2}. 

One can obtain $q$-P$_{\rm II}$, (\ref{qp2:eqn2}), 
from Proposition \ref{prop:qp2_bl} as
follows. Equations (\ref{qp2_bl:Prop_A2_3}) and 
(\ref{qp2_bl:Prop_A2_4}) can be rewritten as
\begin{align}
 &1+Q^{-3k}{\alpha_0}^{-6}f^k_{0,N}
  =Q^{(-3k+4N+2)/2}\gamma^{2}{\alpha_0}^{-3}
   \frac{\tau^{k+1}_{N}\tau^{k-2}_{N+1}}{\tau^{k-1}_{N+1}\tau^{k}_{N}},
 \label{qp2:derivation1}\\
 &1+Q^{3k}{\alpha_0}^6 f^k_{0,N}
  =Q^{(3k+4N+2)/2}\gamma^{2}{\alpha_0}^{3}
   \frac{\tau^{k+1}_{N+1}\tau^{k-2}_{N}}{\tau^{k-1}_{N+1}\tau^{k}_{N}}.
 \label{qp2:derivation2}
\end{align}
Dividing (\ref{qp2:derivation2}) by (\ref{qp2:derivation1}), we have
\begin{equation}
 \frac{1+Q^{3k}{\alpha_0}^6 f^k_{0,N}}{1+Q^{-3k}{\alpha_0}^{-6}f^k_{0,N}}
  =Q^{3k}{\alpha_0}^6\frac{\tau^{k+1}_{N+1}\tau^{k-2}_{N}}{\tau^{k+1}_{N}\tau^{k-2}_{N+1}}
  =Q^{3k-12N-6}\gamma^{12}{\alpha_0}^6 f_{0,N}^{k+1}f_{0,N}^{k}f_{0,N}^{k-1},
\end{equation}
which is equivalent to (\ref{qp2:eqn3}).

For hypergeometric solutions, by putting $\gamma=1$ and
\begin{align}
 \tau^{k}_{N} = (-1)^{N(N-1)/2}Q^{N(N-1)(k+n)}{\alpha_0}^{2N(N-1)}
  \frac{\Gamma(Q^{(2k+3)/2}{\alpha_0}^{2};Q,Q)
  \Gamma(Q^{-k/2}{\alpha_0}^{-1};Q,Q)
  \Gamma(Q^{(-k+3)/2}{\alpha_0}^{-1};Q,Q)}
  {\Theta(Q^{3k+1}{\alpha_0}^{6};Q^3)^N}
  ~\phi^k_{N}, \label{qp2_tau:gauge}
\end{align}
we can rewrite the bilinear equations (\ref{qp2_bl:Prop_A2_3}),
(\ref{qp2_bl:Prop_A2_4}), and (\ref{qp2_bl:Prop_A2_1}) as
\begin{align}
 &Q^{6N-6k+6}{\alpha_0}^{-12}\phi^{k+1}_{N}\phi^{k-2}_{N+1}
 +Q^{6N}\phi^{k-1}_{N}\phi^{k}_{N+1}
 -\phi^{k-1}_{N+1}\phi^{k}_{N}=0, \label{qp2_bl:phi1}\\
 &Q^{6N}\phi^{k+1}_{N+1}\phi^{k-2}_{N}
 +Q^{-6N-6k}{\alpha_0}^{-12}\phi^{k}_{N}\phi^{k-1}_{N+1}
 -\phi^{k}_{N+1}\phi^{k-1}_{N}=0,\label{qp2_bl:phi2}\\
 &\phi^{k}_{N+1}\phi^{k+1}_{N-1}
 -\phi^{k}_{N}\phi^{k+1}_{N}
 +\phi^{k+2}_{N}\phi^{k-1}_{N}=0,\label{qp2_bl:phi3}
\end{align}
which are equivalent to (\ref{qp2:bl1}), (\ref{qp2:bl2}), and
(\ref{dToda2:bl}), respectively. 
The determinant structure of the hypergeometric solutions is
fixed by (\ref{qp2_bl:phi3}) as was explained in Remark \ref{remark:qp2_hyper}.

Therefore, the difference of the determinant structures of the hypergeometric solutions to
$q$-P$_{\rm III}$ and that to $q$-P$_{\rm II}$ originates from the following procedures:
\begin{description}
 \item[{\rm (i)}] 
  the specialization $a_2=q^{1/2}$
  and the restriction of $\tau$ functions on the ``unit-strip'';
 \item[{\rm (ii)}] taking the half-step translation $R_1$ instead of $T_1$ as a time evolution.
\end{description}
These result in the difference of the bilinear equations
(\ref{qp3_bl:typeI_3}) (or (\ref{qp3_bl:psi5})) and 
(\ref{qp2_bl:Prop_A2_1}) (or (\ref{qp2_bl:phi3})),
which fix the determinant structure of the hypergeometric solutions.
\section{Concluding remarks}
In this paper, we have clarified the mechanism 
that gives rise to the apparent ``inconsistency'' in
the hypergeometric solutions to $q$-P$_{\rm III}$ and that to $q$-P$_{\rm II}$ by using their
underlying affine Weyl group symmetry. In general, it is
also possible to explain the inconsistency 
among the hypergeometric solutions to other symmetric
and asymmetric discrete Painlev\'e equations 
(see, for example, Propositions \ref{prop:sdP2} and
\ref{prop:adP2}).

Before closing, we demonstrate another example of the projective reductions.
Let us consider the following system of difference equations\cite{Ohta:RIMS_dP}:
\begin{equation}\label{ternarized_dP1:eqn}
 Z_n + X_n = \frac{3na+b_1}{Y_n} + t,\quad
 X_{n+1} + Y_n = \frac{(3n+1)a+b_2}{Z_n} + t,\quad 
 Y_{n+1} + Z_n = \frac{(3n+2)a+b_3}{X_{n+1}} + t,
\end{equation}
where $X_n$, $Y_n$, and $Z_n$ are the dependent variables, 
$n\in\mathbb{Z}$ is the independent variable, and
$a,b_1,b_2,b_3,t\in\mathbb{C}$ are parameters. 
Equation (\ref{ternarized_dP1:eqn}) is one of
the discrete Painlev\'e systems of type $A_3^{(1)}$. 
Namely, it arises from a B\"acklund
transformation of the Painlev\'e V equation, 
which describes a translation in a different direction
from (\ref{adP2:eqn}). Putting $b_1=b_2=b_3=b$, $X_n=x_{3n-1}$, $Y_n=x_{3n}$, and
$Z_n=x_{3n+1}$, we can reduce (\ref{ternarized_dP1:eqn}) to
\begin{equation}\label{dP1:eqn}
 x_{n+1}+x_{n-1}=\frac{an+b}{x_n}+t,
\end{equation}
which is known as a discrete Painlev\'e I equation\cite{RG:coales}.
This reduction from (\ref{ternarized_dP1:eqn}) to (\ref{dP1:eqn}) is a typical
example of the projective reductions other than a symmetrization. 

It seems that various projective reductions of the discrete Painlev\'e systems
change the underlying symmetry and yield a number of intriguing problems. 
One interesting project is to make a list of
the hypergeometric functions that appear as the solutions to all the symmetric
discrete Painlev\'e equations in Sakai's
classification\cite{Sakai:Painleve,KMNOY:hyper1,KMNOY:hyper2}.
These will be discussed in forthcoming papers\cite{KN:hyper}.
\par\bigskip
\noindent{\bf Acknowledgement.} 
The authors would like to express their sincere thanks to
Prof. M. Noumi for fruitful discussions and valuable suggestions. 
They acknowledge continuous
encouragement by Prof. T. Masuda, Prof. H. Sakai, and Prof. Y. Yamada. 
They also appreciate the valuable comments from the referees
which have improved the quality of this paper.
This work has been partially
supported by the JSPS Grant-in-Aid for Scientific Research No. 19340039.

\appendix
\renewcommand{\theequation}{A.\arabic{equation}}
\section{On the difference equation (\ref{qp2:3-term})}
\label{appendix:A}
In this appendix, we consider the 
equation (\ref{qp2:3-term}) (or (\ref{qp2:3-term2})):
\begin{equation}\label{qp2:3-term3}
 U_{k+1}-U_k+\frac{1}{{a_0}^2q^{k}}U_{k-1}=0.
\end{equation}
Recall that we have obtained two solutions to the equation above,
i.e, $G_k$ (\ref{qp2:Gk}) and $\hat{G}_k$ (\ref{eqn:hat_Gk}).
These are described as follows:
\begin{align}
 &G_k=A_kv_k+B_kw_k,\label{sol:Gk}\\
 &v_k=\Theta(ia_0q^{(2k+1)/4};q^{1/2})
  {}_1\varphi_1\left(\begin{matrix}0\\-q^{1/2}\end{matrix}
  ;q^{1/2},-ia_0q^{(3+2k)/4}\right),\\
 &w_k=\Theta(-ia_0q^{(2k+1)/4};q^{1/2})
  {}_1\varphi_1\left(\begin{matrix}0\\-q^{1/2}\end{matrix}
  ;q^{1/2},ia_0q^{(3+2k)/4}\right),
\end{align}
and
\begin{align}
 &\hat{G}_k=\hat{A}_k\hat{v}_k+\hat{A}_{k+1}\hat{w}_k,\label{sol:hatGk}\\
 &\hat{v}_k=\cfrac{\Theta({a_0}^2q^k;q^2)}{(q^{-1};q^2)_\infty}
  {}_1\varphi_1\left(\begin{matrix}0\\ q^3\end{matrix}
  ;q^{2},{a_0}^2q^{k+3}\right),\\
 &\hat{w}_k=\cfrac{\Theta({a_0}^2q^{k+1};q^2)}{(q;q^2)_\infty}
  {}_1\varphi_1\left(\begin{matrix}0\\ q\end{matrix}
  ;q^{2},{a_0}^2q^{k+2}\right).
\end{align}
Here $A_k$ and $B_k$ are periodic functions of period one, 
and $\hat{A}_k$ is that of period two.
For an initial value $(U_0,U_1)$ given,
the values of $U_k$ ($k\in\mathbb{Z}$) are 
determined recursively by (\ref{qp2:3-term3}).
Since the Casoratians 
$\begin{vmatrix}v_0&w_0\\ v_1&w_1\end{vmatrix}$ and 
$\begin{vmatrix}\hat{v}_0&\hat{w}_0\\ \hat{w}_1&\hat{v}_1\end{vmatrix}$ 
do not vanish for generic values of $a_0$ and $q$,
the coefficients of (\ref{sol:Gk}) and (\ref{sol:hatGk}) 
are specified by the initial value as
\begin{align}
 &\begin{pmatrix}A_0\\ B_0\end{pmatrix}
  =\begin{pmatrix}v_0&w_0\\ v_1&w_1\end{pmatrix}^{-1}
  \begin{pmatrix}U_0\\ U_1\end{pmatrix},
 \label{eqn:A_B_initial}\\
 &\begin{pmatrix}\hat{A}_0\\ \hat{A}_1\end{pmatrix}
  =\begin{pmatrix}\hat{v}_0&\hat{w}_0\\ \hat{w}_1&\hat{v}_1\end{pmatrix}^{-1}
  \begin{pmatrix}U_0\\ U_1\end{pmatrix}.
 \label{eqn:hatA_initial}
\end{align}
Hence we conclude that (\ref{sol:Gk}) and (\ref{sol:hatGk})
give two different expressions of the general solution to (\ref{qp2:3-term3}).

Next we shall show an identity among the basic hypergeometric series
${}_1\varphi_1$ with two different bases $q^2$ and $q^{1/2}$.
It follows from (\ref{eqn:A_B_initial}) and (\ref{eqn:hatA_initial}) that
\begin{equation}\label{eqn:A_B_hatA}
 \begin{pmatrix}\hat{A}_0\\ \hat{A}_1\end{pmatrix}
 =\begin{pmatrix}
  \cfrac{v_0\hat{v}_1-v_1\hat{w}_0}{\hat{v}_0\hat{v}_1-\hat{w}_0\hat{w}_1}
  &\cfrac{w_0\hat{v}_1-w_1\hat{w}_0}{\hat{v}_0\hat{v}_1-\hat{w}_0\hat{w}_1}\\
  \cfrac{v_1\hat{v}_0-v_0\hat{w}_1}{\hat{v}_0\hat{v}_1-\hat{w}_0\hat{w}_1}
  &\cfrac{w_1\hat{v}_0-w_0\hat{w}_1}{\hat{v}_0\hat{v}_1-\hat{w}_0\hat{w}_1}
 \end{pmatrix}
 \begin{pmatrix}A_0\\ B_0\end{pmatrix}.
\end{equation}
By definition we can express 
$v_k$, $w_k$, $\hat{v}_k$, and $\hat{w}_k$ as functions in $a_0$, namely, 
\begin{equation}
 v_k=v(a_0q^{k/2}),\quad w_k=w(a_0q^{k/2}),\quad
 \hat{v}_k=\hat{v}(a_0q^{k/2}),\quad \hat{w}_k=\hat{w}(a_0q^{k/2}).
\end{equation}
Note that $w(a_0)=v(-a_0)$.
Substituting $A_0=0$ (or $B_0=0$) in (\ref{eqn:A_B_hatA})
leads to the following formula:
\begin{equation}\label{eqn:formula_hypergeometric}
 v(a_0q^n)=y(a_0)\hat{v}(a_0q^n)+z(a_0)\hat{w}(a_0q^n),
\end{equation}
where
\begin{equation}
 y(a_0)=\cfrac{v(a_0)\hat{v}(a_0q^{1/2})-v(a_0q^{1/2})\hat{w}(a_0)}
  {\hat{v}(a_0)\hat{v}(a_0q^{1/2})-\hat{w}(a_0)\hat{w}(a_0q^{1/2})},\quad
 z(a_0)=\cfrac{v(a_0q^{1/2})\hat{v}(a_0)-v(a_0)\hat{w}(a_0q^{1/2})}
  {\hat{v}(a_0)\hat{v}(a_0q^{1/2})-\hat{w}(a_0)\hat{w}(a_0q^{1/2})}.
\end{equation}
Also, we have $z(a_0)=y(a_0q^{1/2})$ and $y(a_0)=z(a_0q^{1/2})$ from 
(\ref{eqn:formula_hypergeometric}) with $n=1$.
\renewcommand{\theequation}{B.\arabic{equation}}
\section{Derivation of bilinear equations}
In this appendix, we derive various bilinear equations 
for $\tau$ functions from the birational
representations of $\widetilde{W}((A_2+A_1)^{(1)})$ 
given in Proposition \ref{prop:action_tau}.
\subsection{Bilinear equations for ${\bm q}$-P$_{\rm\bf III}$}
\label{sec:bl_qp3}
We use the notations introduced in (\ref{notation:tau}) and (\ref{notation:parameters}).
For convenience, we classify the bilinear equations into six types
so that any equations which belong to the same type can be transformed into 
each other by the action of $\widetilde{W}((A_2+A_1)^{(1)})$.
\begin{proposition}[Type I: Discrete Toda type]
\label{prop:dToda}
The following bilinear equations hold$:$
\begin{align}
 &\tau^{n,m}_{N+1}\tau^{n,m}_{N-1}
  +Q^{4n-8m+4}{\alpha_1}^{-4}{\alpha_2}^{4}\left(\tau^{n,m}_{N}\right)^2
  -Q^{n-2m+1}{\alpha_1}^{-1}\alpha_2\tau^{n,m+1}_{N}\tau^{n,m-1}_{N} = 0,
 \label{TypeI_1}\\
 &\tau^{n,m}_{N+1}\tau^{n,m}_{N-1}
  +Q^{4n+4m}{\alpha_0}^{4}{\alpha_2}^{-4}\left(\tau^{n,m}_{N}\right)^2
  -Q^{n+m}\alpha_0{\alpha_2}^{-1}\tau^{n+1,m+1}_{N}\tau^{n-1,m-1}_{N} = 0,
 \label{TypeI_2}\\
 &\tau^{n,m}_{N+1}\tau^{n,m}_{N-1}
  +Q^{-8n+4m-4}{\alpha_0}^{-4}{\alpha_1}^{4}\left(\tau^{n,m}_{N}\right)^2
  -Q^{-2n+m-1}{\alpha_0}^{-1}\alpha_1\tau^{n+1,m}_{N}\tau^{n-1,m}_{N} = 0.
 \label{TypeI_3}
\end{align} 
\end{proposition}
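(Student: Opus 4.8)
The plan is to obtain the three identities \eqref{TypeI_1}--\eqref{TypeI_3} directly from the explicit birational action of $\widetilde{W}((A_2+A_1)^{(1)})$ on the $\tau$-functions recorded in Proposition \ref{prop:action_tau}, reducing first to a single representative by symmetry. The cyclic generator $\pi$ has order three and acts by $\pi(\tau^{n,m}_N)=\tau^{-m,\,n-m+1}_N$ (so it preserves $N$) and by $\alpha_i\mapsto\alpha_{i+1}$ on the parameters \eqref{notation:parameters}, while fixing $Q$; note that $\gamma$ does not occur in these relations. Substituting $(n',m')=(-m,\,n-m+1)$ one checks that $\pi$ carries the left-hand side of \eqref{TypeI_3} at $(n,m,N)$ exactly onto that of \eqref{TypeI_1} at $(n',m',N)$ (the exponents $-8n+4m-4$ and $-2n+m-1$ become $4n'-8m'+4$ and $n'-2m'+1$, and the shifts $(n\pm1,m)$ become $(n',m'\pm1)$), and likewise $\eqref{TypeI_1}\leftrightarrow\eqref{TypeI_2}$. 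Since $(n,m)\mapsto(n',m')$ is a bijection of $\mathbb{Z}^2$, the three equations are equivalent and it suffices to prove one, say \eqref{TypeI_3}.

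For the representative I would work in the $N$-direction, which is governed by $\widetilde{W}(A_1^{(1)})=\langle w_0,w_1,r\rangle$ with $T_4=rw_0$, so that $\tau^{n,m}_{N+1}=T_4(\tau^{n,m}_N)$ and $\tau^{n,m}_{N-1}=T_4^{-1}(\tau^{n,m}_N)$ with $T_4^{-1}=w_0r$. Inserting the rational formulas for $w_0$, $w_1$ and $r$ from Proposition \ref{prop:action_tau}, applied to the relevant $T_1^nT_2^mT_4^N$-translates of $\tau_i$ and $\overline\tau_i$ (as built from $\tau_1$ in \eqref{notation:tau}), produces ``$q$-Toda shaped'' trilinear relations among $\tau$-functions at three consecutive levels $N-1,N,N+1$. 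Eliminating the auxiliary off-diagonal $\tau$-functions between a small number of such relations --- those coming from $w_0(\overline\tau_i)$ and $w_1(\tau_i)$ for a suitable $i$, combined with the $s_j$-relations needed to line up the $(n,m)$-shifts --- should collapse to the stated bilinear form; a possibly cleaner alternative, if it closes, is to combine suitable translates of the Type~VII and Type~VIII relations \eqref{qp3_bl:typeVII_1} and \eqref{qp3_bl:typeVIII_1} of Proposition \ref{prop:qp3_bl} at levels $(N-1,N)$ and $(N,N+1)$ and eliminate the common factors. All simplifications use the invariants $a_0a_1a_2=q$ and $f_0f_1f_2=qc^2$ together with $u_i=q^{-1/3}c^{-2/3}a_i$, $v_i=q^{1/3}c^{2/3}a_i$.

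The principal difficulty I anticipate is computational rather than structural: on the one hand, locating the right combination of the trilinear relations that eliminates every auxiliary $\tau$; on the other, the parameter bookkeeping with the fractional powers $\alpha_i=a_i^{1/6}$, $Q=q^{1/6}$ (and $\gamma=c^{1/6}$ in intermediate steps, which must cancel), and in particular tracking how the non-translation generators $w_0,w_1,r$ act on $c$, hence on $u_i,v_i$, so that the coefficients of \eqref{TypeI_3} emerge exactly as written. A useful check, which I would carry out, is to apply the gauge substitution \eqref{tau:gauge} with $\gamma=1$: \eqref{TypeI_3} must then reduce to the discrete Toda equation \eqref{qp3_bl:psi5} (equivalently \eqref{dToda:bl}) for $\psi^{n,m}_N$ --- on the hypergeometric solutions simply the Jacobi/Pl\"ucker identity of the underlying Toeplitz determinant --- and matching that special case both validates the coefficient computation and makes transparent why \eqref{TypeI_3} is precisely the relation that fixes the determinant structure.
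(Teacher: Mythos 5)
Your plan coincides in essence with the paper's own proof: there, the action of $T_4=rw_0$ on $\overline{\tau}_0$ is computed explicitly via Proposition \ref{prop:action_tau}, the resulting trilinear terms are eliminated using the relations (\ref{TypeI_proof_2})--(\ref{TypeI_proof_3}) for $T_2(\overline{\tau}_0)$ and $T_3(\overline{\tau}_1)$ to reach the single master identity (\ref{TypeI:generic}), and then (\ref{TypeI_1})--(\ref{TypeI_3}) are obtained by applying ${T_1}^{l+1}{T_2}^{m}{T_4}^{n-1}$ composed with powers of $\pi$ --- which is precisely your combination of the $N$-direction birational formulas, elimination of auxiliary $\tau$ functions, and the $\pi$-symmetry (whose effect on the three equations you verify correctly). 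The only differences are organizational: you invoke $\pi$ first to reduce to one representative rather than last to generate all three, the central elimination is left as a sketch where the paper carries it out via (\ref{typeI:derivation1})--(\ref{TypeI:generic}), and your proposed detour through the Type VI relations is unnecessary (in the paper those are themselves derived from the same master identities).
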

\begin{figure}[h]
\begin{center}
~\raise3pt\hbox{\includegraphics[width=0.22\textwidth]{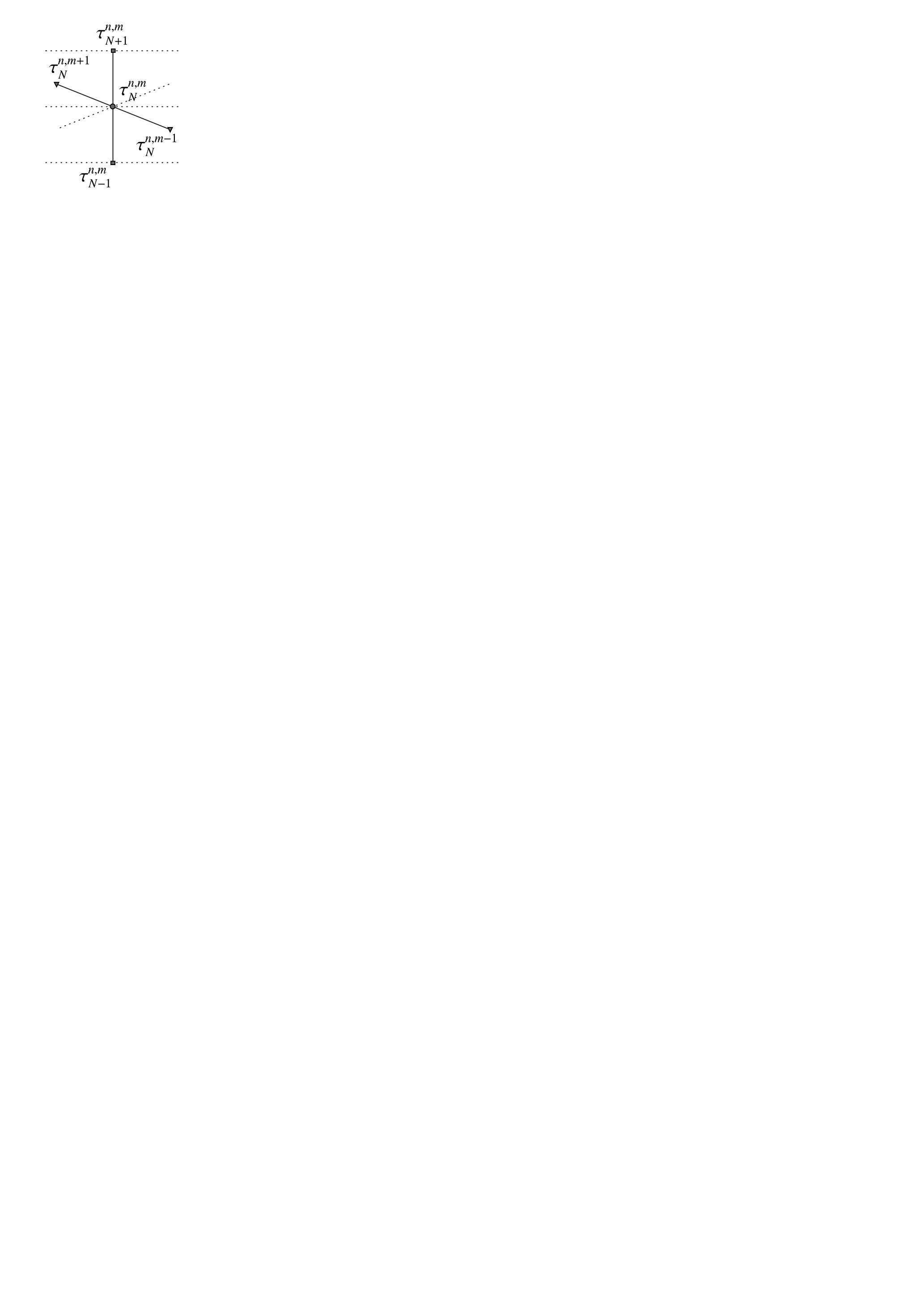}}\hspace{1em}
\raise3pt\hbox{\includegraphics[width=0.23\textwidth]{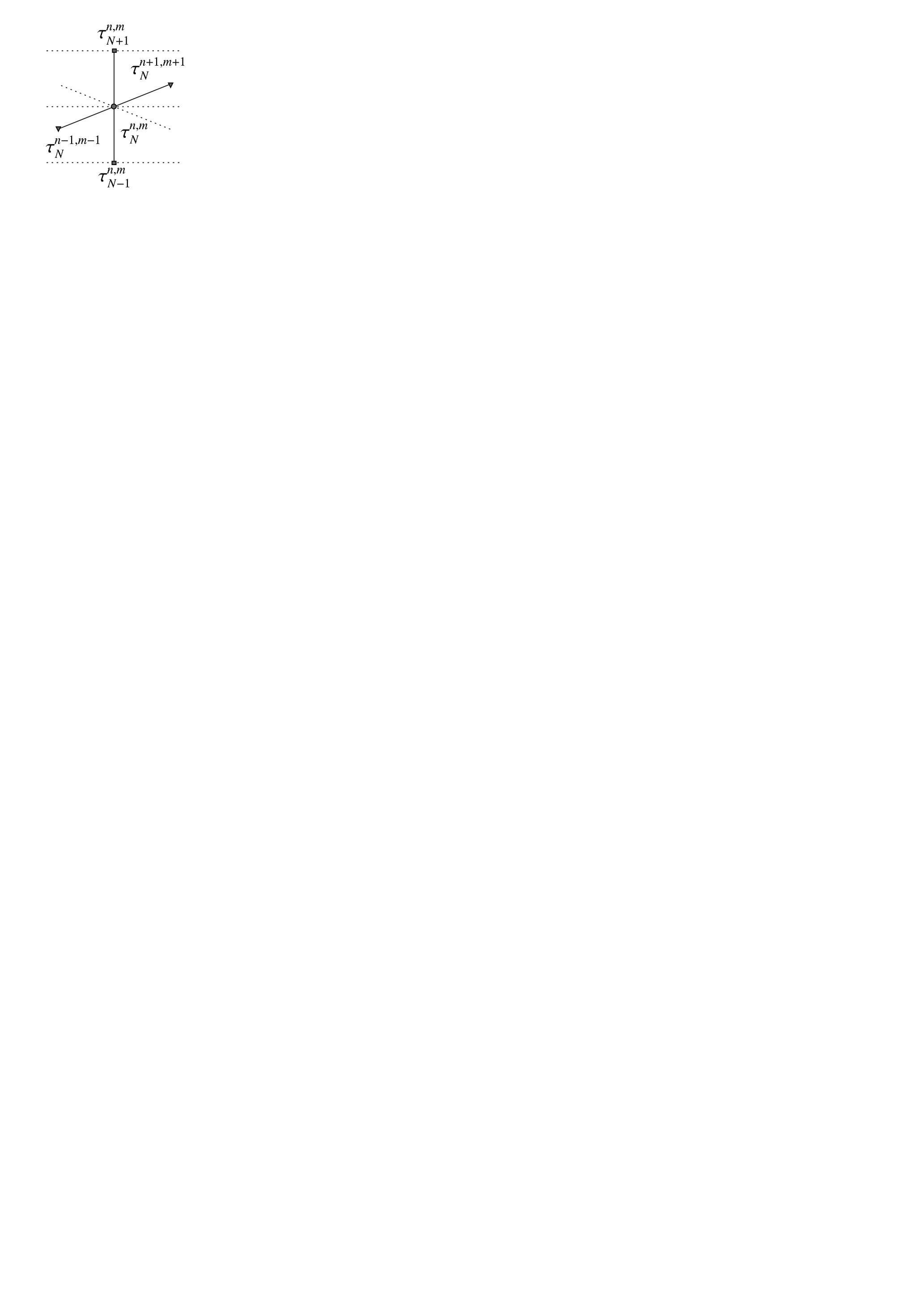}}\hspace{1em}
\includegraphics[width=0.24\textwidth]{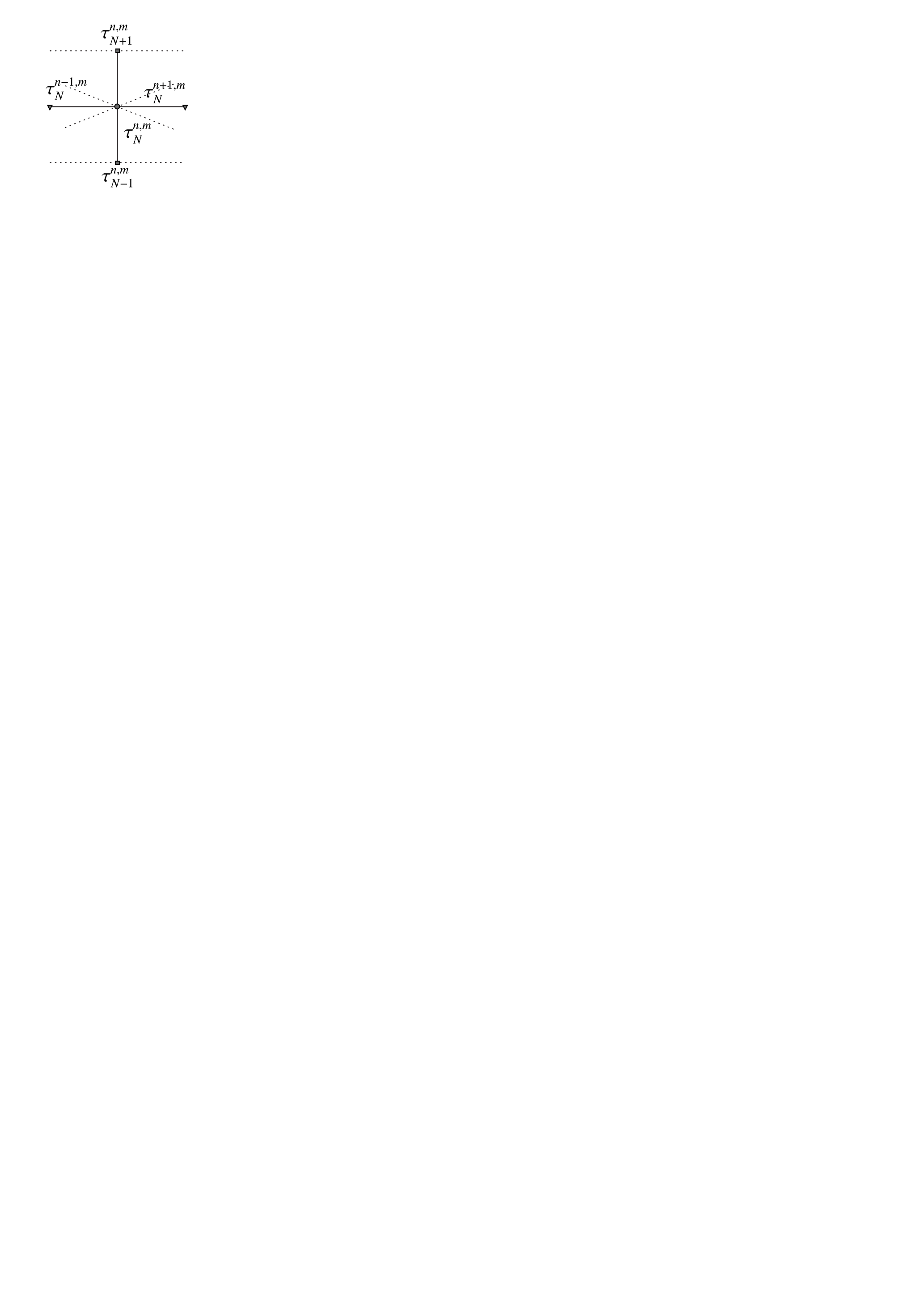}
\caption{Configuration of $\tau$ functions for the bilinear equations of type I. Left:
(\ref{TypeI_1}), center: (\ref{TypeI_2}), right: (\ref{TypeI_3}).}
\label{fig:TypeI}
\end{center}
\end{figure}
\begin{proof}
Application of $T_4=rw_0$ on $\overline{\tau}_0$ yields
\begin{equation}\label{typeI:derivation1}
 T_4(\overline{\tau}_0)
 =c^{-2/3}{a_0}^{-1/3}{a_1}^{-1}{a_2}^{-2/3}
  \frac{\overline{\tau}_0\overline{\tau}_1}{\tau_1}
 +c^{2/3}{a_0}^{1/3}{a_1}^{2/3}a_2
  \frac{\overline{\tau}_0\overline{\tau}_2}{\tau_2}
 +{a_1}^{1/3}{a_2}^{-1/3}
  \frac{\tau_0\overline{\tau}_1\overline{\tau}_2}{\tau_1\tau_2},
\end{equation}
which is rearranged as 
\begin{equation}\label{TypeI_proof_1}
 T_4(\overline{\tau}_0)
 -c^{-2/3}{a_0}^{-1/3}{a_1}^{-1}{a_2}^{-2/3}
  \frac{\overline{\tau}_0\overline{\tau}_1}{\tau_1}
  \left(\frac{q^{1/3}c^{2/3}a_1\tau_0\overline{\tau}_2
  +\overline{\tau}_0\tau_2}{\overline{\tau}_0\tau_2}\right)
  \left(\frac{q^{1/3}c^{2/3}a_2\tau_1\overline{\tau}_0
  +\overline{\tau}_1\tau_0}{\overline{\tau}_1\tau_0}\right)
 +{a_1}^{-2/3}{a_2}^{2/3}\cfrac{{\overline{\tau}_0}^2}{\tau_0}=0.
\end{equation}
Applying $T_2=s_2\pi s_1$ and $T_3=s_2s_1\pi$ on $\overline{\tau}_0$ and $\overline{\tau}_1$,
respectively, we obtain
\begin{align}
 &q^{1/6}c^{1/3}{a_1}^{1/2}\tau_1T_2(\overline{\tau}_0)
 =q^{1/3}c^{2/3}a_1\tau_0\overline{\tau}_2 
 +\overline{\tau}_0\tau_2,
 \label{TypeI_proof_2}\\
 &q^{1/6}c^{1/3}{a_2}^{1/2}\tau_2T_3(\overline{\tau}_1)
 =q^{1/3}c^{2/3}a_2\tau_1\overline{\tau}_0 
 +\overline{\tau}_1\tau_0.
 \label{TypeI_proof_3}
\end{align}
Using (\ref{TypeI_proof_2}) and (\ref{TypeI_proof_3}), we can rewrite (\ref{TypeI_proof_1}) as
\begin{align}
 {T_4}^2(\tau_0)\tau_0
  +{a_1}^{-2/3}{a_2}^{2/3}T_4(\tau_0)^2
  -{a_1}^{-1/6}{a_2}^{1/6}T_2T_4(\tau_0)T_3T_4(\tau_1)=0.
 \label{TypeI:generic}
\end{align}
Then by applying ${T_1}^{l+1}{T_2}^{m}{T_4}^{n-1}$, ${T_1}^{l}{T_2}^{m}{T_4}^{n-1}\pi$, and
${T_1}^{l}{T_2}^{m-1}{T_4}^{n-1}\pi^2$ on (\ref{TypeI:generic}), we obtain
(\ref{TypeI_1}), (\ref{TypeI_2}), and (\ref{TypeI_3}), respectively.
\qed
\end{proof}

Figure \ref{fig:TypeI} shows the configuration of $\tau$ functions in the bilinear equations.
Each bilinear equation takes the form of a linear combination of the three quadratic terms
in $\tau$ functions.
In the left figure, we mark the first, the second, 
and the third multiplication of $\tau$ functions
of (\ref{TypeI_1}) with the square, the circle, and the triangle, respectively.
In the rest of this paper, we use similar representations as above.
\begin{proposition}[Type II: Discrete 2d-Toda type]\label{prop:type2} 
The following bilinear difference equations hold$:$
\begin{align}
 &(1-Q^{-12m}{\alpha_2}^{12})\tau^{n,m}_{N+1}\tau^{n,m}_{N-1}
  +Q^{n-11m}\alpha_0{\alpha_2}^{11}\tau^{n+1,m+1}_{N}\tau^{n-1,m-1}_{N}
  -Q^{n-2m}\alpha_0{\alpha_2}^{2}\tau^{n,m+1}_{N}\tau^{n,m-1}_{N}=0,\label{TypeII_1}\\
 &(1-Q^{12n}{\alpha_0}^{12})\tau^{n,m}_{N+1}\tau^{n,m}_{N-1}
  +Q^{10n+m}{\alpha_0}^{10}{\alpha_2}^{-1}\tau^{n+1,m}_{N}\tau^{n-1,m}_{N}
  -Q^{n+m}\alpha_0{\alpha_2}^{-1}\tau^{n+1,m+1}_{N}\tau^{n-1,m-1}_{N}=0,\label{TypeII_2}\\
 &(1-Q^{12n-12m}{\alpha_0}^{12}{\alpha_2}^{12})\tau^{n,m}_{N+1}\tau^{n,m}_{N-1}
  +Q^{10n-11m}{\alpha_0}^{10}{\alpha_2}^{11}\tau^{n+1,m}_{N}\tau^{n-1,m}_{N}
  -Q^{n-2m}\alpha_0{\alpha_2}^{2}\tau^{n,m+1}_{N}\tau^{n,m-1}_{N}=0.\label{TypeII_3}
\end{align}
\end{proposition}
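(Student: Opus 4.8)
The plan is to obtain the Type~II equations (\ref{TypeII_1})--(\ref{TypeII_3}) directly from the already-established Type~I equations (\ref{TypeI_1})--(\ref{TypeI_3}) of Proposition~\ref{prop:dToda} by eliminating the diagonal term $\bigl(\tau^{n,m}_N\bigr)^2$ between suitable pairs. Observe that each Type~I equation is a three-term linear relation among the monomials $\tau^{n,m}_{N+1}\tau^{n,m}_{N-1}$, $\bigl(\tau^{n,m}_N\bigr)^2$, and exactly one ``off-diagonal'' level-$N$ product --- namely $\tau^{n,m+1}_N\tau^{n,m-1}_N$ in (\ref{TypeI_1}), $\tau^{n+1,m+1}_N\tau^{n-1,m-1}_N$ in (\ref{TypeI_2}), and $\tau^{n+1,m}_N\tau^{n-1,m}_N$ in (\ref{TypeI_3}). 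Hence a linear combination of any two of them with coefficients chosen to annihilate $\bigl(\tau^{n,m}_N\bigr)^2$ yields a relation among $\tau^{n,m}_{N+1}\tau^{n,m}_{N-1}$ and two \emph{distinct} off-diagonal level-$N$ products, which is precisely the shape of a Type~II equation. The three Type~II equations correspond to the three pairs: (\ref{TypeII_1}) from (\ref{TypeI_1}) and (\ref{TypeI_2}), (\ref{TypeII_2}) from (\ref{TypeI_2}) and (\ref{TypeI_3}), and (\ref{TypeII_3}) from (\ref{TypeI_1}) and (\ref{TypeI_3}) --- consistent with which off-diagonal products actually appear on the left-hand sides of (\ref{TypeII_1})--(\ref{TypeII_3}).

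Concretely, for (\ref{TypeII_1}) I would form $Q^{4n+4m}{\alpha_0}^{4}{\alpha_2}^{-4}$ times (\ref{TypeI_1}) minus $Q^{4n-8m+4}{\alpha_1}^{-4}{\alpha_2}^{4}$ times (\ref{TypeI_2}), and then divide through by the common factor ${\alpha_0}^{4}Q^{4n+4m}{\alpha_2}^{-4}$; analogous (appropriately shifted) combinations give (\ref{TypeII_2}) and (\ref{TypeII_3}). The coefficient bookkeeping then collapses using only the two identities built into the notation (\ref{notation:parameters}), namely $\alpha_0\alpha_1\alpha_2 = Q$ (equivalently $a_0a_1a_2 = q$) and therefore $Q^{4} = {\alpha_0}^{4}{\alpha_1}^{4}{\alpha_2}^{4}$. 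For instance, in the combination above the coefficient of $\tau^{n,m}_{N+1}\tau^{n,m}_{N-1}$ is $Q^{4n}\bigl({\alpha_0}^{4}Q^{4m}{\alpha_2}^{-4} - {\alpha_1}^{-4}Q^{-8m+4}{\alpha_2}^{4}\bigr) = {\alpha_0}^{4}Q^{4n+4m}{\alpha_2}^{-4}\bigl(1 - Q^{-12m}{\alpha_2}^{12}\bigr)$, which is exactly the prefactor in (\ref{TypeII_1}); similarly the coefficients of $\tau^{n,m+1}_N\tau^{n,m-1}_N$ and $\tau^{n+1,m+1}_N\tau^{n-1,m-1}_N$ reduce, on invoking $Q = \alpha_0\alpha_1\alpha_2$ and $Q^{4} = {\alpha_0}^{4}{\alpha_1}^{4}{\alpha_2}^{4}$, to $-Q^{n-2m}\alpha_0{\alpha_2}^{2}$ and $Q^{n-11m}\alpha_0{\alpha_2}^{11}$ respectively, matching (\ref{TypeII_1}) exactly.

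Since the three Type~I equations are permuted among themselves by elements of $\widetilde{W}((A_2+A_1)^{(1)})$, one could alternatively prove a single Type~II equation by the elimination above and deduce the other two by applying those same group elements; I would record this as a remark but carry out all three eliminations explicitly, as each is equally short. There is no genuine obstacle here: the argument needs no new $\tau$-function identity beyond Proposition~\ref{prop:dToda}, and the only care required is tracking the exponents of $Q$ and the $\alpha_i$ and invoking $a_0a_1a_2 = q$ at the right moments. Finally, I would add a configuration figure for the Type~II equations, analogous to Figure~\ref{fig:TypeI}, marking the three quadratic monomials in each of (\ref{TypeII_1})--(\ref{TypeII_3}).
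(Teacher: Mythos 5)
Your proposal is correct and coincides with the paper's own proof: Proposition~\ref{prop:type2} is obtained there precisely by eliminating the $\bigl(\tau^{n,m}_{N}\bigr)^{2}$ term between pairs of the Type~I equations of Proposition~\ref{prop:dToda}, with exactly the pairings you identify. Your explicit coefficient bookkeeping (using $\alpha_0\alpha_1\alpha_2=Q$) checks out and merely fills in details the paper leaves implicit.
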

\begin{figure}[h]
\begin{center}
\includegraphics[width=0.27\textwidth]{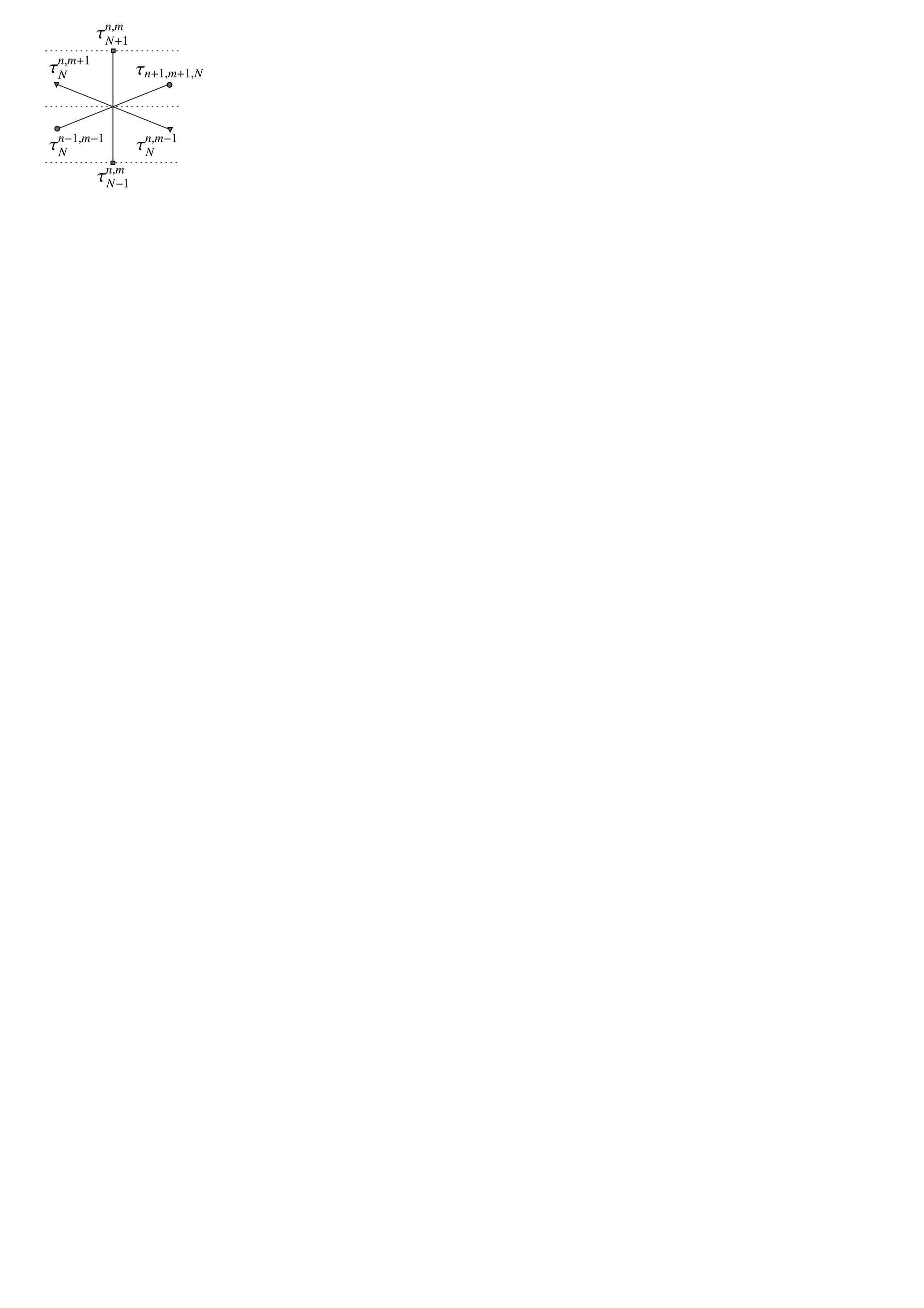}\hspace{1em}
\raise3pt\hbox{\includegraphics[width=0.23\textwidth]{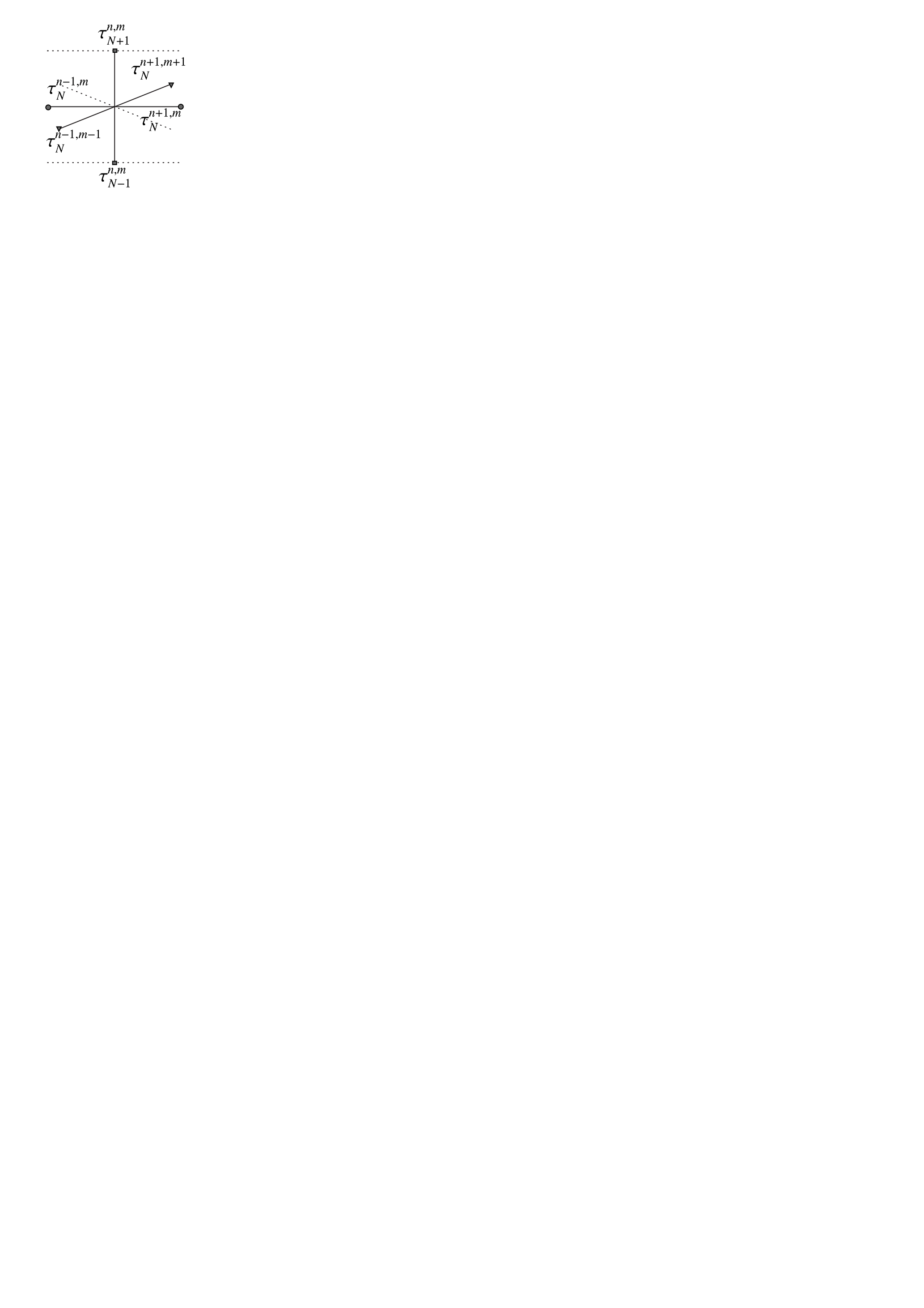}}\hspace{1em}
\raise4pt\hbox{\includegraphics[width=0.23\textwidth]{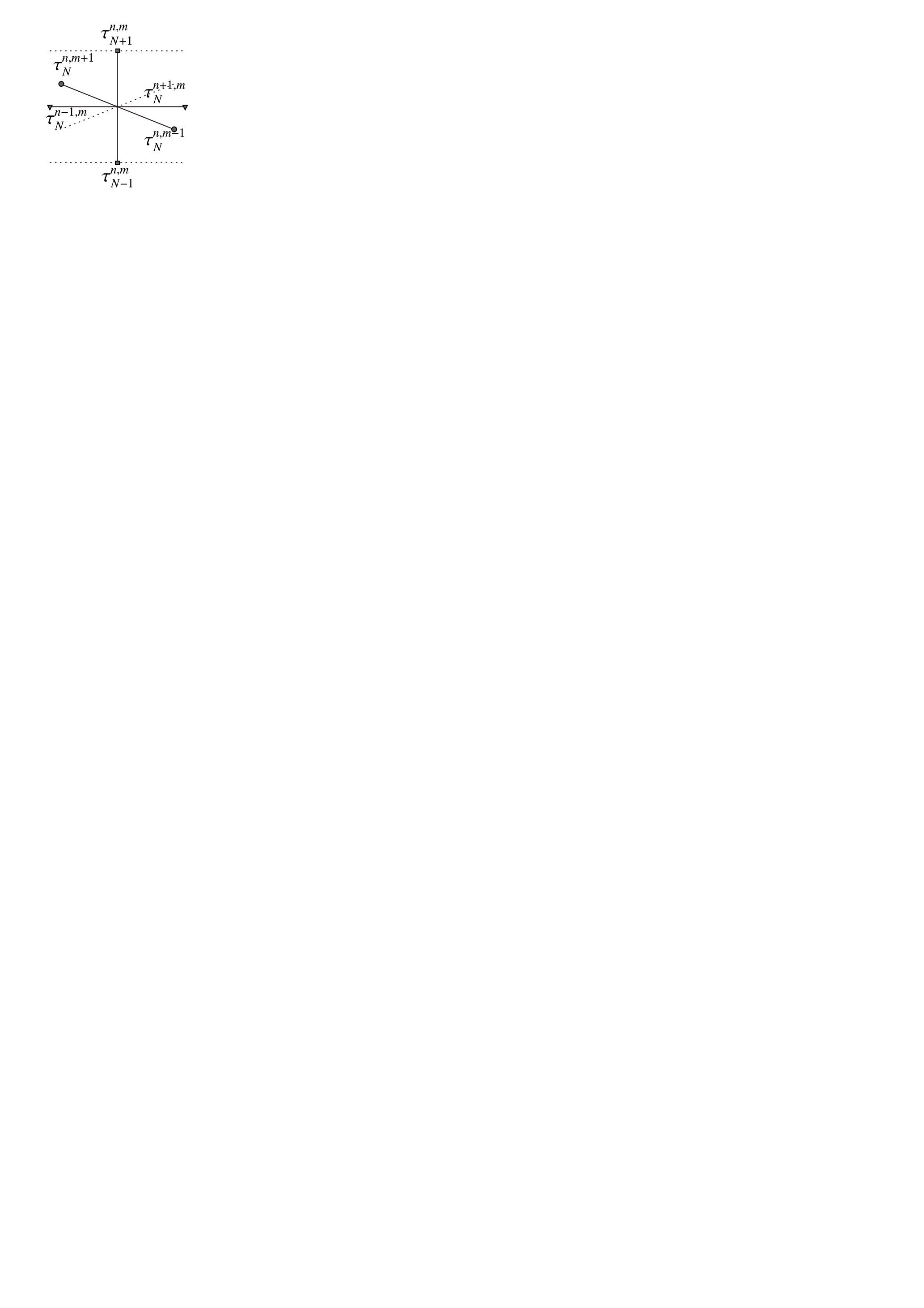}}
\caption{Configuration of $\tau$ functions for the bilinear equations of type II. Left:
(\ref{TypeII_1}), center: (\ref{TypeII_2}), right: (\ref{TypeII_3}).}
\end{center}
\end{figure}
\begin{proof}
Equation (\ref{TypeII_1}) is derived by eliminating $\tau_{l,m,n}$ from (\ref{TypeI_1})
and (\ref{TypeI_2}). We obtain (\ref{TypeII_2}) and (\ref{TypeII_3}) in a similar manner.
\qed
\end{proof}
\begin{proposition}[Type III]\label{prop:type3} The following bilinear equations hold$:$
\begin{align}
 &(Q^{4n-8m+4}{\alpha_1}^{-4}{\alpha_2}^{4}-Q^{4l+4m}{\alpha_0}^{4}{\alpha_2}^{-4})
  \left(\tau^{n,m}_{N}\right)^2
 +Q^{n+m}\alpha_0{\alpha_2}^{-1}\tau^{n+1,m+1}_{N}\tau^{n-1,m-1}_{N}\nonumber\\
 &\hskip120pt -Q^{n-2m+1}{\alpha_1}^{-1}\alpha_2\tau^{n,m+1}_{N}\tau^{n,m-1}_{N}=0,
 \label{TypeIII_1}\\
 &(Q^{4n+4m}{\alpha_0}^{4}{\alpha_2}^{-4}-Q^{-8n+4m-4}{\alpha_0}^{-4}{\alpha_1}^{4})
  \left(\tau^{n,m}_{N}\right)^2
 +Q^{-2n+m-1}{\alpha_0}^{-1}\alpha_1\tau^{n+1,m}_{N}\tau^{n-1,m}_{N}\nonumber\\
 &\hskip120pt -Q^{n+m}\alpha_0{\alpha_2}^{-1}\tau^{n+1,m+1}_{N}\tau^{n-1,m-1}_{N}=0,
 \label{TypeIII_2}\\
 &(Q^{-8n+4m-4}{\alpha_0}^{-4}{\alpha_1}^{4}-Q^{4n-8m+4}{\alpha_1}^{-4}{\alpha_2}^{4})
  \left(\tau^{n,m}_{N}\right)^2
 -Q^{-2n+m-1}{\alpha_0}^{-1}\alpha_1\tau^{n+1,m}_{N}\tau^{n-1,m}_{N}\nonumber\\
 &\hskip120pt +Q^{n-2m+1}{\alpha_1}^{-1}\alpha_2\tau^{n,m+1}_{N}\tau^{n,m-1}_{N}=0.
 \label{TypeIII_3}
\end{align} 
\end{proposition}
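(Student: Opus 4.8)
The plan is to derive the Type~III equations by a purely linear elimination from the Type~I equations of Proposition~\ref{prop:dToda}, in exactly the same spirit as the proof of Proposition~\ref{prop:type2}. The key observation is that the three equations (\ref{TypeI_1})--(\ref{TypeI_3}) all contain the term $\tau^{n,m}_{N+1}\tau^{n,m}_{N-1}$ with coefficient~$1$; subtracting any two of them therefore cancels this term and produces a relation involving only tau functions at level~$N$, namely $\tau^{n,m}_N$, $\tau^{n,m\pm1}_N$, $\tau^{n\pm1,m}_N$, and $\tau^{n\pm1,m\pm1}_N$. This is precisely the shape of a Type~III equation, so no input beyond Proposition~\ref{prop:dToda} is needed (whereas Proposition~\ref{prop:type2} instead eliminates the $\bigl(\tau^{n,m}_N\bigr)^2$ term and keeps $\tau^{n,m}_{N+1}\tau^{n,m}_{N-1}$).

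Concretely, I would first compute (\ref{TypeI_1}) minus (\ref{TypeI_2}). The $\tau^{n,m}_{N+1}\tau^{n,m}_{N-1}$ contributions cancel; the $\bigl(\tau^{n,m}_N\bigr)^2$ coefficient becomes $Q^{4n-8m+4}{\alpha_1}^{-4}{\alpha_2}^{4}-Q^{4n+4m}{\alpha_0}^{4}{\alpha_2}^{-4}$; the term $\tau^{n,m+1}_N\tau^{n,m-1}_N$ retains the coefficient $-Q^{n-2m+1}{\alpha_1}^{-1}\alpha_2$ coming from (\ref{TypeI_1}); and the term $\tau^{n+1,m+1}_N\tau^{n-1,m-1}_N$ acquires the coefficient $+Q^{n+m}\alpha_0{\alpha_2}^{-1}$ upon subtracting (\ref{TypeI_2}). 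This is exactly (\ref{TypeIII_1}). In the same way (\ref{TypeI_2}) minus (\ref{TypeI_3}) yields (\ref{TypeIII_2}), and (\ref{TypeI_3}) minus (\ref{TypeI_1}) yields (\ref{TypeIII_3}).

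There is no genuine obstacle here; the only thing that requires care is the bookkeeping of the exponents of $Q$ and of the powers of $\alpha_0,\alpha_1,\alpha_2$, which must be checked monomial by monomial against the statement. As an internal consistency check one notes that, since (\ref{TypeIII_1}), (\ref{TypeIII_2}), (\ref{TypeIII_3}) are the cyclic pairwise differences $(\ref{TypeI_1})-(\ref{TypeI_2})$, $(\ref{TypeI_2})-(\ref{TypeI_3})$, $(\ref{TypeI_3})-(\ref{TypeI_1})$, their sum vanishes identically; hence only two of the three Type~III equations are independent, consistently with their being obtained from the three Type~I equations by eliminating a single common term.
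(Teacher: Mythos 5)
Your proposal is correct and coincides with the paper's own proof: the authors also obtain (\ref{TypeIII_1}) by eliminating the common term $\tau^{n,m}_{N+1}\tau^{n,m}_{N-1}$ (which appears with coefficient $1$) from (\ref{TypeI_1}) and (\ref{TypeI_2}), and the other two equations analogously. Your coefficient bookkeeping matches the statement (the $Q^{4l+4m}$ in (\ref{TypeIII_1}) is a typo for $Q^{4n+4m}$), and the observation that the three equations sum to zero is a valid consistency check.
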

\begin{figure}[h]
\begin{center}
\includegraphics[width=0.24\textwidth]{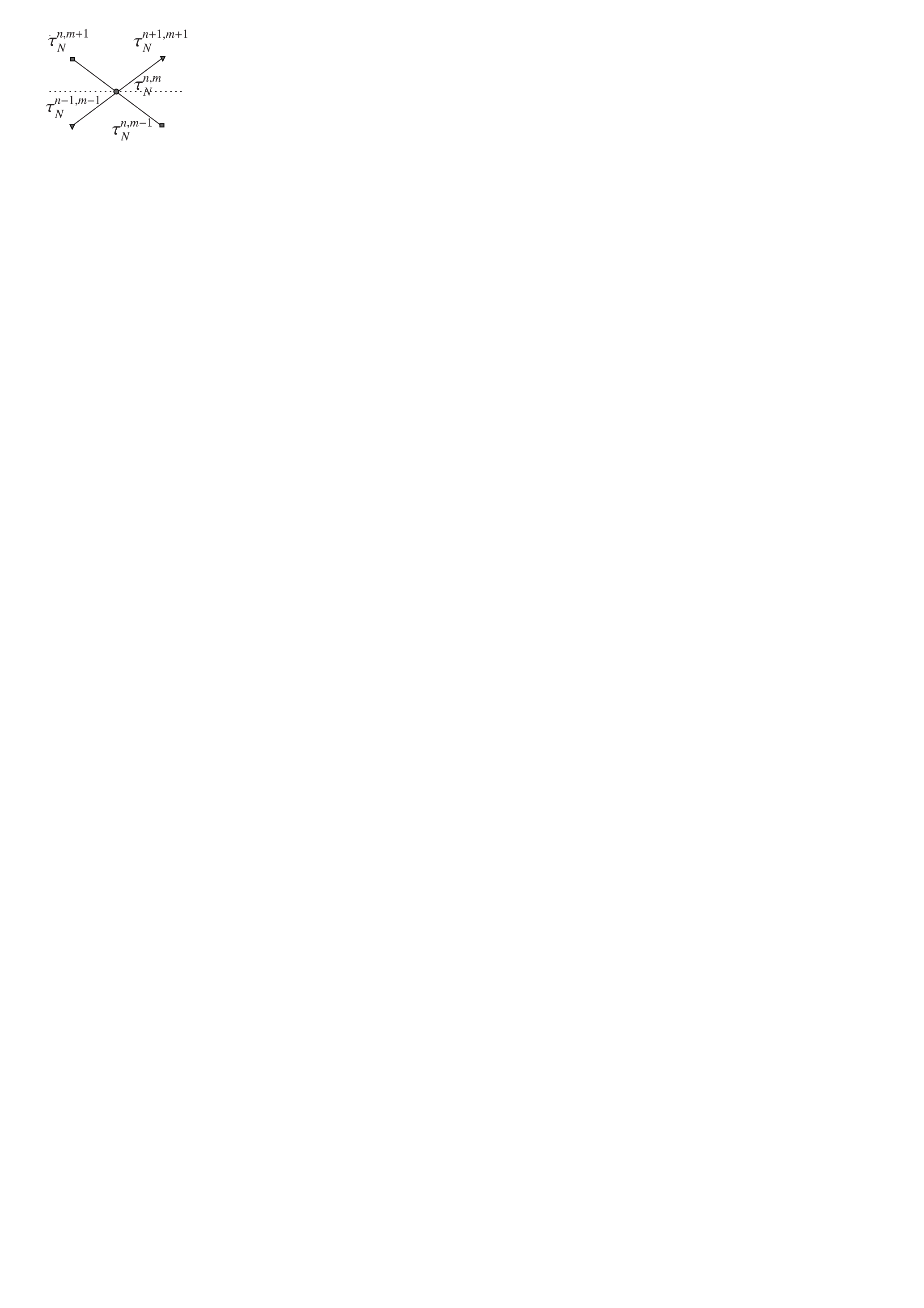}\hspace{1em}
\raise10pt\hbox{\includegraphics[width=0.24\textwidth]{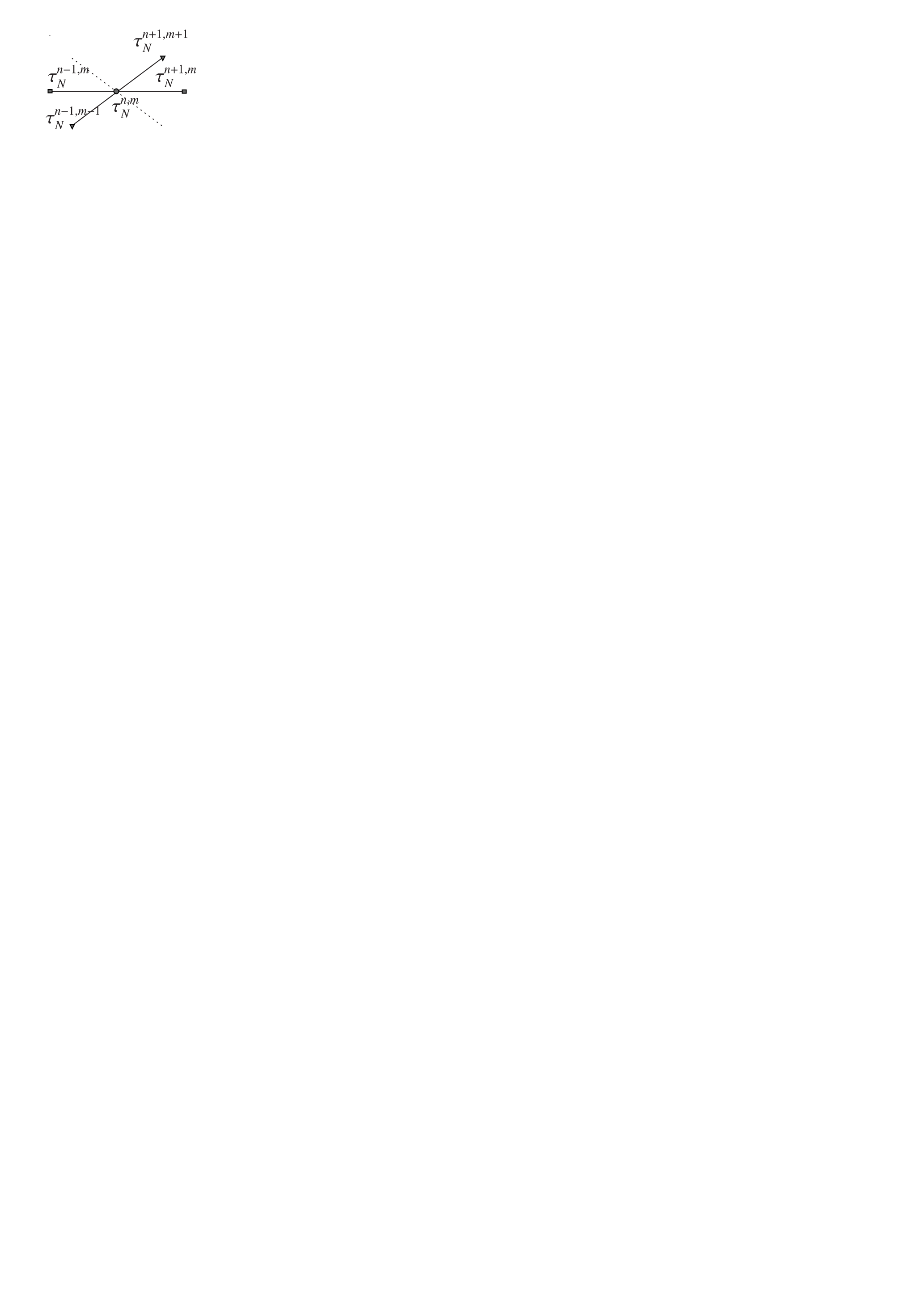}}\hspace{1em}
\includegraphics[width=0.24\textwidth]{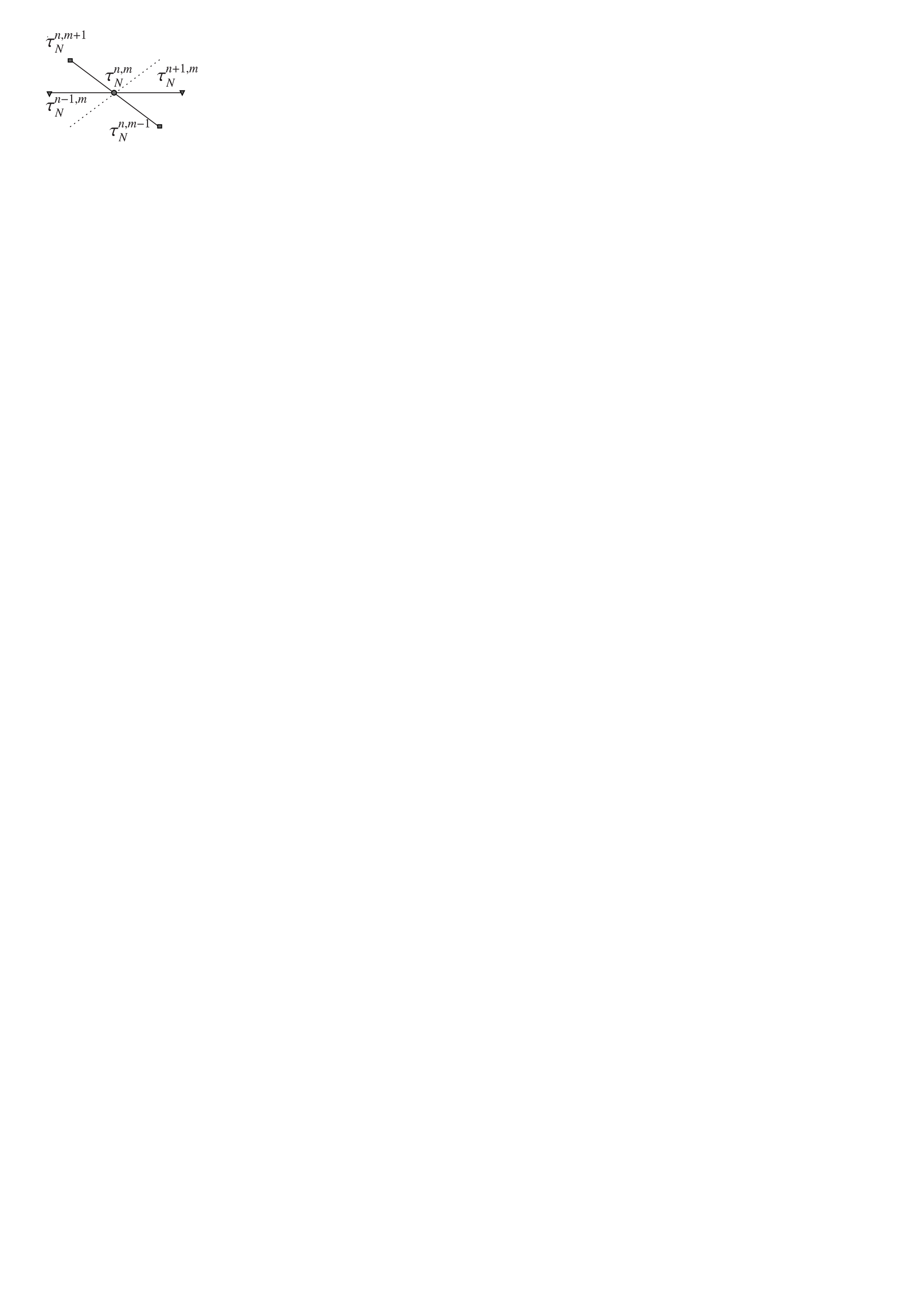}
\caption{Configuration of $\tau$ functions for the bilinear equations of type III. Left:
(\ref{TypeIII_1}), center: (\ref{TypeIII_2}), right: (\ref{TypeIII_3}).}
\end{center}
\end{figure}
\begin{proof}
We obtain (\ref{TypeIII_1}) by eliminating $\tau_{l,m,n+1}\tau_{l,m,n-1}$ from
(\ref{TypeI_1}) and (\ref{TypeI_2}). Other equations can be derived in a similar manner.
\qed
\end{proof}
\begin{proposition}[Type IV]\label{prop:type4}
The following bilinear equation holds$:$
\begin{align}
 Q^{-3n}{\alpha_0}^{-3}(1-Q^{-12m}{\alpha_2}^{-12})\tau^{n+1,m}_{N}\tau^{n-1,m}_{N}
 &-Q^{-3m}{\alpha_2}^{3}(1-Q^{-12l}{\alpha_0}^{-12})\tau^{n,m+1}_{N}\tau^{n,m-1}_{N}\nonumber\\
 &+(Q^{-12m}{\alpha_2}^{-12}-Q^{-12l}{\alpha_0}^{-12})\tau^{n+1,m+1}_{N}\tau^{n-1,m-1}_{N}=0.
 \label{TypeIV}
\end{align} 
\end{proposition}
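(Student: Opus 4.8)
The plan is to derive (\ref{TypeIV}) by a purely linear elimination from the Type~I equations of Proposition~\ref{prop:dToda}, in exactly the same spirit as the derivations of the Type~II and Type~III equations. The key observation is that each of (\ref{TypeI_1}), (\ref{TypeI_2}), (\ref{TypeI_3}) is a linear combination of the two ``$N$-shifted'' quadratic monomials $\tau^{n,m}_{N+1}\tau^{n,m}_{N-1}$ and $\bigl(\tau^{n,m}_{N}\bigr)^2$ --- the first occurring with coefficient $1$ in all three --- together with exactly one of the three ``level-$N$'' monomials $\tau^{n,m+1}_{N}\tau^{n,m-1}_{N}$, $\tau^{n+1,m+1}_{N}\tau^{n-1,m-1}_{N}$, $\tau^{n+1,m}_{N}\tau^{n-1,m}_{N}$, the remaining coefficients being explicit monomials in $Q,\alpha_0,\alpha_1,\alpha_2$. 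Viewing the three identities as an overdetermined linear system whose unknowns are $\tau^{n,m}_{N+1}\tau^{n,m}_{N-1}$ and $\bigl(\tau^{n,m}_{N}\bigr)^2$, its consistency forces the vanishing of the $3\times3$ determinant built from the coefficient matrix and the column of the remaining terms; expanding that determinant along the last column yields a single linear relation among the three level-$N$ monomials, which is (\ref{TypeIV}).

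Concretely I would first subtract (\ref{TypeI_2}) from (\ref{TypeI_1}) and (\ref{TypeI_3}) from (\ref{TypeI_2}); this cancels $\tau^{n,m}_{N+1}\tau^{n,m}_{N-1}$ and leaves two identities of the form $(\text{monomial})\cdot\bigl(\tau^{n,m}_{N}\bigr)^2=(\text{two-term combination of level-}N\text{ monomials})$. These are, up to the trivial relabelling already used in Proposition~\ref{prop:type3}, exactly (\ref{TypeIII_1}) and (\ref{TypeIII_2}), so this step may equally be phrased as ``eliminate $\bigl(\tau^{n,m}_{N}\bigr)^2$ between (\ref{TypeIII_1}) and (\ref{TypeIII_2})''. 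Cross-multiplying to remove $\bigl(\tau^{n,m}_{N}\bigr)^2$ then produces a relation $\lambda_1\,\tau^{n,m+1}_{N}\tau^{n,m-1}_{N}+\lambda_2\,\tau^{n+1,m+1}_{N}\tau^{n-1,m-1}_{N}+\lambda_3\,\tau^{n+1,m}_{N}\tau^{n-1,m}_{N}=0$, the $\lambda_i$ being the obvious products and differences of the Type~I coefficients; since everything in sight is a Laurent polynomial in the parameters, the division by the (generically nonzero) common factor is harmless and the resulting identity holds for all parameter values.

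The only genuine work is then to check that, after removing the common monomial factor, the $\lambda_i$ match the coefficients displayed in (\ref{TypeIV}). This is where the bookkeeping sits: one uses $a_0a_1a_2=q$ --- equivalently the branch normalization $\alpha_0\alpha_1\alpha_2=Q$ --- to re-express every coefficient through $Q,\alpha_0,\alpha_2$ alone, so that all dependence on $\alpha_1$ disappears, and one then factors each difference of monomials as $(\text{monomial})\bigl(1-Q^{-12\,\bullet}\alpha_\bullet^{\mp12}\bigr)$ and matches $Q$- and $\alpha$-exponents and signs term by term. I expect this exponent-and-sign matching --- and not the conceptual elimination, which is routine linear algebra --- to be the main (and essentially the only) place requiring care.
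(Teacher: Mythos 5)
Your proposal is correct and follows essentially the same route as the paper, whose proof of this proposition consists precisely of the one-line instruction to eliminate $\tau^{n,m}_{N}$ (i.e.\ the $\bigl(\tau^{n,m}_{N}\bigr)^2$ term) between (\ref{TypeIII_1}) and (\ref{TypeIII_2}); your determinant/consistency framing and the remark that the Type III equations are themselves obtained from Type I add nothing beyond what the paper intends. The only caveat is the final bookkeeping you correctly flag as the real work: using $\alpha_0\alpha_1\alpha_2=Q$ to clear $\alpha_1$ and match the displayed coefficients (note the paper's $Q^{-12l}$ is evidently a typo for $Q^{-12n}$).
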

\begin{figure}[h]
\begin{center}
\includegraphics[width=0.27\textwidth]{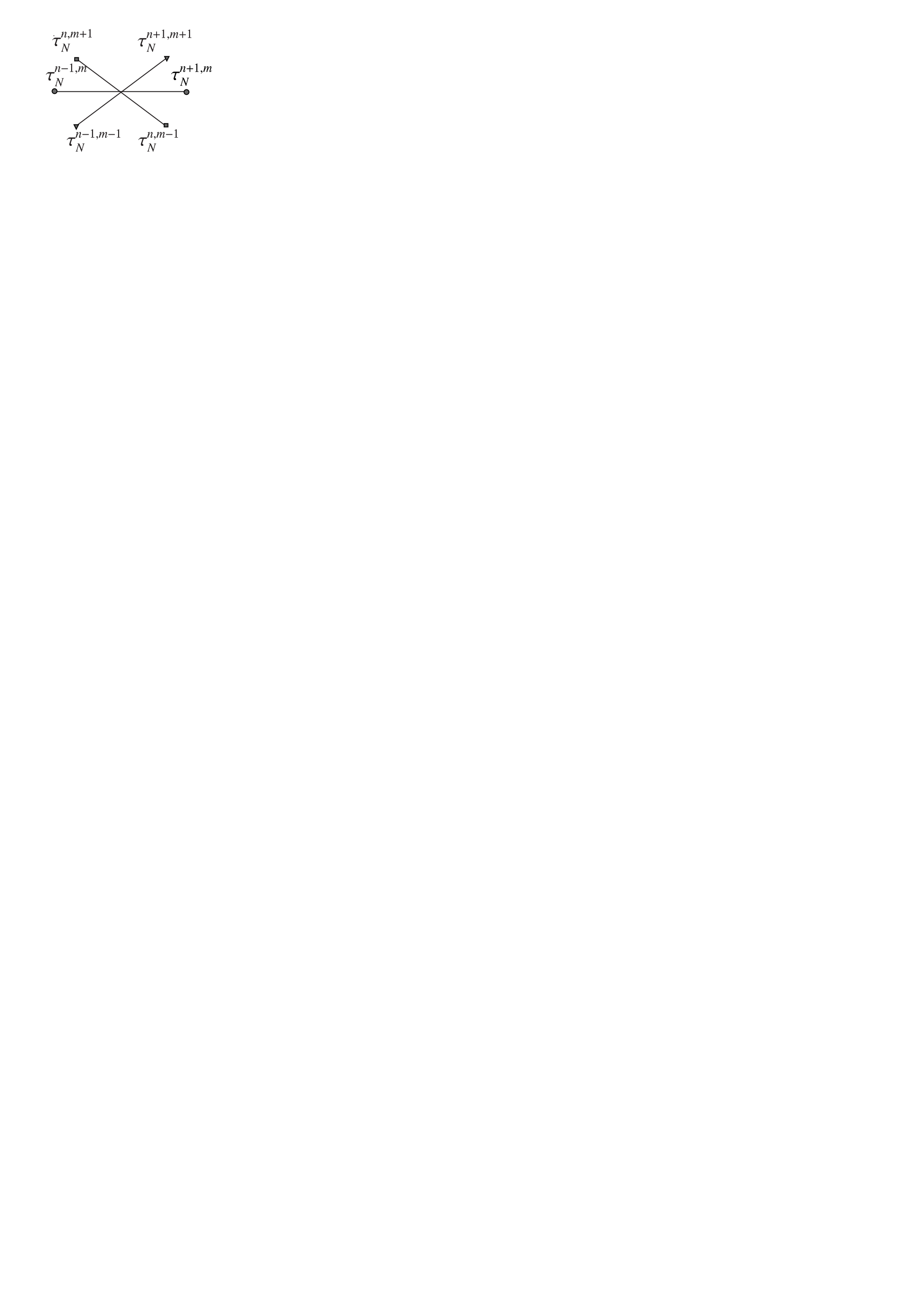}
\caption{Configuration of $\tau$ functions for the bilinear equations of type IV. }
\end{center}
\end{figure}
\begin{proof}
Equation (\ref{TypeIV}) can be derived by eliminating $\tau^{n,m}_{N}$ from 
(\ref{TypeIII_1}) and (\ref{TypeIII_2}).
\qed
\end{proof}
\begin{proposition}[Type V]\label{prop:type5}The following bilinear equations hold$:$
\begin{align}
 &\tau^{n,m}_{N+1}\tau^{n+1,m+1}_{N-1}
 -Q^{n+m-2N}\gamma^{-2}{\alpha_0}^{2}\alpha_1\tau^{n+1,m}_{N}\tau^{n,m+1}_{N}
 -Q^{-2n+2m-4N}\gamma^{4}{\alpha_0}^{-4}{\alpha_1}^{-2}\tau^{n,m}_{N}\tau^{n+1,m+1}_{N}=0,
 \label{TypeV_1}\\
 &\tau^{n+1,m}_{N+1}\tau^{n,m}_{N-1}
 -Q^{-2n+m-2N}\gamma^{-2}{\alpha_0}^{-3}{\alpha_1}^{-1}{\alpha_2}^{-2}
  \tau^{n+1,m+1}_{N}\tau^{n,m-1}_{N}
 -Q^{4n-2m+4N}\gamma^{4}{\alpha_0}^{6}{\alpha_1}^{2}{\alpha_2}^{4}
  \tau^{n+1,m}_{N}\tau^{n,m}_{N}=0,\label{TypeV_2}\\
 &\tau^{n+1,m+1}_{N+1}\tau^{n+1,m}_{N-1}
 -Q^{n-2m-2N+1}\gamma^{-2}{\alpha_1}^{-1}\alpha_2
  \tau^{n,m}_{N}\tau^{n+2,m+1}_{N}
 -Q^{-2n+4m+4N-2}\gamma^{4}{\alpha_1}^{2}{\alpha_2}^{-2}\tau^{n+1,m+1}_{N}
  \tau^{n+1,m}_{N} = 0,\label{TypeV_3}\\
 &\tau^{n+1,m+1}_{N+1}\tau^{n,m}_{N-1}
 -Q^{n+m+2N}\gamma^{2}{\alpha_0}^{2}\alpha_1\tau^{n,m+1}_{N}\tau^{n+1,m}_{N}
 -Q^{-2n-2m-4N}\gamma^{-4}{\alpha_0}^{-4}{\alpha_1}^{-2}\tau^{n+1,m+1}_{N}\tau^{n,m}_{N}=0,
 \label{TypeV_4}\\
 &\tau^{n,m}_{N+1}\tau^{n+1,m}_{N-1}
 -Q^{-2n+m+2N}\gamma^{2}{\alpha_0}^{-3}{\alpha_1}^{-1}{\alpha_2}^{-2}
  \tau^{n,m-1}_{N}\tau^{n+1,m+1}_{N}
 -Q^{4n-2m-4N}\gamma^{-4}{\alpha_0}^{6}{\alpha_1}^{2}{\alpha_2}^{4}
  \tau^{n,m}_{N}\tau^{n+1,m}_{N} = 0,\label{TypeV_5}\\
 &\tau^{n+1,m}_{N+1}\tau^{n+1,m+1}_{N-1}
 -Q^{n-2m+2N}\gamma^{2}\alpha_0{\alpha_2}^{2}\tau^{n+2,m+1}_{N}\tau^{n,m}_{N}
 -Q^{-2n+4m-4N}\gamma^{-4}{\alpha_0}^{-2}{\alpha_2}^{-4}\tau^{n+1,m}_{N}\tau^{n+1,m+1}_{N}=0.
 \label{TypeV_6} 
\end{align} 
\end{proposition}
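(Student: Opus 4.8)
The plan is to derive the six Type~V equations in the same spirit as Types~II--IV, i.e.\ by eliminating common terms among bilinear relations already available, the genuinely new ingredient being the two factorizations (\ref{TypeI_proof_2}) and (\ref{TypeI_proof_3}) established inside the proof of Proposition~\ref{prop:dToda}. The guiding observation is that a Type~V equation is a \emph{partially collapsed} form of the seed (\ref{typeI:derivation1}): in proving Type~I one substitutes \emph{both} of (\ref{TypeI_proof_2})--(\ref{TypeI_proof_3}) into (\ref{typeI:derivation1}) and arrives at a three-term relation whose cross-level product $\tau^{n,m}_{N+1}\tau^{n,m}_{N-1}$ sits at a single lattice point, whereas substituting only \emph{one} of them --- and, where needed, reducing the residual cubic term against a Type~I equation --- leaves a three-term bilinear relation in which the two factors of the cross-level product are displaced by a root vector, which is exactly the shape of (\ref{TypeV_1})--(\ref{TypeV_6}).

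Concretely, I would first regroup the three terms of (\ref{typeI:derivation1}) in the two possible ways so that one of the combinations $q^{1/3}c^{2/3}a_1\tau_0\overline{\tau}_2+\overline{\tau}_0\tau_2$ or $q^{1/3}c^{2/3}a_2\tau_1\overline{\tau}_0+\overline{\tau}_1\tau_0$ becomes a common factor, replace it by the single product on the left of the suitably $T_i$-shifted copy of (\ref{TypeI_proof_2}), resp.\ (\ref{TypeI_proof_3}), clear denominators, and eliminate the term still containing the unreplaced combination by means of a Type~I equation. This produces two ``generic'' Type~V identities, each a genuine three-term bilinear relation whose cross-level product carries a spatial shift; the two choices yield the two $\gamma$-sign families (\ref{TypeV_1})--(\ref{TypeV_3}) and (\ref{TypeV_4})--(\ref{TypeV_6}). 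Within each family the three members form the orbit of the generic identity under the cyclic generator $\pi$ (the $\mathbb{Z}/3\mathbb{Z}$ symmetry of $A_2^{(1)}$ permuting $\overline{\tau}_0\mapsto\overline{\tau}_1\mapsto\overline{\tau}_2$), so $\pi$ and $\pi^{2}$ supply the remaining four. Applying $T_1^{n}T_2^{m}T_4^{N}$ then carries each of the six to arbitrary $(n,m,N)$, and reading off the exponents via (\ref{notation:tau})--(\ref{notation:parameters}) puts them into the stated normalization.

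The only real labor --- and where I expect the main obstacle to lie --- is the exponent bookkeeping in that first step. The coefficients $u_i=q^{-1/3}c^{-2/3}a_i$ occurring in the $w_0$-formula of Proposition~\ref{prop:action_tau} must be matched, with the correct index shifts, against the coefficients $q^{1/3}c^{2/3}a_j$ in (\ref{TypeI_proof_2})--(\ref{TypeI_proof_3}) before the single product can legitimately be substituted, and one must then track how the powers of $Q$, $\alpha_i$ and $\gamma$ are transported by the chosen Type~I elimination, by $\pi$, and by the translations. The difficulty is not conceptual but lies in checking that the resulting bilinear relations coincide \emph{exactly} with (\ref{TypeV_1})--(\ref{TypeV_6}) --- with no dropped term, sign, or power of $Q$ --- and in confirming that the two partial-collapse choices really do hit both $\gamma$-families rather than the same one twice; a convenient consistency check is that each Type~V equation must be compatible with the corresponding Type~I equation (hence with the discrete Toda equation (\ref{TypeI_3}) and with $q$-P$_{\rm III}$) at the same lattice point.
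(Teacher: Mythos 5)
Your proposal matches the paper's proof: the paper regroups (\ref{typeI:derivation1}) in exactly the two ways you describe so that one of the combinations $q^{1/3}c^{2/3}a_1\tau_0\overline{\tau}_2+\overline{\tau}_0\tau_2$ or $q^{1/3}c^{2/3}a_2\tau_1\overline{\tau}_0+\overline{\tau}_1\tau_0$ factors out, substitutes (\ref{TypeI_proof_2}) resp.\ (\ref{TypeI_proof_3}), and then applies ${T_1}^{l+1}{T_2}^m{T_4}^{n-1}$ composed with $1$, $\pi$, $\pi^2$ to produce the two families (\ref{TypeV_1})--(\ref{TypeV_3}) and (\ref{TypeV_4})--(\ref{TypeV_6}). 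The one small deviation is that your anticipated extra elimination against a Type~I equation is unnecessary: after the regrouping the remaining terms are already simple quadratics, so the substitution alone yields the generic identities (\ref{eq:TypeV}) and (\ref{eq:TypeVI}).
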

\begin{figure}[h]
\begin{center}
~\raise20pt\hbox{\includegraphics[width=0.26\textwidth]{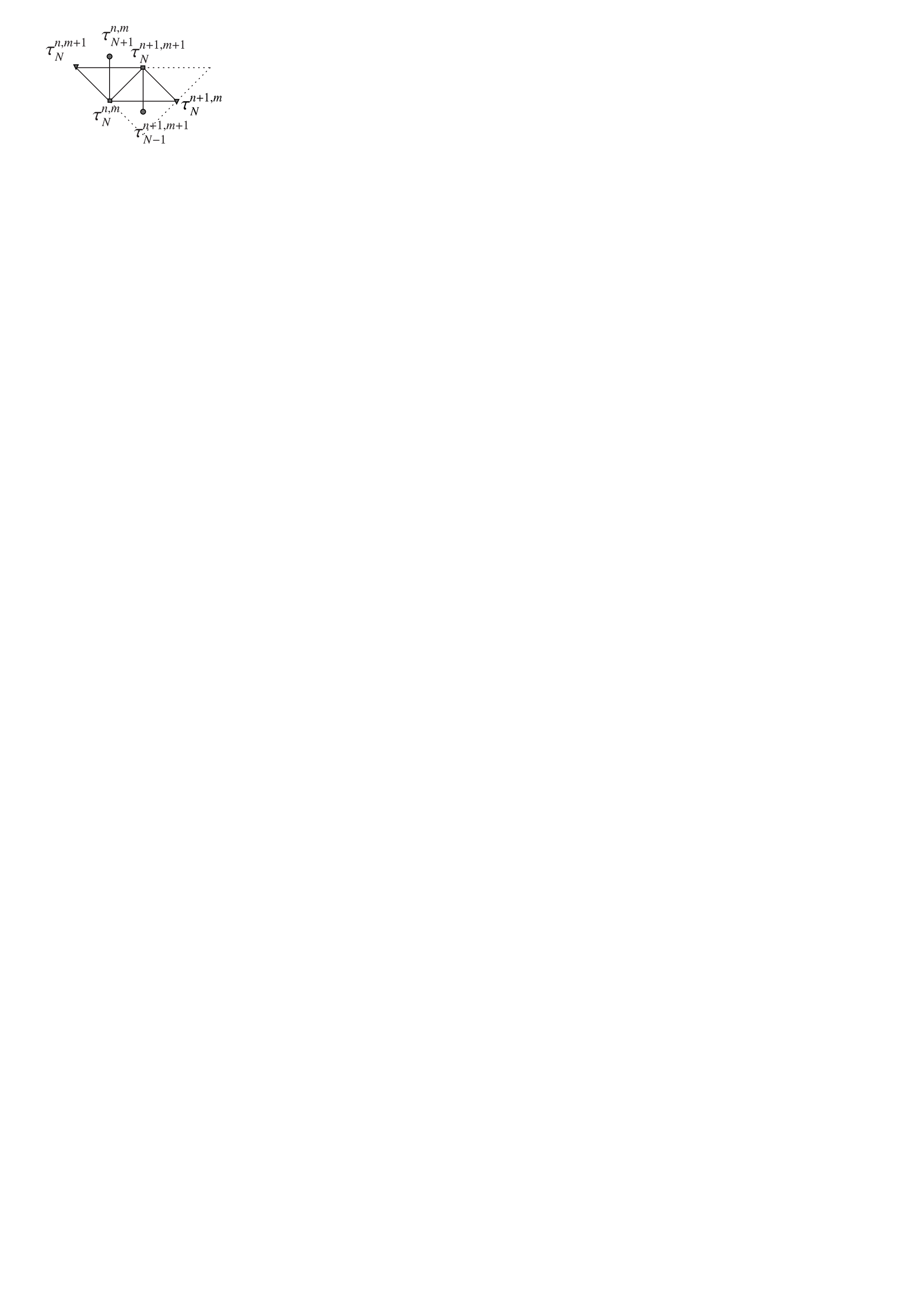}}\hspace{1em}
\includegraphics[width=0.22\textwidth]{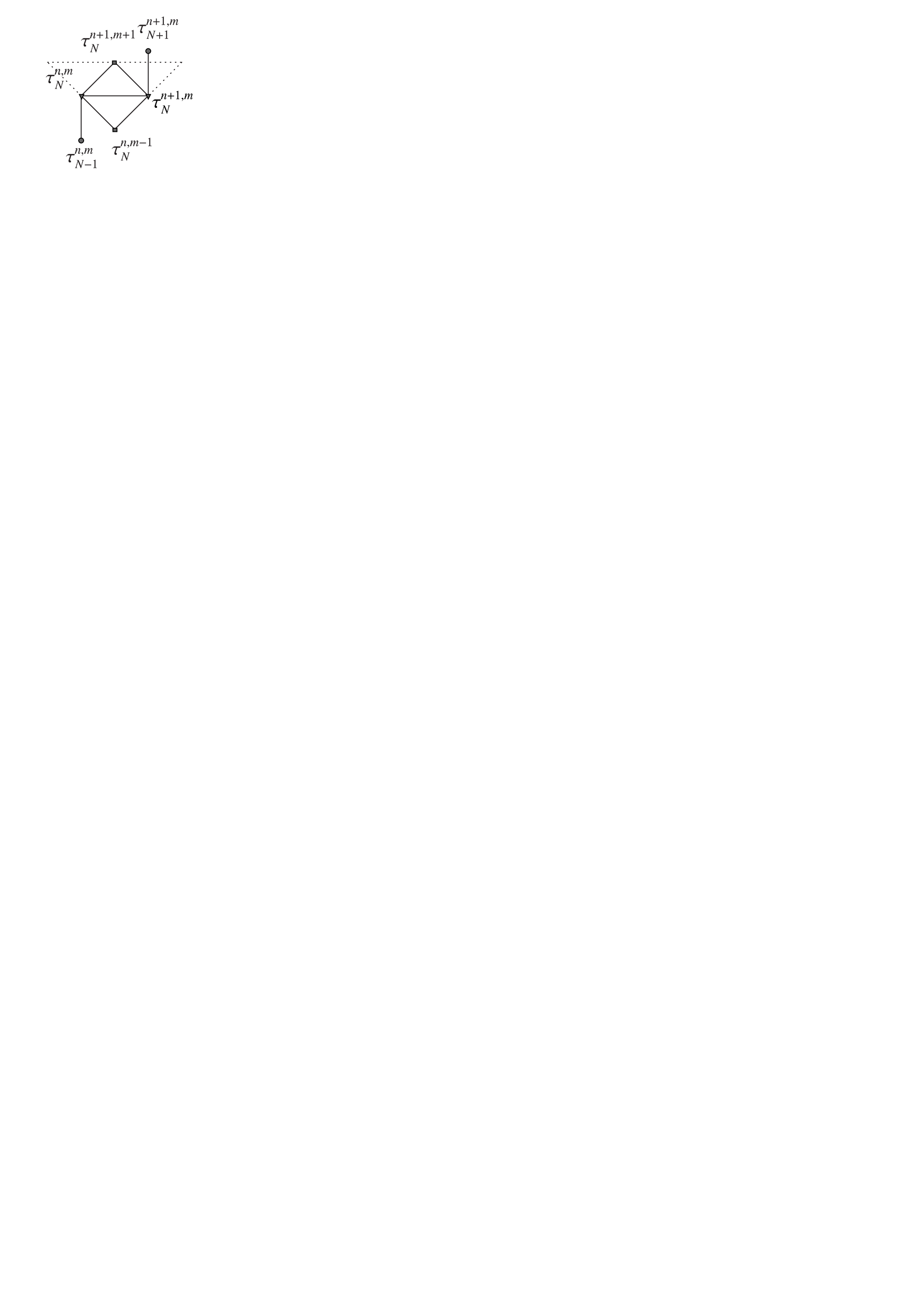}\hspace{1em}
\raise13pt\hbox{\includegraphics[width=0.26\textwidth]{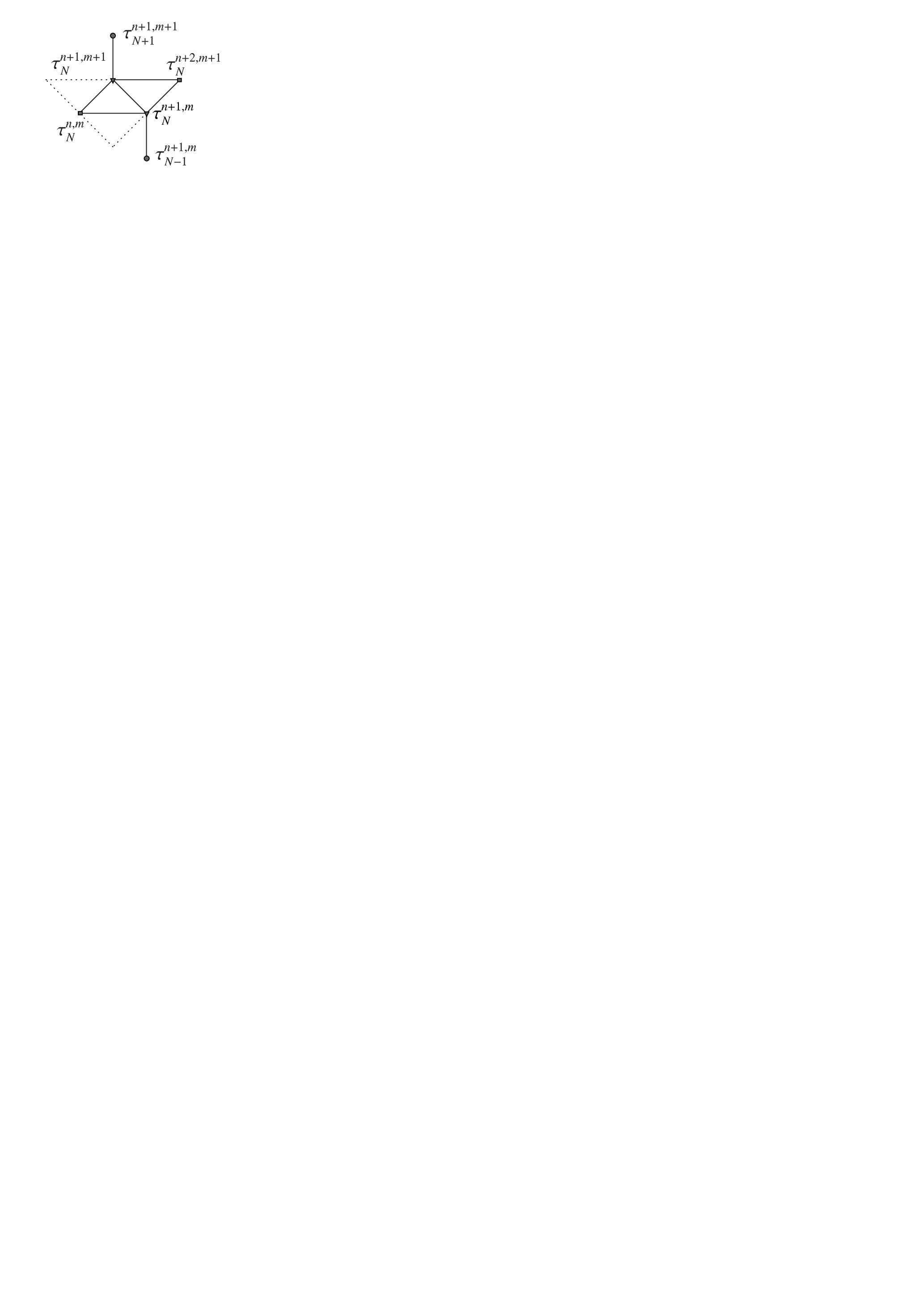}}
\includegraphics[width=0.24\textwidth]{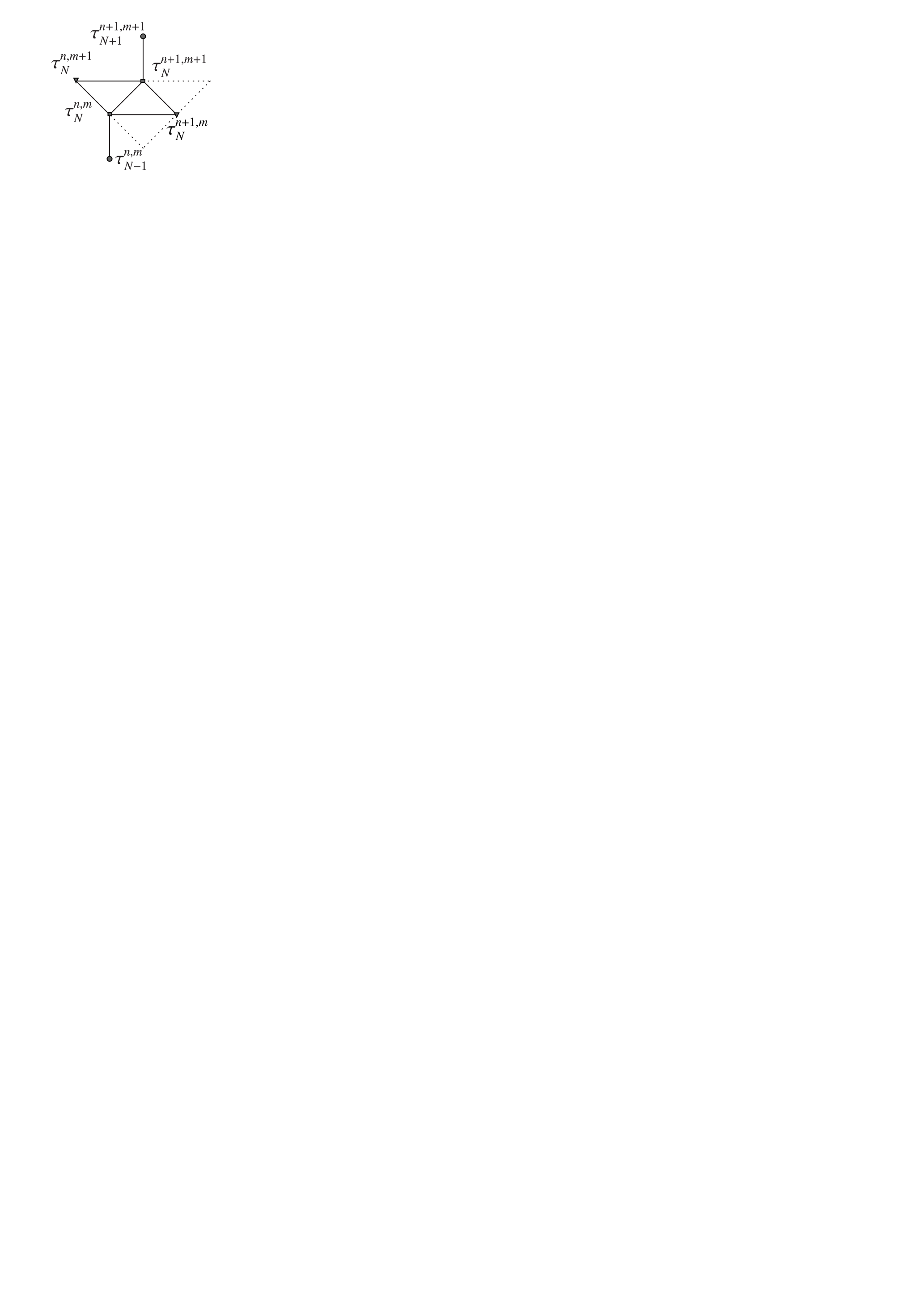}\hspace{1em}
\includegraphics[width=0.23\textwidth]{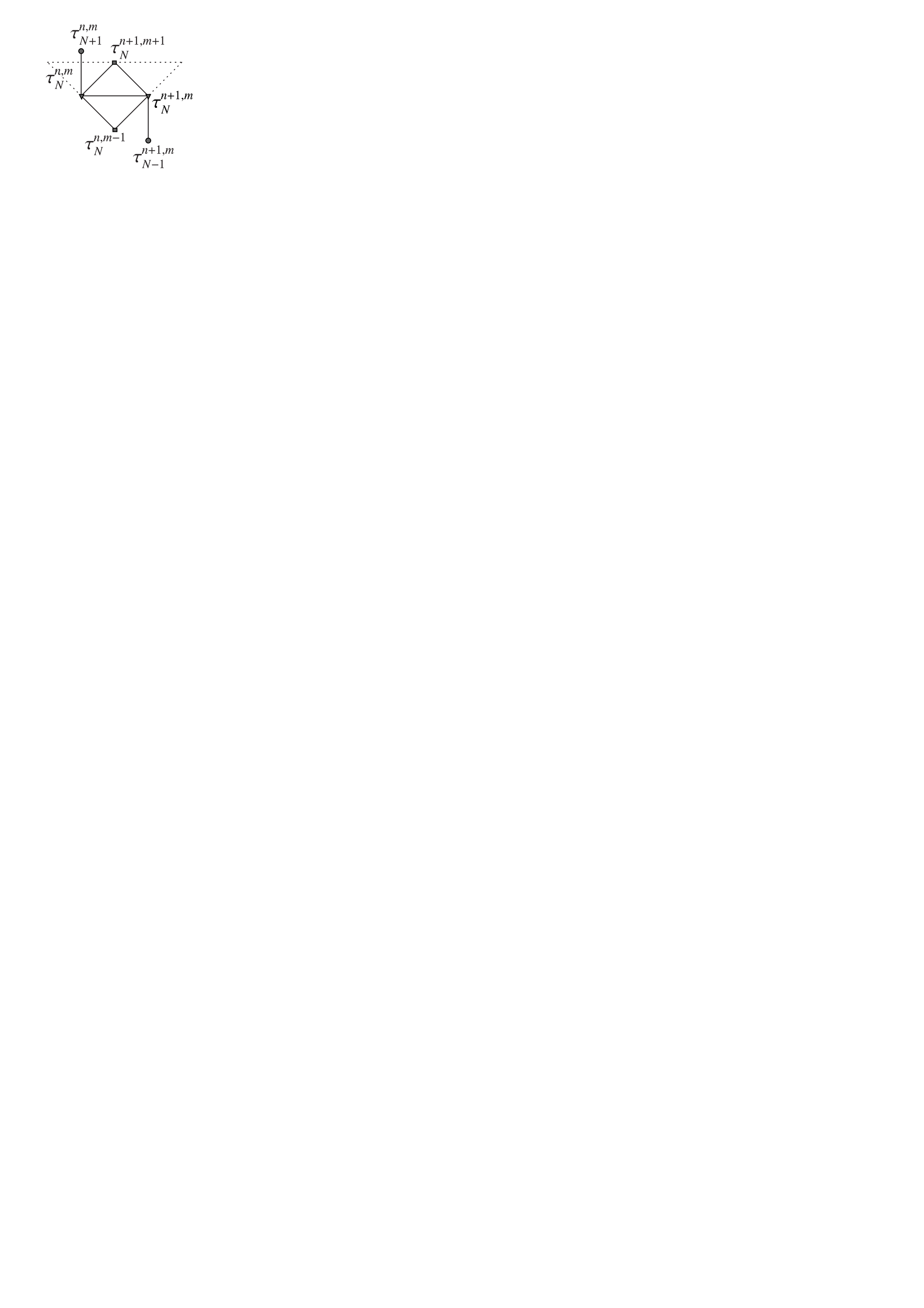}\hspace{1em}
\raise15pt\hbox{\includegraphics[width=0.26\textwidth]{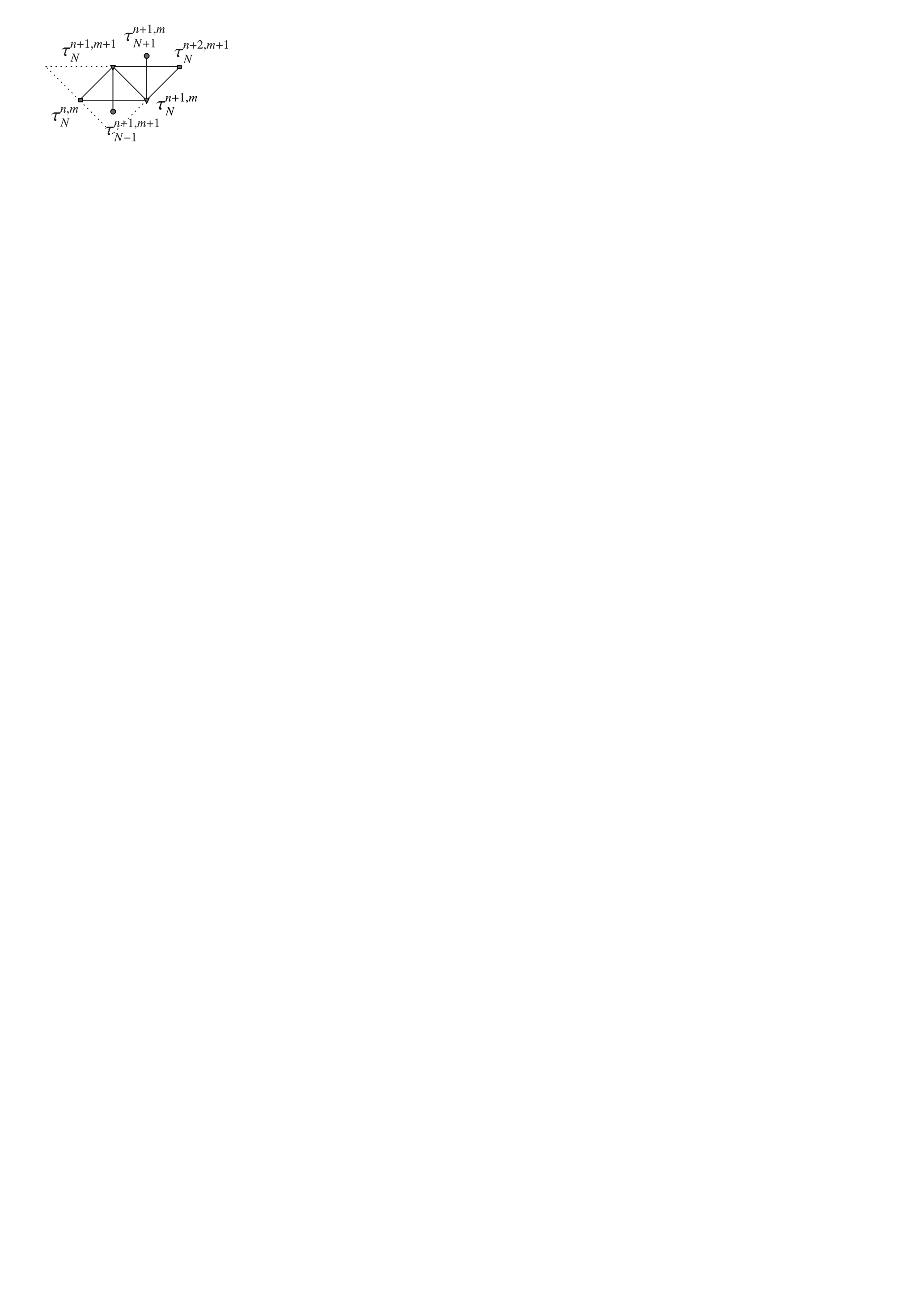}}
\caption{Configuration of $\tau$ functions for the bilinear equations of type V.
Upper left: (\ref{TypeV_1}), upper center: (\ref{TypeV_2}), upper right: (\ref{TypeV_3}),
lower left: (\ref{TypeV_4}), lower center: (\ref{TypeV_5}), lower right: (\ref{TypeV_6}).}
\end{center}
\end{figure}
\begin{proof}
First, we prove (\ref{TypeV_1})--(\ref{TypeV_3}).
We rewrite (\ref{typeI:derivation1}) as
\begin{align}
 T_4(\overline{\tau}_0)
 -c^{-2/3}{a_0}^{-1/3}{a_1}^{-1}{a_2}^{-2/3}
  \frac{\overline{\tau}_1}{\tau_1\tau_2}
  \left(q^{1/3}c^{2/3}a_1\tau_0\overline{\tau}_2
 +\overline{\tau}_0\tau_2\right)
 -c^{2/3}{a_0}^{1/3}{a_1}^{2/3}a_2
  \frac{\overline{\tau}_0\overline{\tau}_2}{\tau_2} = 0.\label{TypeV_proof_1}
\end{align}
By using (\ref{TypeI_proof_2}), we have from (\ref{TypeV_proof_1}) that
\begin{equation}\label{eq:TypeV}
 T_4(\overline{\tau}_0)\tau_2
 -c^{-1/3}{a_0}^{-1/6}{a_1}^{-1/3}{a_2}^{-1/2}\overline{\tau}_1T_2(\overline{\tau}_0)
 -c^{2/3}{a_0}^{1/3}{a_1}^{2/3}a_2\overline{\tau}_0\overline{\tau}_2 = 0,
\end{equation}
which is equivalent to
\begin{equation}\label{TypeV_proof_2}
 {T_1}^{-1}{T_4}^2(\tau_1)T_2(\tau_1) - c^{-1/3}{a_0}^{-1/6}{a_1}^{-1/3}{a_2}^{-1/2}
 T_4(\tau_1){T_1}^{-1}T_2T_4(\tau_1) - c^{2/3}{a_0}^{1/3}{a_1}^{2/3}a_2
 {T_1}^{-1}T_4(\tau_1)T_2T_4(\tau_1) = 0. 
\end{equation}
We obtain (\ref{TypeV_1}), (\ref{TypeV_2}), and (\ref{TypeV_3}) by applying
${T_1}^{l+1}{T_2}^m{T_4}^{n-1}$, ${T_1}^{l+1}{T_2}^m{T_4}^{n-1}\pi$, 
and ${T_1}^{l+1}{T_2}^m{T_4}^{n-1}\pi^2$ on
(\ref{TypeV_proof_2}), respectively.

Next, we prove (\ref{TypeV_4})--(\ref{TypeV_6}).
We rewrite (\ref{typeI:derivation1}) as
\begin{equation}\label{TypeVI_proof_1}
 T_4(\overline{\tau}_0)
 -{a_1}^{1/3}{a_2}^{-1/3}\cfrac{\overline{\tau}_2}{\tau_1\tau_2}
  \left(q^{1/3}c^{2/3}a_2\tau_1\overline{\tau}_0+\overline{\tau}_1\tau_0\right)
 -c^{-2/3}{a_0}^{-1/3}{a_1}^{-1}{a_2}^{-2/3}
  \frac{\overline{\tau}_0\overline{\tau}_1}{\tau_1} = 0.
\end{equation}
By using (\ref{TypeI_proof_3}), we have from (\ref{TypeVI_proof_1}) that
\begin{equation}\label{eq:TypeVI}
 {T_4}^2(\tau_0)\tau_1
  -c^{1/3}{a_0}^{1/6}{a_1}^{1/2}{a_2}^{1/3}T_3T_4(\tau_1)T_4(\tau_2)
  -c^{-2/3}{a_0}^{-1/3}{a_1}^{-1}{a_2}^{-2/3}T_4(\tau_0)T_4(\tau_1) =0 .
\end{equation}
We obtain (\ref{TypeV_4}), (\ref{TypeV_5}), and (\ref{TypeV_6})
by applying ${T_1}^{l+1}{T_2}^m{T_4}^{n-1}\pi^2$, ${T_1}^{l+1}{T_2}^m{T_4}^{n-1}$, and
${T_1}^{l+1}{T_2}^m{T_4}^{n-1}\pi$ on (\ref{eq:TypeVI}), respectively.
\qed
\end{proof}
\begin{proposition}[Type VI]\label{prop:type6} The following bilinear equations hold$:$
\begin{align}
 &\tau^{n,m}_{N+1}\tau^{n+1,m+1}_{N}
 -Q^{-3n+3m+2N-2}\gamma^{2}{\alpha_1}^{3}\tau^{n+1,m}_{N}\tau^{n,m+1}_{N+1}
 +Q^{-6n+6m+4N-4}\gamma^{4}{\alpha_1}^{6}\tau^{n,m}_{N}\tau^{n+1,m+1}_{N+1}=0,
 \label{TypeVI_1}\\
 &\tau^{n+1,m}_{N+1}\tau^{n,m}_{N}
 -Q^{-3m+2N+1}\gamma^{2}{\alpha_2}^{3}\tau^{n+1,m+1}_{N}\tau^{n,m-1}_{N+1}
 +Q^{-6m+4N+2}\gamma^{4}{\alpha_2}^{6}\tau^{n+1,m}_{N}\tau^{n,m}_{N+1}=0,
 \label{TypeVI_2}\\
 &\tau^{n+1,m+1}_{N+1}\tau^{n+1,m}_{N}
 -Q^{3n+2N+4}\gamma^{2}{\alpha_0}^{3}\tau^{n,m}_{N}\tau^{n+2,m+1}_{N+1}
 +Q^{6n+4N+8}\gamma^{4}{\alpha_0}^{6}\tau^{n+1,m+1}_{N}\tau^{n+1,m}_{N+1}=0,
 \label{TypeVI_3}\\
 &\tau^{n+1,m+1}_{N+1}\tau^{n,m}_{N}
 -Q^{-3n+3m-2N-4}\gamma^{-2}{\alpha_1}^{3}\tau^{n+1,m}_{N+1}\tau^{n,m+1}_{N}
 +Q^{-6n+6m-4N-8}\gamma^{-4}{\alpha_1}^{6}\tau^{n+1,m+1}_{N}\tau^{n,m}_{N+1}=0,
 \label{TypeVI_4}\\
 &\tau^{n,m}_{N+1}\tau^{n+1,m}_{N}
 -Q^{-3m-2N-1}\gamma^{-2}{\alpha_2}^{3}\tau^{n+1,m+1}_{N+1}\tau^{n,m-1}_{N}
 +Q^{-6m-4N-2}\gamma^{-4}{\alpha_2}^{6}\tau^{n,m}_{N}\tau^{n+1,m}_{N+1}=0,
 \label{TypeVI_5}\\
 &\tau^{n+1,m}_{N+1}\tau^{n+1,m+1}_{N}
 -Q^{3n-2N+2}\gamma^{-2}{\alpha_0}^{3}\tau^{n,m}_{N+1}\tau^{n+2,m+1}_{N}
 +Q^{6n-4N+4}\gamma^{-4}{\alpha_0}^{6}\tau^{n+1,m}_{N}\tau^{n+1,m+1}_{N+1}=0.
 \label{TypeVI_6}
\end{align} 
\end{proposition}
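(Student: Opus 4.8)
The plan is to recognise each of (\ref{TypeVI_1})--(\ref{TypeVI_6}) as a translate of one of the six defining bilinear relations attached to the simple reflections $s_0,s_1,s_2$ in Proposition~\ref{prop:action_tau}, so that essentially no computation beyond bookkeeping is needed. First I would rewrite
\[
 s_i(\tau_i)=\frac{u_i\tau_{i+1}\overline{\tau}_{i-1}+\overline{\tau}_{i+1}\tau_{i-1}}{u_i^{1/2}\overline{\tau}_i}
\]
as the three-term bilinear relation $u_i^{1/2}\,\overline{\tau}_i\,s_i(\tau_i)=u_i\,\tau_{i+1}\overline{\tau}_{i-1}+\overline{\tau}_{i+1}\tau_{i-1}$, and similarly $v_i^{1/2}\,\tau_i\,s_i(\overline{\tau}_i)=v_i\,\overline{\tau}_{i+1}\tau_{i-1}+\tau_{i+1}\overline{\tau}_{i-1}$ for $i\in\mathbb{Z}/3\mathbb{Z}$. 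For the $\overline{\tau}$-family these relations already occur in the proof of Proposition~\ref{prop:dToda}: using $T_2=s_1\pi s_2$, $T_3=s_2s_1\pi$ one has $T_2(\overline{\tau}_0)=s_1(\overline{\tau}_1)$, $T_3(\overline{\tau}_1)=s_2(\overline{\tau}_2)$, so that (\ref{TypeI_proof_2}), (\ref{TypeI_proof_3}) and their third cyclic companion $T_1(\overline{\tau}_2)=s_0(\overline{\tau}_0)$ are exactly the three $\overline{\tau}$-type relations; the $\tau$-type relations are the same with $\tau_i\leftrightarrow\overline{\tau}_i$ and $v_i\leftrightarrow u_i$.

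Next I would pass to the $\tau^{n,m}_N$ notation. Using the identifications recorded after (\ref{notation:tau}) (so $\tau_0=\tau^{-1,0}_{0}$, $\tau_1=\tau^{0,0}_{0}$, $\tau_2=\tau^{0,1}_{0}$, $\overline{\tau}_0=\tau^{-1,0}_{1}$, $\overline{\tau}_1=\tau^{0,0}_{1}$, $\overline{\tau}_2=\tau^{0,1}_{1}$), together with the action of $\widetilde{W}((A_2+A_1)^{(1)})$ on $\tau^{n,m}_N$ ($s_1(\tau^{n,m}_N)=\tau^{m-1,n+1}_N$, $s_0(\tau^{n,m}_N)=\tau^{-n,m-n}_N$, $s_2(\tau^{n,m}_N)=\tau^{n-m,-m}_N$, with the index $N$ carried along) to evaluate the left-hand sides — e.g. $s_1(\tau_1)=\tau^{-1,1}_{0}$, $s_1(\overline{\tau}_1)=\tau^{-1,1}_{1}$, $s_0(\tau_0)=\tau^{1,1}_{0}$, $s_2(\overline{\tau}_2)=\tau^{-1,-1}_{1}$ — and rewriting $u_i=Q^{-2}\gamma^{-4}\alpha_i^{6}$, $v_i=Q^{2}\gamma^{4}\alpha_i^{6}$ via (\ref{notation:parameters}), each of the six relations becomes the corresponding member of (\ref{TypeVI_1})--(\ref{TypeVI_6}) at the base point $(n,m,N)=(-1,0,0)$: the $\overline{\tau}$-type relations for $i=1,2,0$ give (\ref{TypeVI_1}), (\ref{TypeVI_2}), (\ref{TypeVI_3}), and the $\tau$-type relations give (\ref{TypeVI_4}), (\ref{TypeVI_5}), (\ref{TypeVI_6}).

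Finally I would apply $T_1^{n+1}T_2^mT_4^N$ to each base relation. Since $T_1,T_2,T_4$ commute and act on the lattice by $\tau^{a,b}_c\mapsto\tau^{a+n+1,b+m}_{c+N}$, while on the parameters they act by the shifts listed below (\ref{notation:parameters}) (so $Q$ is fixed, $T_4(\gamma)=Q\gamma$, $T_1(\alpha_0)=Q\alpha_0$, $T_1(\alpha_1)=Q^{-1}\alpha_1$, $T_2(\alpha_1)=Q\alpha_1$, $T_2(\alpha_2)=Q^{-1}\alpha_2$, and so on), each base relation transports to general $(n,m,N)$ with precisely the powers of $Q,\gamma,\alpha_j$ displayed. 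In practice I would carry out the case $i=1$ in full and then invoke the $\pi$-symmetry ($\pi s_i\pi^{-1}=s_{i+1}$, $\pi(\tau_i)=\tau_{i+1}$), which permutes each family of three cyclically, together with the reflection $r$ exchanging the two families, to dispatch the remaining five equations.

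The only genuine work, and the step most prone to error, is the exponent bookkeeping: checking that after the substitutions $u_i=Q^{-2}\gamma^{-4}\alpha_i^6$, $v_i=Q^2\gamma^4\alpha_i^6$ and after applying $T_1^{n+1}T_2^mT_4^N$, the powers of $Q$, $\gamma$ and the $\alpha_j$ reproduce the (mildly intricate) coefficients of (\ref{TypeVI_1})--(\ref{TypeVI_6}), signs included. A secondary subtlety is that the identifications such as $s_1(\tau_1)=\tau^{-1,1}_{0}$ hold only for the normalisation of $\tau^{n,m}_N$ fixed by (\ref{notation:tau}), so the base relations must be written in that gauge before being translated; once that is respected, the argument is a direct unwinding of Proposition~\ref{prop:action_tau}.
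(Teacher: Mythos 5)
Your proposal is correct and is essentially the paper's own proof: the paper likewise takes the bilinear three-term relations coming from the $s_1$-action on $\tau_1$ and $\overline{\tau}_1$ (written there as $T_2(\tau_0)$ and $T_2(\overline{\tau}_0)$, i.e.\ (\ref{eq:TypeVIII}) and (\ref{TypeI_proof_2})) and obtains all six equations by applying ${T_1}^{l+1}{T_2}^m{T_4}^n\pi^j$, which is exactly your ``translate the $s_i$-relations from the base point and use $\pi$- (and $r$-) symmetry'' scheme. I spot-checked your base-point identifications (e.g.\ $s_1(\tau_1)=\tau^{-1,1}_0$, $v_0^{1/2}=Q\gamma^2\alpha_0^3$) and the transported exponents against (\ref{TypeVI_1})--(\ref{TypeVI_4}); they come out right.
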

\begin{figure}[h]
\begin{center}
\includegraphics[width=0.25\textwidth]{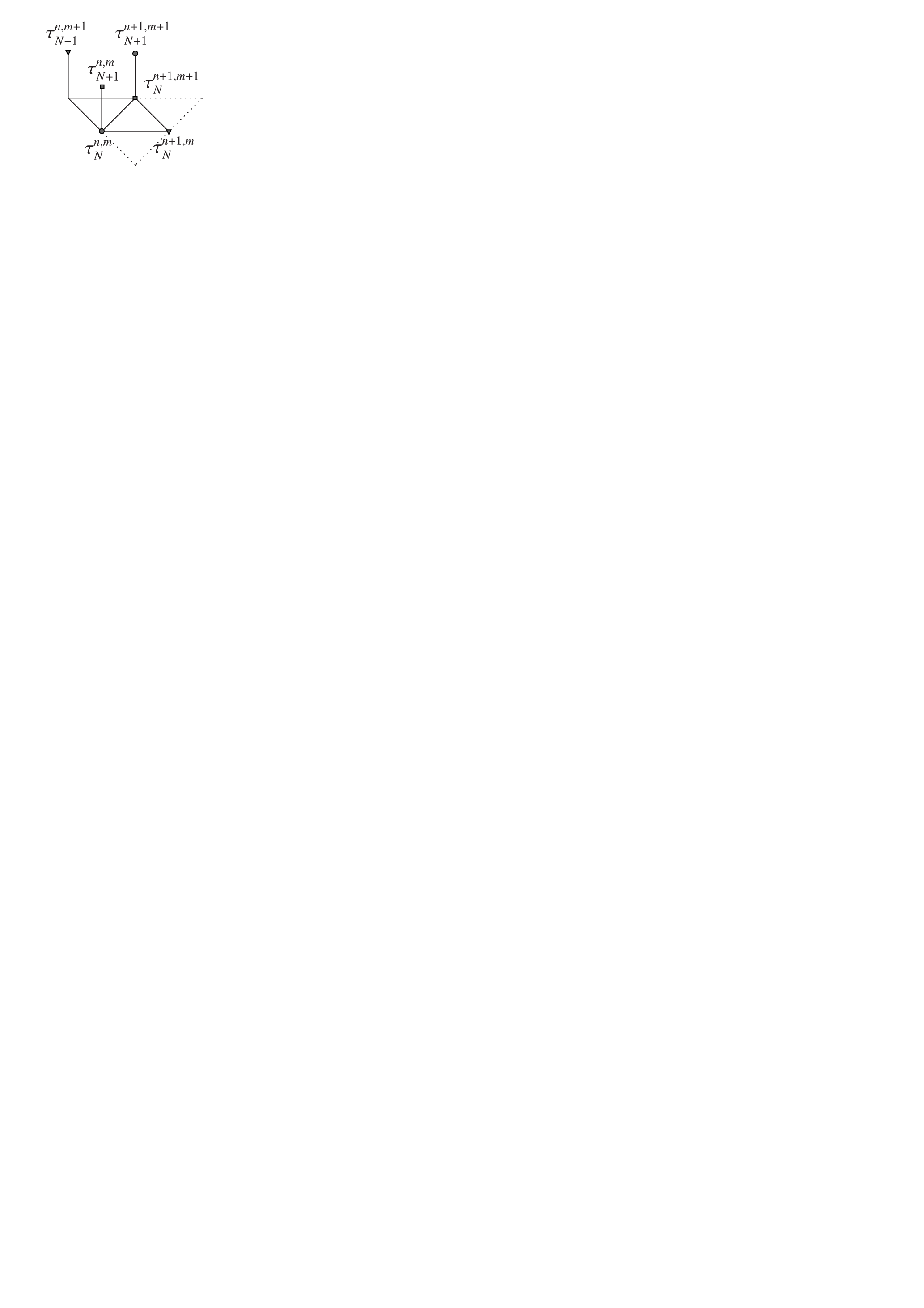}\hspace{1em}
\includegraphics[width=0.25\textwidth]{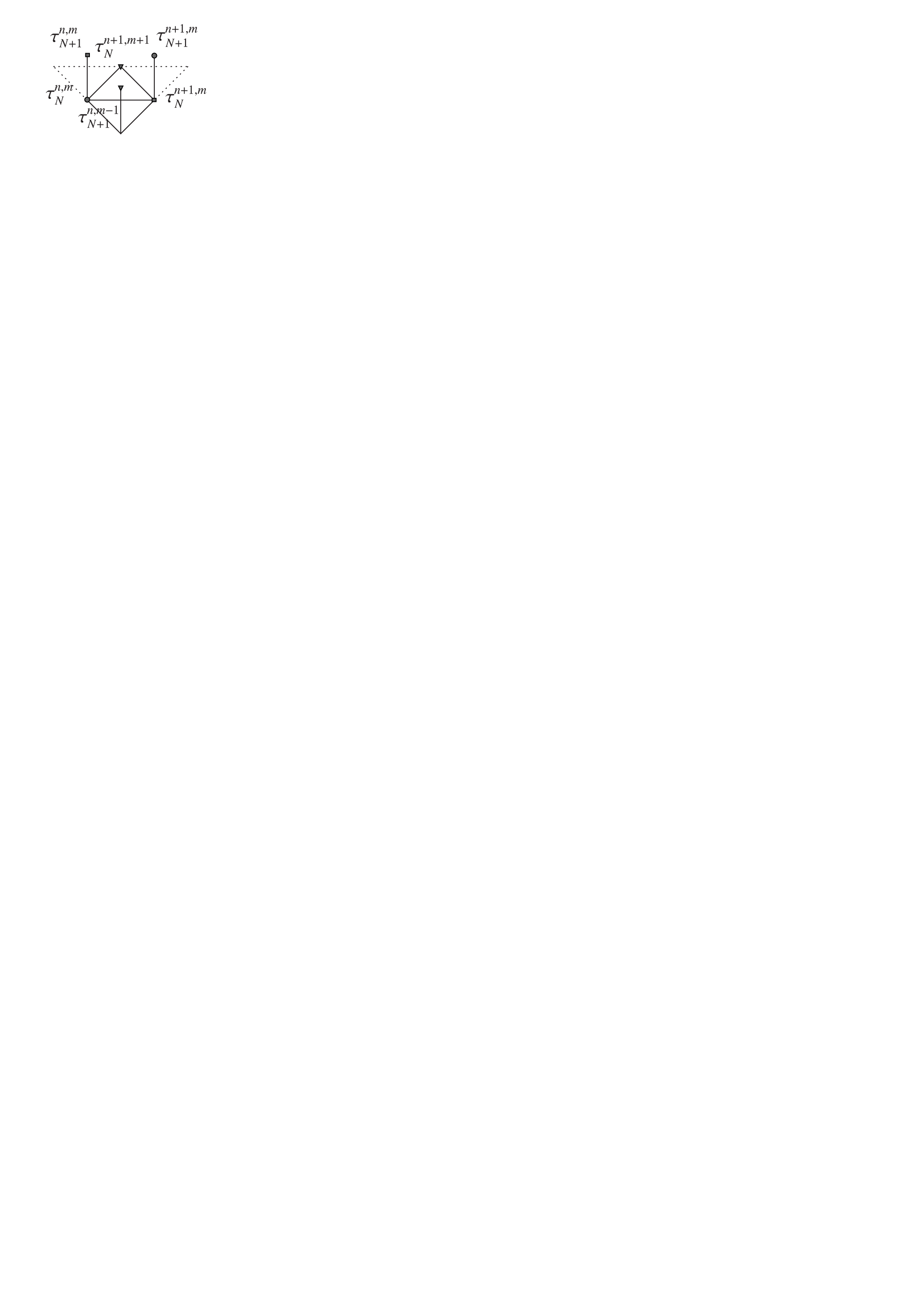}\hspace{1em}
\raise5pt\hbox{\includegraphics[width=0.26\textwidth]{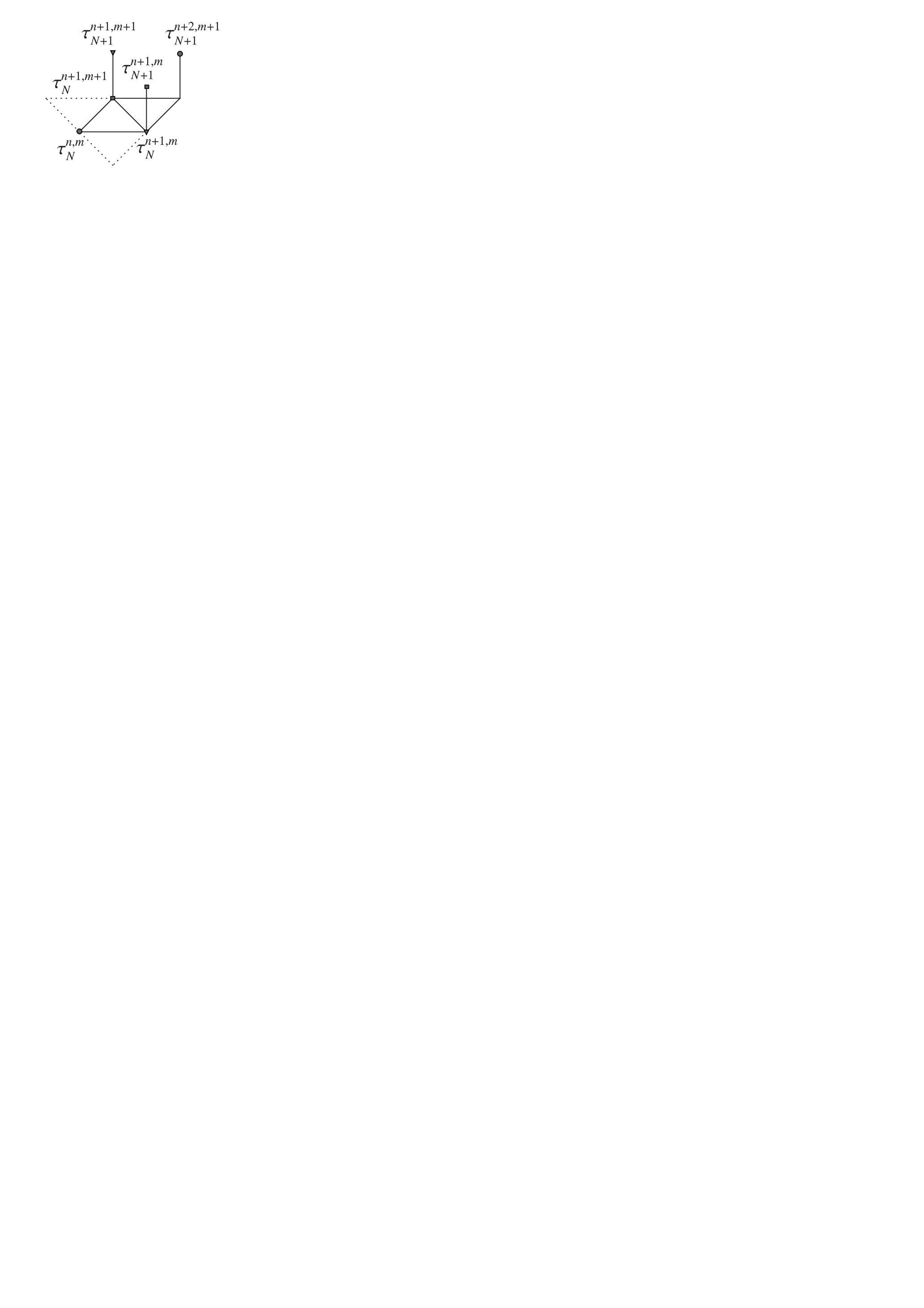}}
\raise5pt\hbox{\includegraphics[width=0.25\textwidth]{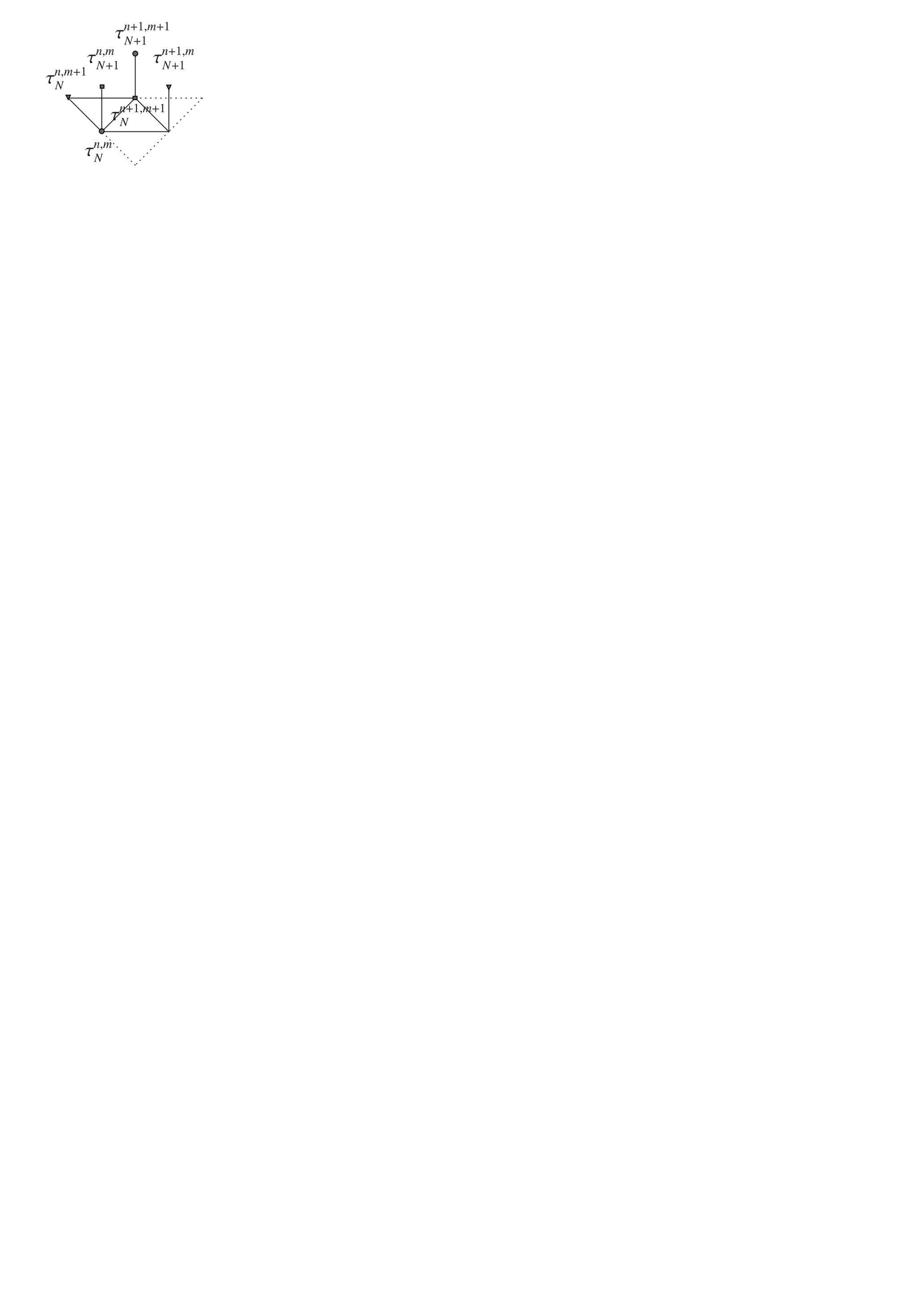}}\hspace{1em}
\includegraphics[width=0.25\textwidth]{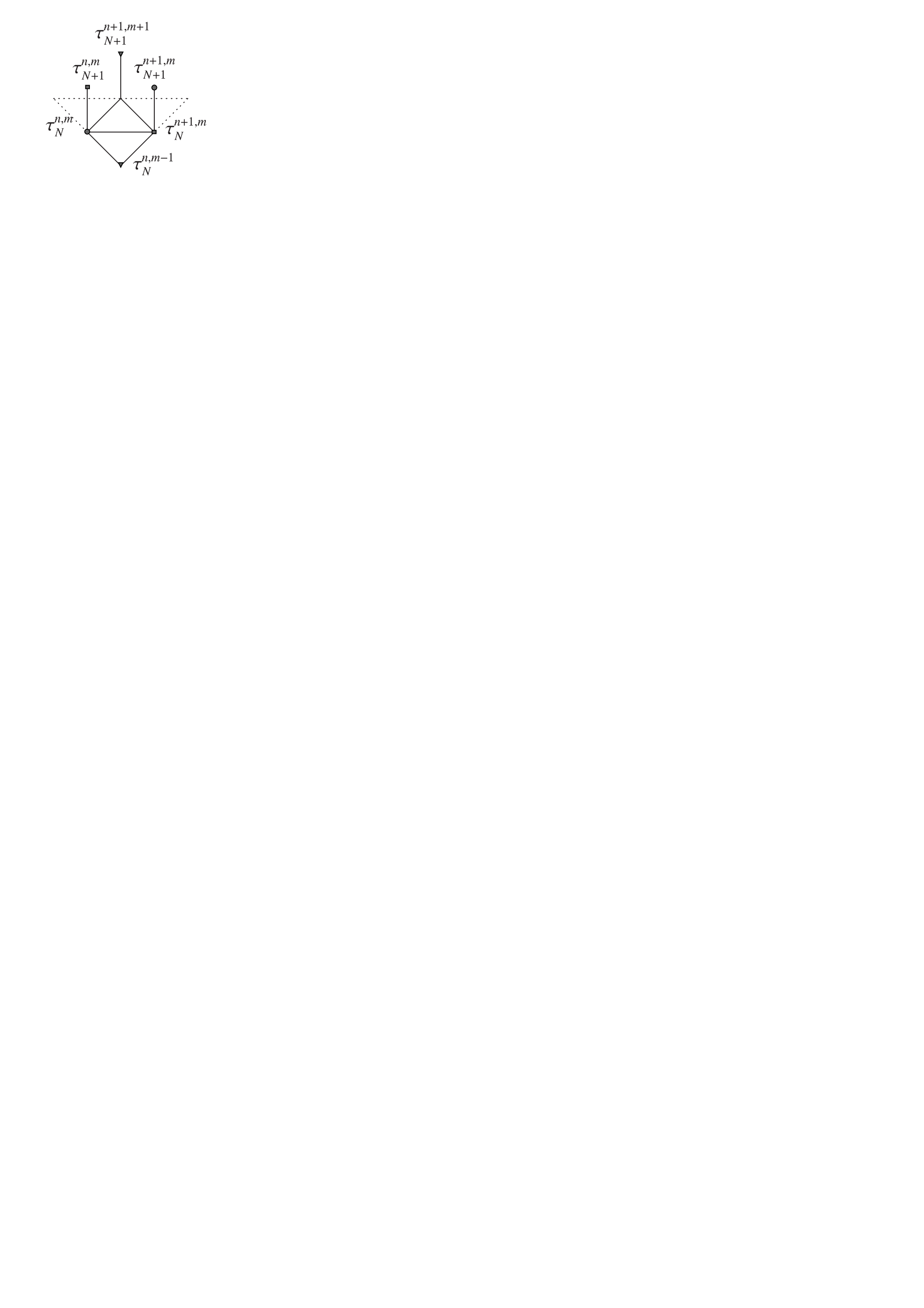}\hspace{1em}
\raise11pt\hbox{\includegraphics[width=0.27\textwidth]{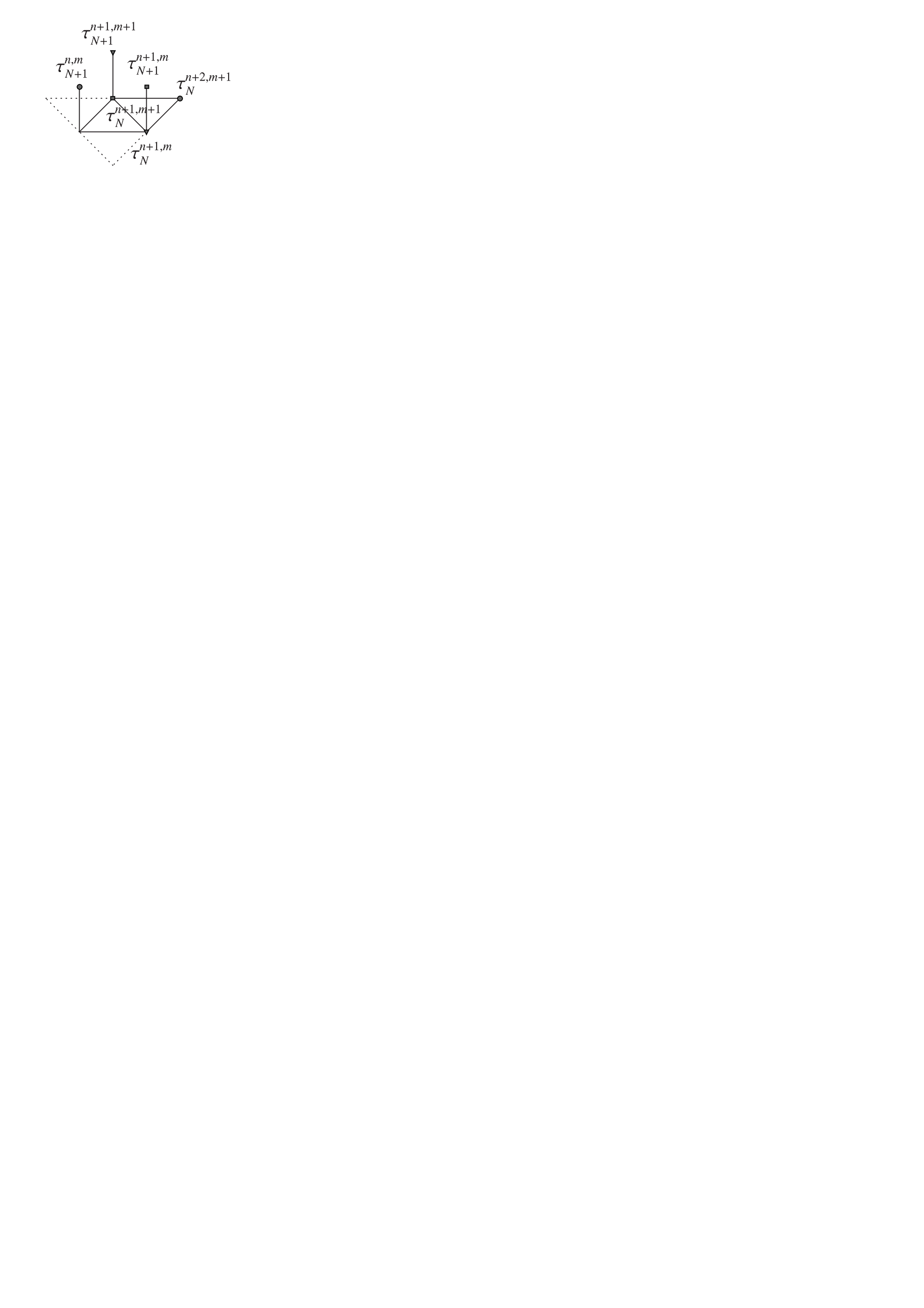}}
\caption{Configuration of $\tau$ functions for the bilinear equations of type VI.
Upper left: (\ref{TypeVI_1}), upper center: (\ref{TypeVI_2}), 
upper right: (\ref{TypeVI_3})
lower left: (\ref{TypeVI_4}), lower center: (\ref{TypeVI_5}),
lower right: (\ref{TypeVI_6}).}
\end{center}
\end{figure}
\begin{proof}
First, we prove (\ref{TypeVI_1})--(\ref{TypeVI_3}).
Equations (\ref{TypeVI_1}), (\ref{TypeVI_2}), and (\ref{TypeVI_3}) can be derived 
by applying ${T_1}^{l+1}{T_2}^m{T_4}^{n}$, 
${T_1}^{l+1}{T_2}^m{T_4}^{n}\pi$, and ${T_1}^{l+1}{T_2}^m{T_4}^{n}\pi^2$
on (\ref{TypeI_proof_3}), respectively.

Next, we prove (\ref{TypeVI_4})--(\ref{TypeVI_6}).
By applying  $T_2$ on $\tau_0$, we obtain
\begin{align}
q^{-1/6}c^{-1/3}{a_1}^{1/2}\overline{\tau}_1T_2(\tau_0)
-q^{-1/3}c^{-2/3}a_1\tau_2\overline{\tau}_0
-\overline{\tau}_2\tau_0=0.\label{eq:TypeVIII}
\end{align}
Equations (\ref{TypeVI_4}), (\ref{TypeVI_5}), and (\ref{TypeVI_6}) can be derived
by applying ${T_1}^{l+1}{T_2}^m{T_4}^{n}$, ${T_1}^{l+1}{T_2}^m{T_4}^{n}\pi$, 
and ${T_1}^{l+1}{T_2}^m{T_4}^{n}\pi^2$
on (\ref{eq:TypeVIII}), respectively.
\qed
\end{proof}
\begin{remark}\rm
The bilinear equations in Proposition \ref{prop:qp3_bl} correspond to
(\ref{TypeVI_1}), (\ref{TypeVI_3}), 
(\ref{TypeVI_4}), (\ref{TypeVI_6}), and (\ref{TypeI_3}). 
\end{remark}
\subsection{Bilinear equations for ${\bm q}$-P$_{\rm\bf II}$}\label{sec:bl_qp2}
The bilinear equations for $q$-P$_{\rm II}$ are derived 
from the equations in Section \ref{sec:bl_qp3}.
Since the parameter space and $\tau$ functions are restricted, we only
have to pick up the bilinear equations that consist 
of the $\tau$ functions on the ``unit-strip,''
and to rewrite them in terms of $R_1$ instead of $T_1$ (see Figure
\ref{fig:R1_strip}). Therefore, only the bilinear equations of type V and VI are
relevant. We use the notation in (\ref{notation:tau_qp2}).
\begin{proposition} \label{prop:tau_bl_qp2}
The following bilinear equations hold$:$
\begin{align}
 &\tau^{k+1}_{N+1}\tau^{k+2}_{N-1}
 -Q^{(k-4N+2)/2}\gamma^{-2}\alpha_0\tau^{k+3}_{N}\tau^{k}_{N}
 -Q^{-k+4N-2}\gamma^{4}{\alpha_0}^{-2}\tau^{k+1}_{N}\tau^{k+2}_{N}=0,
 \label{Prop_B2_1}\\
 &\tau^{k+2}_{N+1}\tau^{k+1}_{N-1}
 -Q^{(k+4N+2)/2}\gamma^{2}\alpha_0\tau^{k+3}_{N}\tau^{k}_{N}
 -Q^{-k-4N-2}\gamma^{-4}{\alpha_0}^{-2}\tau^{k+2}_{N}\tau^{k+1}_{N}=0,
 \label{Prop_B2_2}\\
 &Q^{-(3k-4N+4)/2}\gamma^{2}{\alpha_0}^{-3}\tau^{k+3}_{N}\tau^{k}_{N+1}
 -Q^{-3k+4N-4}\gamma^{4}{\alpha_0}^{-6}\tau^{k+1}_{N}\tau^{k+2}_{N+1}
 -\tau^{k+1}_{N+1}\tau^{k+2}_{N}=0,
 \label{Prop_B2_3}\\
 &Q^{-(3k+4N+8)/2}\gamma^{-2}{\alpha_0}^{-3}\tau^{k+3}_{N+1}\tau^{k}_{N}
 -Q^{-3k-4N-8}\gamma^{-4}{\alpha_0}^{-6}\tau^{k+2}_{N}\tau^{k+1}_{N+1}
 -\tau^{k+2}_{N+1}\tau^{k+1}_{N}=0.
 \label{Prop_B2_4}
\end{align} 
\end{proposition}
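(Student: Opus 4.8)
The plan is to derive Proposition \ref{prop:tau_bl_qp2} from the bilinear equations of types V and VI already proved in Propositions \ref{prop:type5} and \ref{prop:type6}, by imposing the specialization $a_2=q^{1/2}$ and restricting the $\tau$-function lattice to the unit-strip $m\in\{0,1\}$, exactly as anticipated in the Remark after Proposition \ref{prop:type6}. Setting $a_2=q^{1/2}$ means, in the notation (\ref{notation:parameters}), $\alpha_2=Q^{1/2}$, and since $\alpha_0\alpha_1\alpha_2=Q$ one also has $\alpha_1=Q^{1/2}\alpha_0^{-1}$; at the same time $R_1$ acts as the genuine translation $R_1(\tau^k_N)=\tau^{k+1}_N$ with $R_1:(a_0,a_1,c)\mapsto(q^{1/2}a_0,q^{-1/2}a_1,c)$, i.e.\ $\alpha_0\mapsto Q^{1/2}\alpha_0$ with $\gamma,Q$ fixed, and $R_1^2=T_1$. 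Among the twelve type V/VI equations I would keep only those whose configuration occupies two consecutive $m$-levels — this discards (\ref{TypeV_2}) and (\ref{TypeV_5}), which span three — and evaluate each at $m=0$, so that every $\tau$ that occurs carries $m\in\{0,1\}$.

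The next step is to rewrite these specialized equations using the dictionary (\ref{notation:tau_qp2_qp3}), namely $\tau^{n,0}_N=\tau^{2n}_N$ and $\tau^{n,1}_N=\tau^{2n-1}_N$, and to collapse the prefactors by substituting $\alpha_2=Q^{1/2}$, $\alpha_1=Q^{1/2}\alpha_0^{-1}$. A term-by-term comparison then pins down the sources: at $m=0$, equation (\ref{TypeV_1}) (with $n=(k+1)/2$) becomes (\ref{Prop_B2_1}) for odd $k$ and (\ref{TypeV_3}) (with $n=k/2$) becomes (\ref{Prop_B2_1}) for even $k$; similarly (\ref{TypeV_4}) and (\ref{TypeV_6}) yield (\ref{Prop_B2_2}), and (\ref{TypeVI_1}) and (\ref{TypeVI_4}) yield (\ref{Prop_B2_3}) and (\ref{Prop_B2_4}) for one residue class of $k$. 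For instance, putting $m=0$ and $n=(k+1)/2$ in (\ref{TypeV_1}) and using $\alpha_1^{3}=Q^{3/2}\alpha_0^{-3}$ turns the prefactor $Q^{n+m-2N}\gamma^{-2}\alpha_0^{2}\alpha_1$ into exactly $Q^{(k-4N+2)/2}\gamma^{-2}\alpha_0$. The remaining residue class of $k$ in (\ref{Prop_B2_3}) and (\ref{Prop_B2_4}) is then supplied by applying $R_1$ once: since $R_1$ sends $k\mapsto k+1$ and $\alpha_0\mapsto Q^{1/2}\alpha_0$ while fixing $\gamma$ and $Q$, a short check gives e.g.\ $R_1(Q^{-(3k-4N+4)/2}\gamma^{2}\alpha_0^{-3})=Q^{-(3(k+1)-4N+4)/2}\gamma^{2}\alpha_0^{-3}$, so the $R_1$-image of the identity at odd $k$ is precisely its instance at the next (even) value. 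Finally, Proposition \ref{prop:qp2_bl} follows immediately, since (\ref{qp2_bl:Prop_A2_1}), (\ref{qp2_bl:Prop_A2_3}), (\ref{qp2_bl:Prop_A2_4}) are nothing but (\ref{Prop_B2_1}), (\ref{Prop_B2_3}), (\ref{Prop_B2_4}) after the shifts $k\mapsto k-1$, $k\mapsto k-2$, $k\mapsto k-2$.

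The main obstacle is purely one of bookkeeping rather than of ideas. One has to (i) match each of the four target equations with the correct type V/VI equation and value of $m$ so that all three quadratic terms land on the strip and agree up to an overall sign, and (ii) check that eliminating $\alpha_1,\alpha_2$ via $\alpha_2=Q^{1/2}$, $\alpha_1=Q^{1/2}\alpha_0^{-1}$ together with the index halving $n=k/2$ or $n=(k+1)/2$ collapses the resulting products of $Q$-, $\gamma$-, and $\alpha_i$-powers to exactly the half-integer $Q$-exponents appearing in (\ref{Prop_B2_1})--(\ref{Prop_B2_4}), and likewise that each $R_1$-image reproduces the correct parity. Because these are precisely the places where sign and exponent slips creep in, I would process the equations one at a time, always eliminating $\alpha_1$ and $\alpha_2$ from a type V/VI prefactor before translating the indices via the dictionary.
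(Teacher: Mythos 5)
Your derivation is correct in substance but follows a genuinely different route from the paper's. The paper does not specialize the finished type V/VI families: instead it returns to the seed relations (\ref{TypeI_proof_2}), (\ref{eq:TypeV}), (\ref{eq:TypeVI}) (composed with $\pi$) and (\ref{eq:TypeVIII}), rewrites each of them in terms of $R_1$ and $T_4$ via the dictionary (\ref{notation:tau_qp2_2}), and then applies a single operator ${R_1}^{k+\ast}{T_4}^{N+\ast}$; because $R_1$ itself is the generator, one application produces (\ref{Prop_B2_1})--(\ref{Prop_B2_4}) for all $k$ at once, with no parity case-splitting. Your route — evaluate type V/VI at $m=0$ with $a_2=q^{1/2}$, translate through $\tau^{n,0}_N=\tau^{2n}_N$, $\tau^{n,1}_N=\tau^{2n-1}_N$, then use one application of $R_1$ (which preserves the line $a_2=q^{1/2}$ and sends $k\mapsto k+1$, $\alpha_0\mapsto Q^{1/2}\alpha_0$) to reach the other residue class — is logically sound and has the merit of reusing Propositions \ref{prop:type5} and \ref{prop:type6} as black boxes; you also correctly discard (\ref{TypeV_2}) and (\ref{TypeV_5}), whose configurations leave the unit-strip. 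The price is exactly the bookkeeping you anticipate, and one concrete warning there: checking the $m=0$ instance of (\ref{TypeV_1}) against the seed (\ref{eq:TypeV}) at the base point $(n,m,N)=(-1,0,1)$ shows that the exponent $-2n+2m-4N$ in its third term is inconsistent (it should read $-2n-2m+4N$ to match both the seed and (\ref{Prop_B2_1})), whereas (\ref{TypeV_3}), (\ref{TypeVI_1}), (\ref{TypeVI_4}) and (\ref{TypeVI_6}) do collapse to the stated half-integer $Q$-exponents as you describe. So if you take the printed formulas at face value you will hit a mismatch in one parity class; your $R_1$-shift from the class obtained via (\ref{TypeV_3}) recovers the correct coefficients, which is precisely why organizing the argument around the $R_1$-image (or, as the paper does, around the seeds) is the safer bookkeeping.
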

\begin{proof}
Noticing (\ref{notation:tau_qp2_2}), we obtain from (\ref{eq:TypeV})
\begin{equation}
 {R_1}^{-2}{T_4}^2(\tau_1){R_1}^{-1}(\tau_1)
 -q^{-5/12}c^{-1/3}{a_0}^{1/6}
  T_4(\tau_1){R_1}^{-3}T_4(\tau_1)
 -q^{5/6}c^{2/3}{a_0}^{-1/3}
  {R_1}^{-2}T_4(\tau_1){R_1}^{-1}T_4(\tau_1)=0,
\end{equation}
from which (\ref{Prop_B2_1}) is derived by applying ${R_1}^{m+3}{T_4}^{n-1}$. Similarly, we have
\begin{equation}\label{Prop_B2_proof}
 {T_4}^2(\tau_1){R_1}^{-1}(\tau_1)
 -q^{1/3}c^{1/3}{a_0}^{1/6}
  R_1T_4(\tau_1){R_1}^{-2}T_4(\tau_1)
 -q^{-2/3}c^{-2/3}{a_0}^{-1/3}
  T_4(\tau_1){R_1}^{-1}T_4(\tau_1)=0.
\end{equation}
by applying $\pi$ on (\ref{eq:TypeVI}). Then we obtain (\ref{Prop_B2_2})
by applying ${R_1}^{m+2}{T_4}^{n-1}$ on (\ref{Prop_B2_proof}).
Equation (\ref{Prop_B2_3}) is derived by applying ${R_1}^{m+3}{T_4}^{n}$ on 
\begin{equation}
 q^{1/6}c^{1/3}{a_0}^{-1/2}\tau_1{R_1}^{-3}T_4(\tau_1)
 -q^{1/3}c^{2/3}{a_0}^{-1}{R_1}^{-2}(\tau_1){R_1}^{-1}T_4(\tau_1)
 -{R_1}^{-2}T_4(\tau_1){R_1}^{-1}(\tau_1)=0, 
\end{equation}
which follows from (\ref{TypeI_proof_2}). Finally, we obtain (\ref{Prop_B2_4}) by applying
 ${R_1}^{m+3}{T_4}^{n}$ on
\begin{equation}
 q^{-1/6}c^{-1/3}{a_0}^{-1/2}T_4(\tau_1){R_1}^{-3}(\tau_1)
 -q^{-1/3}c^{-2/3}{a_0}^{-1}{R_1}^{-1}(\tau_1){R_1}^{-2}T_4(\tau_1)
 -{R_1}^{-1}T_4(\tau_1){R_1}^{-2}(\tau_1)=0,
\end{equation}
which is follows from (\ref{eq:TypeVIII}).
\qed
\end{proof}
\begin{figure}[h]
\begin{center}
\includegraphics[width=0.2\textwidth]{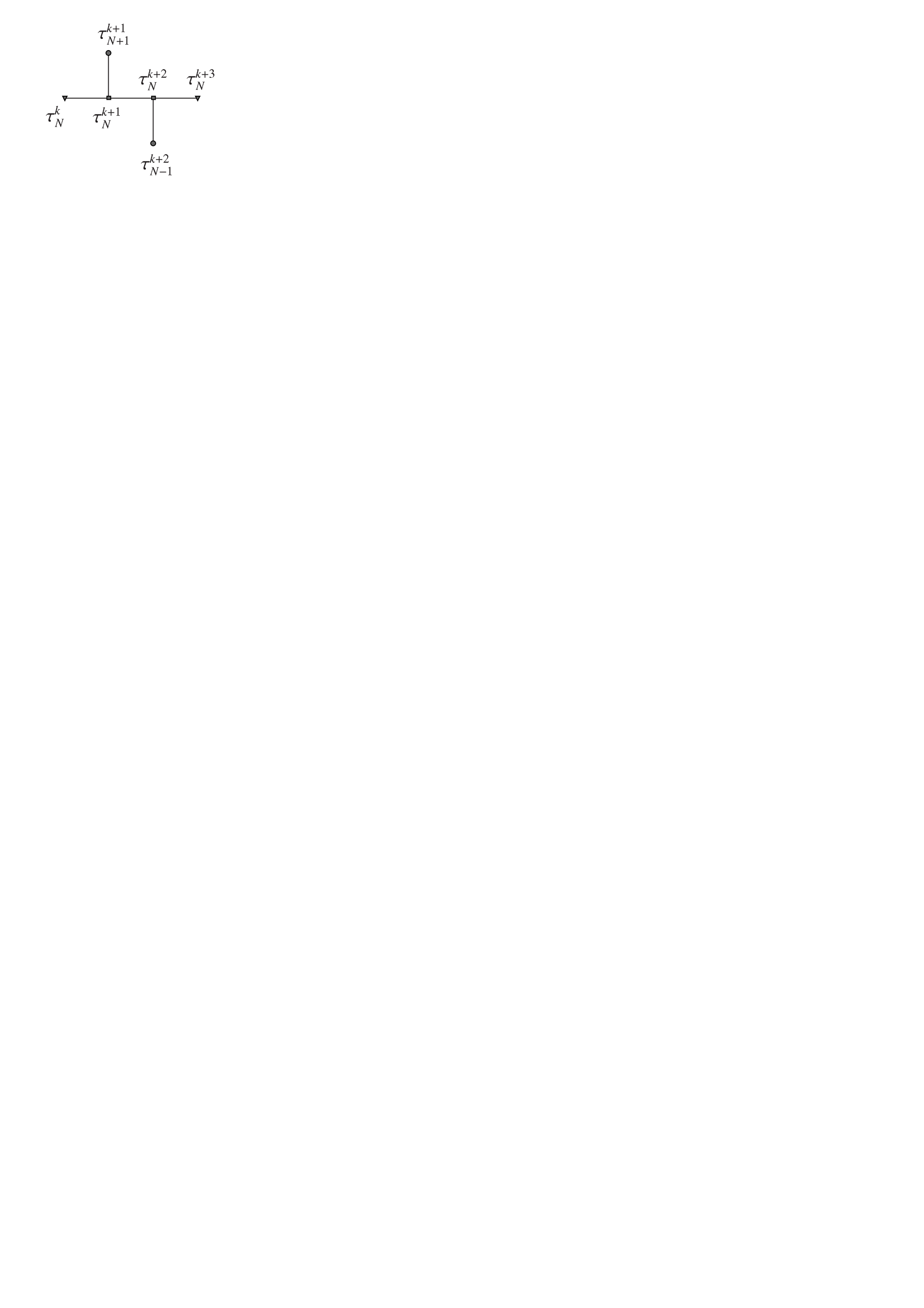}\hspace{1em}
\includegraphics[width=0.2\textwidth]{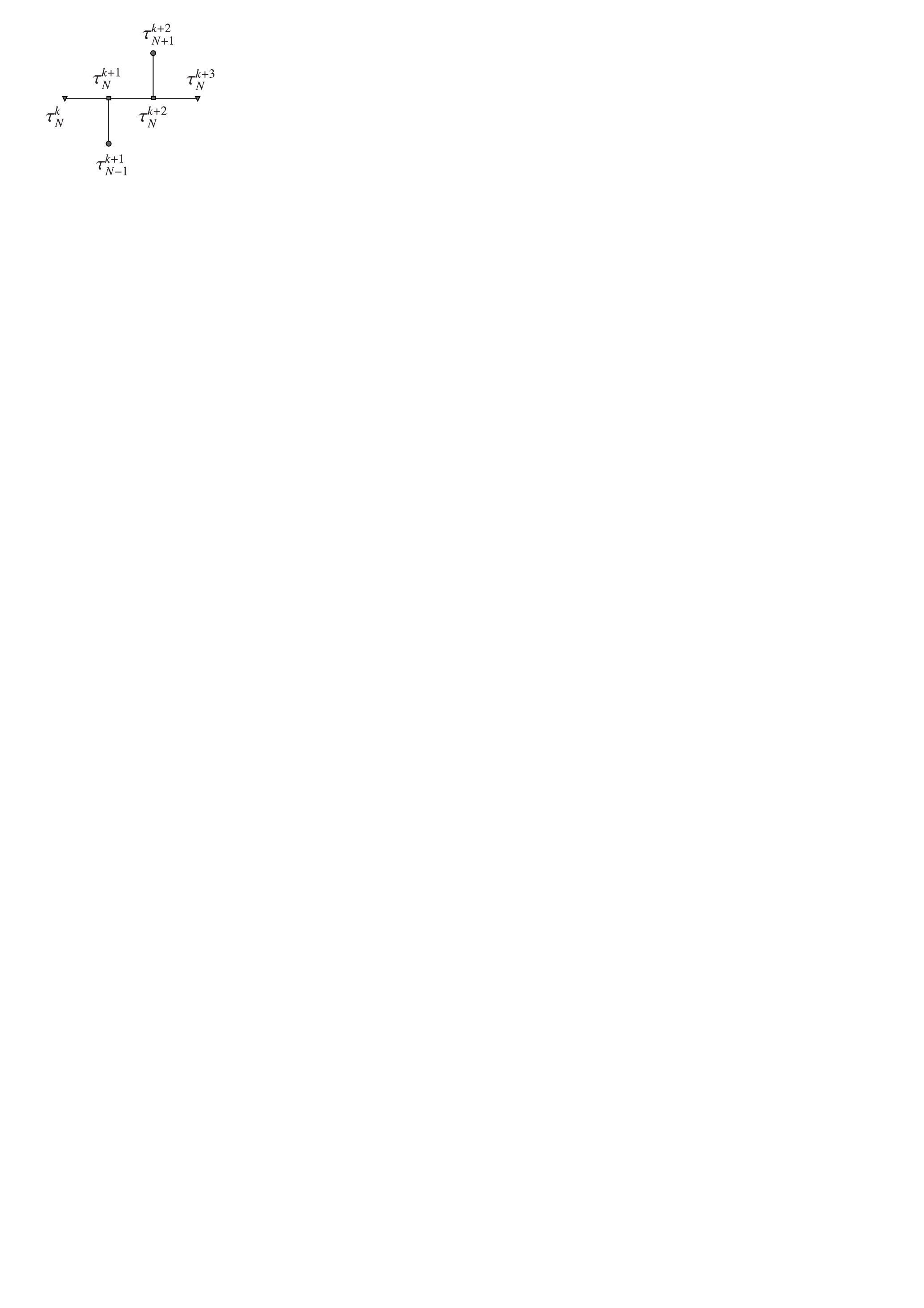}\hspace{1em}
\includegraphics[width=0.2\textwidth]{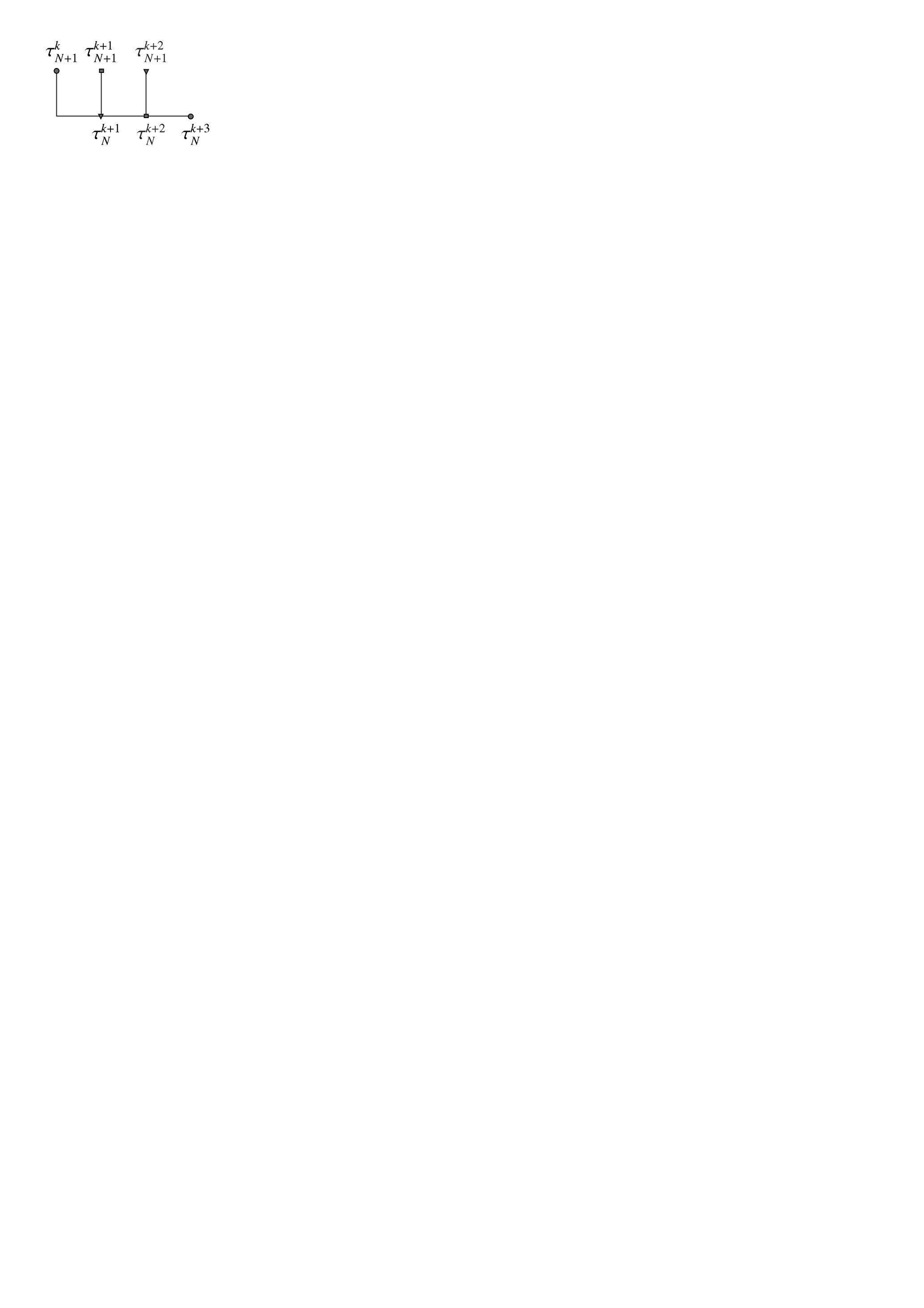}\hspace{1em}
\includegraphics[width=0.2\textwidth]{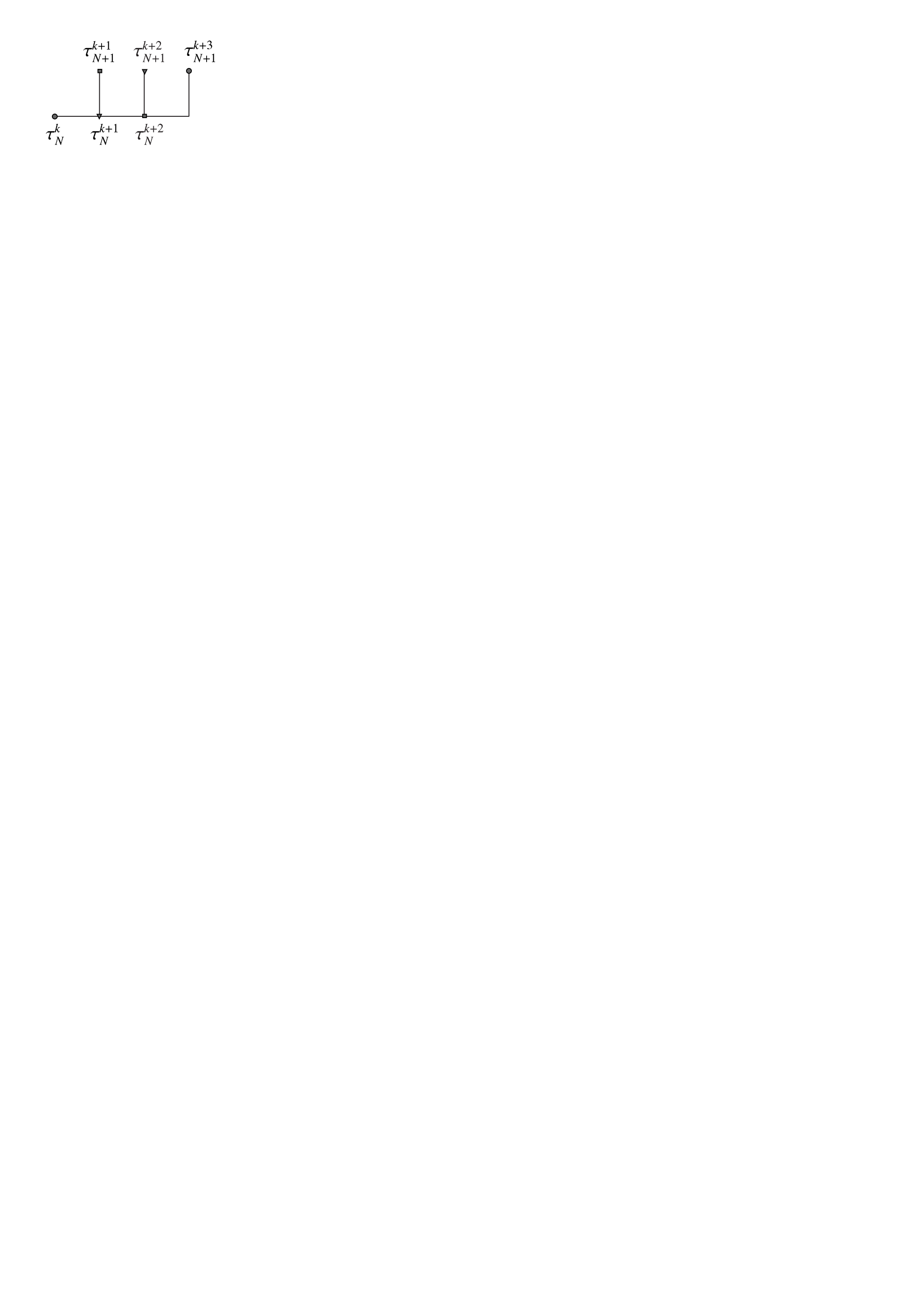}
\caption{Configuration of $\tau$ functions for the bilinear equations in Proposition
\ref{prop:tau_bl_qp2}. The figures correspond to (\ref{Prop_B2_1}), (\ref{Prop_B2_2}),
(\ref{Prop_B2_3}), and (\ref{Prop_B2_4}), respectively, from the left to the right.}
\end{center}
\end{figure}
\begin{remark}\rm
The bilinear equations in Proposition \ref{prop:qp2_bl} correspond to
(\ref{Prop_B2_3}), (\ref{Prop_B2_4}), and (\ref{Prop_B2_1}).
\end{remark}

K. Kajiwara: Faculty of Mathematics, Kyushu University, 
744 Motooka, Fukuoka 819-0395, Japan\\
E-mail address: kaji@math.kyushu-u.ac.jp\\[1em]
N. Nakazono: Graduate School of Mathematics, Kyushu University, 
744 Motooka, Fukuoka 819-0395, Japan\\
E-mail address: n-nakazono@math.kyushu-u.ac.jp\\[1em]
T. Tsuda: Faculty of Mathematics, Kyushu University, 
744 Motooka, Fukuoka 819-0395, Japan\\
E-mail address: tudateru@math.kyushu-u.ac.jp
\end{document}